%
%
%
%
%
\RequirePackage{fix-cm}
\documentclass{svjour3}                     

\makeatletter
\let\cl@chapter\undefined
\def\@thmcountersep{.}
\makeatother
\smartqed  
%
\usepackage{amsmath}
\usepackage{amsfonts}
\usepackage{amssymb}
\usepackage{algorithm}
\usepackage[noend]{algpseudocode}
\usepackage{algorithmicx}
\usepackage{booktabs}
\usepackage{hyperref}
\usepackage{cleveref}
\usepackage{color}
\usepackage{diagbox}
\usepackage[inline]{enumitem}
\usepackage{etoolbox}
\usepackage{float}
\usepackage{mathtools}
\usepackage{multirow}
\usepackage[numbers,sort&compress]{natbib}
\usepackage{tabularx}
\usepackage{thmtools}
\usepackage{xcolor}
\usepackage{comment}

\algrenewcommand\algorithmicindent{0.75em}

\makeatletter
\newcommand{\algmargin}{\the\ALG@thistlm}
\makeatother
\newlength{\whilewidth}
\settowidth{\whilewidth}{\algorithmicwhile\ }
\algdef{SE}[parWHILE]{parWhile}{EndparWhile}[1]
{\parbox[t]{\dimexpr\linewidth-\algmargin}{%
		\hangindent\whilewidth\strut\algorithmicwhile\ #1\ \algorithmicdo\strut}}{\algorithmicend\ \algorithmicwhile}%
\algnewcommand{\parState}[1]{\State%
	\parbox[t]{\dimexpr\linewidth-\algmargin}{\strut #1\strut}}


\crefname{clm}{claim}{claims}
\crefname{figure}{ Fig.}{Fig.}

\newcommand{\bundlep}[3]{({#3},{#1} \backslash {#2})} 
\newcommand{\con}[2]{\nobreak \text{{\em consumers}}^{#2}(#1)\allowbreak}
\newcommand{\ex}[1]{\nobreak B^{#1}\allowbreak}
\newcommand{\gcycle}{Imp}
\newcommand{\impord}[1]{\vartriangleright_{#1}} 
\newcommand{\itemwt}[1]{o_{#1}} 
\newcommand{\hitemwt}[1]{\hat{o}_{#1}}
\newcommand{\pitemwt}[1]{o'_{#1}}
\newcommand{\Leximin}{\text{Leximin-}\allowbreak\text{optimality}}
\newcommand{\leximin}{\text{leximin-}\allowbreak\text{optimality}}
\newcommand{\leximina}{\text{leximin-}\allowbreak\text{optimal}}
\newcommand{\ma}{\mathcal{A}}
\newcommand{\md}{\mathcal{D}}
\newcommand{\mr}{\mathcal{R}}
\newcommand{\mP}{\mathcal{P}}
\newcommand{\mtap}{\text{MTRA}} 
\newcommand{\mtaps}{\text{MTRAs}} 
\newcommand{\mps}{\text{MPS}}
\newcommand{\ngc}{\text{no-}\allowbreak\text{generalized-}\allowbreak\text{cycle}}

\newcommand{\prefintype}[2]{\succ_{#1}^{#2}} 
\newcommand{\prog}[1]{\nobreak\text{$\rho$}^{#1}\allowbreak}
\newcommand{\ram}{fractional} 
\newcommand{\dec}{decompoability}
\newcommand{\deca}{decomposable}
\newcommand{\s}{\text{{\em supply}}}
\newcommand{\sd}[1]{~\nobreak \succ^{sd}_{#1}\allowbreak~}
\newcommand{\nsd}[1]{~\nobreak \not\succ^{sd}_{#1}\allowbreak~}
\newcommand{\sdef}{\text{sd-}\allowbreak\text{envy-}\allowbreak\text{freeness}}
\newcommand{\sdefa}{\text{sd-}\allowbreak\text{envy-}\allowbreak\text{free}}
\newcommand{\epopt}{\text{ex-post-}\allowbreak\text{efficiency}}
\newcommand{\sdopt}{\text{sd-}\allowbreak\text{efficiency}}
\newcommand{\sdopta}{\text{sd-}\allowbreak\text{efficient}}
\newcommand{\sdsp}{\text{sd-}\allowbreak\text{weak-}\allowbreak\text{strategy}\allowbreak\text{proofness}}
\newcommand{\sdspa}{\text{sd-}\allowbreak\text{weak-}\allowbreak\text{strategy}\allowbreak\text{proof}}
\newcommand{\sdssp}{\text{sd-}\allowbreak\text{strategy}\allowbreak\text{proofness}}

\newcommand{\supply}{\text{supply}}
\newcommand{\topb}[2]{\nobreak top^{#2}(#1)\allowbreak}
\newcommand{\type}[2]{D_{#1}({#2})} 
\newcommand{\ucs}{U}			

\newcommand{\vx}{{\bf{x}}}

\newcommand{\sdwef}{\text{sd-}\allowbreak\text{weak-}\allowbreak\text{envy-}\allowbreak\text{freeness}}
\newcommand{\sdwopt}{\text{sd-}\allowbreak\text{weak-}\allowbreak\text{efficiency}}

\newcommand{\lps}{LexiPS}
\newcommand{\vu}{{\bf{u}}}
\newcommand{\vv}{{\bf{v}}}



\newcommand{\tabincell}[2]{\begin{tabular}{@{}#1@{}}#2\end{tabular}}
\newcommand{\srd}[2]{{#1}^{#2}} 

\newcommand{\ld}[1]{\succ_{#1}^{lexi}}
\newcommand{\Iof}{\text{Item-wise }\allowbreak\text{ordinal }\allowbreak\text{fairness}}
\newcommand{\iof}{\text{item-wise }\allowbreak\text{ordinal }\allowbreak\text{fairness}}
\newcommand{\iofa}{\text{item-wise }\allowbreak\text{ordinal }\allowbreak\text{fair}}
\newcommand{\lexopt}{\text{lexi-}\allowbreak\text{efficiency}}
\newcommand{\lexopta}{\text{lexi-}\allowbreak\text{efficient}}


\spnewtheorem{clm}{Claim}{\bfseries}{\normalfont}
\spnewtheorem{clmprf}{Proof}{\bfseries}{\normalfont}
\usepackage{graphicx}
\usepackage{mathptmx}      
\usepackage[misc]{ifsym}
%
%
%
\journalname{}
\begin{document}

\title{Probabilistic Serial Mechanism for Multi-Type Resource Allocation
}


\author{Xiaoxi Guo \textsuperscript{1}
        \and
       Sujoy Sikdar \textsuperscript{2}
        \and
        Haibin Wang \textsuperscript{1}
        \and
        Lirong Xia \textsuperscript{3}
        \and
        Yongzhi Cao \textsuperscript{1,*}
        \and
        Hanpin Wang \textsuperscript{1}
}

\authorrunning{Xiaoxi Guo 
        \and
       Sujoy Sikdar 
        \and
        Haibin Wang 
        \and
        Lirong Xia 
        \and
        Yongzhi Cao 
        \and
        Hanpin Wang 
        } 
            %

\institute{
Xiaoxi Guo \at
\email{guoxiaoxi@pku.edu.cn}  \\
\and
Sujoy Sikdar\at
\email{sujoy@wustl.edu}\\
\and
Haibin Wang\at
\email{beach@pku.edu.cn}\\
\and
Lirong Xia\at
\email{xialirong@gmail.com}\\
\and
\Letter Yongzhi Cao\at
\email{caoyz@pku.edu.cn}\\
\and
Hanpin Wang\at
\email{whpxhy@pku.edu.cn}\\
\at
 {1} Key Laboratory of High Confidence Software Technologies (MOE), Department of Computer Science and Technology, Peking University, China.
 \at
 {2} Computer Science \& Engineering, Washington University in St. Louis.
  \at
 {3} Department of Computer Science, Rensselaer Polytechnic Institute.
}

\date{Received: date / Accepted: date}

\maketitle

\begin{abstract}

In  {\em multi-type resource allocation (\mtap{}) problems}, there are $p\ge 2$ {\em types} of items, and $n$ agents, who each demand one unit of items of each type, and have {\em strict linear preferences} over {\em bundles} consisting of one item of each type.

For \mtaps{} with indivisible items, our first result is an impossibility theorem that is in direct contrast to the single type ($p=1$) setting: No mechanism, the output of which is always {\em \deca{}} into a probability distribution over discrete assignments (where no item is split between agents), can satisfy both \sdopt{} and \sdef{}.
To circumvent this impossibility result, we consider the natural assumption of lexicographic preference, and provide an extension of the {\em probabilistic serial (PS)}, called {\em lexicographic probabilistic serial (\lps{})}. We prove that \lps{} satisfies \sdopt{} and \sdef{}, retaining the desirable properties of PS. Moreover, \lps{} satisfies \sdsp{} when agents are not allowed to misreport their importance orders.

For \mtaps{} with divisible items, we show that the existing {\em multi-type probabilistic serial (\mps{})} mechanism satisfies the stronger efficiency notion of \lexopt{}, and is \sdefa{} under strict linear preferences, and \sdspa{} under lexicographic preferences. We also prove that \mps{} can be characterized both by \leximin{} and by \iof{}, and the family of eating algorithms which \mps{} belongs to can be characterized by \ngc{} condition.

\keywords{Multi-type resource allocation \and Probabilistic serial \and \lps \and \mps{} \and Fractional assignment \and sd-efficiency \and sd-envy-freeness}
\end{abstract}

\section{Introduction}
In this paper, we focus on extensions of the celebrated {\em probabilistic serial} (PS) mechanism~\cite{Bogomolnaia01:New} for the classical resource allocation problem~\cite{Abdulkadiroglu99:House,Bogomolnaia01:New,Chevaleyre06:Issues,Moulin95:Cooperative}, to the {\em multi-type resource allocation problem} (\mtap{})~\cite{Mackin2016:Allocating}. An~\mtap{} involves $n$ agents, $p\ge 2$ {\em types} of items which are not interchangeable, one unit each of $n$ items of each type. Each agent demands a {\em bundle} consisting of one item of each type, and has {\em strict preferences} over all bundles. \mtaps{} may involve {\em divisible} items, like land and water resources~\cite{Segal17:Fair}, or computational resources such as CPU, memory, and storage in cloud computing~\cite{Ghodsi11:Dominant,Ghodsi12:Multi,Grandl15:Multi}. Items may also be {\em indivisible}, where each item must be assigned fully to a single agent, like houses and cars~\cite{Sikdar18:Top,Sikdar2017:Mechanism}, or research papers and time slots in a seminar class~\cite{Mackin2016:Allocating}.

Efficient and fair resource allocation for a single type of items ($p=1$) is well studied~\cite{Bogomolnaia01:New,Hylland79:Efficient,zhou1990conjecture,Abdulkadiroglu98:Random,Shapley74:Cores,Abdulkadiroglu99:House,Chevaleyre06:Issues,Moulin95:Cooperative,moulin2018fair}.
Our work follows the line of research initiated by~\citet{Bogomolnaia01:New}, who proposed the probabilistic serial (PS) mechanism. The PS mechanism outputs a fractional assignment in multiple rounds, by having all agents simultaneously “eat” shares of their favorite remaining items at a uniform and equal rate, until one of the items is exhausted, in each round.

PS is a popular prototype for mechanism designers due to the following reasons:
\begin{enumerate*}[label=(\roman*)]
    \item {\em decomposability}: PS can be applied to allocate both divisible and indivisible items, since fractional assignments are always {\em \deca{}} when $p=1$, as guaranteed by the Birkhoff-von Neumann theorem~\cite{Birkhoff1946:Three,Neumann1953:A-certain}. In other words, a fractional assignment can be represented as a probability distribution over ``discrete'' assignments, where no item is split among agents.
    \item {\em efficiency and fairness}: PS satisfies \sdopt{} and \sdef{}, which are desirable efficiency and fairness properties respectively based on the notion of {\em stochastic dominance}: Given a strict preference over the items, an allocation $p$ {\em stochastically dominates} $q$, if at every item $o$, the total share of $o$ and items strictly preferred to $o$ in $p$, is at least the total share of the same items in $q$.
\end{enumerate*}

Unfortunately, designing efficient and fair mechanisms for \mtap{} with $p\ge 2$ types is more challenging, especially because direct applications of PS to \mtap{} fail to satisfy the two desirable properties discussed above.

First, \dec{} (property (\romannumeral1)) above relies on the \dec{} of fractional assignments, which not always holds for \mtap{} as in the following simple example.
\begin{example}
\label{eg:undecomposable}
Consider the \mtap{} with two agents, $1$ and $2$, two types of items, food ($F$) and beverages ($B$), and two items of each type $\{1_F,2_F\}$, and $\{1_B, 2_B\}$ respectively.
We demonstrate how the fractional assignment $P$ below, where agent $1$ gets $0.5$ share of $1_F1_B$ and $0.5$ share of $2_F2_B$ is not \deca{}.

\vspace{1em}\noindent\begin{minipage}{\linewidth}
	\begin{minipage}{0.4\linewidth}
     \begin{center}
        \centering
        \begin{tabular}{|c|cccc|}
            \hline\multirow{2}{*}{Agent} & \multicolumn{4}{c|}{$P$}\\\cline{2-5}
             & $1_F1_B$ & $1_F2_B$ & $2_F1_B$ & $2_F2_B$ \\\hline
            1 & 0.5 & 0 & 0 & 0.5 \\
            2 & 0 & 0.5 & 0.5 & 0 \\\hline
        \end{tabular}
    \end{center}
	\end{minipage}
	\hspace{0.1\linewidth}
	\begin{minipage}{0.4\linewidth}
	\begin{center}
        \centering
        \begin{tabular}{|c|cccc|}
            \hline\multirow{2}{*}{Agent} & \multicolumn{4}{c|}{$P'$}\\\cline{2-5}
             & $1_F1_B$ & $1_F2_B$ & $2_F1_B$ & $2_F2_B$ \\\hline
            1 & 1 & 0 & 0 & 0 \\
            2 & 0 & 0 & 0 & 1 \\\hline
        \end{tabular}
    \end{center}
	\end{minipage}
	\end{minipage}\vspace{1em}

Obviously, the assignment $P'$ as above is the only assignment where $1_F1_B$ is allocated fully to agent $1$.
Since agent $1$ acquires $1_F1_B$ with $0.5$ shares in $P$, the probability for $P'$ ought to be 0.5, and therefore $2_F2_B$ should be allocated to agent $2$ with $0.5$ shares in $P$ accordingly.
However, agent $2$ is not allocated $2_F2_B$ in $P$ actually.
Thus $P$ is not \deca{}.
\end{example}

A natural idea is to decompose MTRA into $p$ single-type instances, one for each type of items, and then apply PS or other mechanisms separately to each of them. Unfortunately, this does not work because first it is unclear how to decompose agents' combinatorial preferences over bundles into separable preferences over items of the same type, and more importantly, even when there is a natural way to do so, e.g.~when agents' preferences are {\em separable}, and {\em lexicographic}, meaning that the agent has an importance order over types to compare bundles. The following example shows that the fairness and efficiency properties (\romannumeral2) above do not hold anymore.
\begin{example}\label{eg:ps}
We continue to use the \mtap{} above and assume that agents' preferences over $\{1_F,2_F\}\times\{1_B,2_B\}$ are as below.
    \begin{center}
        \begin{tabular}{|c|c|} \hline
            Agent & Preferences \\\hline
            1 & $1_F1_B\succ_11_F2_B\succ_12_F1_B\succ_12_F2_B$ \\
            2 & $1_F1_B\succ_22_F1_B\succ_21_F2_B\succ_22_F2_B$ \\ \hline
        \end{tabular}
    \end{center}
We note that both agents prefer $1_F$ to $2_F$, and $1_B$ to $2_B$ (separable preferences). Agent $1$ considers $F$ to be more important than $B$, while agent $2$ considers $B$ to be more important.
In this way we can decompose this \mtap{} into two single type resource allocation problems for $F$ and $B$ respectively.
It is easy to see that for each single type the only \sdopta{} and \sdefa{} assignment is to give both agents $0.5$ shares of each item, yielding the \deca{} fractional assignment $Q$ by the mutually independence of each type.
However, $Q$ is inefficient, as $Q'$ stochastically dominates $Q$ from both agents' perspectives:

\vspace{1em}\noindent\begin{minipage}{\linewidth}
	\begin{minipage}{0.4\linewidth}
    \begin{center}
        \centering
        \begin{tabular}{|c|cccc|}
            \hline\multirow{2}{*}{Agent} & \multicolumn{4}{c|}{$Q$}\\\cline{2-5}
             & $1_F1_B$ & $1_F2_B$ & $2_F1_B$ & $2_F2_B$ \\\hline
            1 & 0.25 & 0.25 & 0.25 & 0.25 \\
            2 & 0.25 & 0.25 & 0.25 & 0.25 \\\hline
        \end{tabular}
    \end{center}
	\end{minipage}
	\hspace{0.1\linewidth}
	\begin{minipage}{0.4\linewidth}
    \begin{center}
        \centering
        \begin{tabular}{|c|cccc|}
            \hline\multirow{2}{*}{Agent} & \multicolumn{4}{c|}{$Q'$}\\\cline{2-5}
             & $1_F1_B$ & $1_F2_B$ & $2_F1_B$ & $2_F2_B$ \\\hline
            1 & 0.25 & 0.5 & 0 & 0.25 \\
            2 & 0.25 & 0 & 0.5 & 0.25 \\\hline
        \end{tabular}
    \end{center}
	\end{minipage}
	\end{minipage}\vspace{1em}
	
\end{example}

As we have observed, the two desirable properties of PS for single type resource allocation no longer obviously hold for \mtaps{}.
Recently,~\citet{Wang19:Multi} proposed {\em  multi-type probabilistic serial (\mps{})} mechanism as an extension of PS for \mtaps{} with {\em divisible} items, and proved that \mps{} is \sdopta{} for general partial preference, \sdefa{} for CP-net preferences~\cite{Boutilier04:CP}, and \sdspa{} for CP-net preferences with the shared dependency graph. However, it is unclear whether \mps{} can be applied to the allocation of indivisible items because the outcome may not be \deca{}. This leaves the following natural open question:

{\em How to design efficient and fair mechanisms for \mtaps{} with indivisible and divisible items?}
\footnote{Note that for indivisible items, the (fractional) output of a mechanism must be decomposable.}

\vspace{0.5em}
\paragraph{\bf Our Contributions.} 
For \mtaps{} with indivisible items, unfortunately, our impossibility theorem ({\bf\Cref{thm:imp}}) shows that no mechanism which satisfies \sdopt{} and \sdef{} is guaranteed to always output \deca{} assignments, if agents' preferences are allowed to be any linear orders over bundles.
Fortunately, when agents' preferences are {\em lexicographic}, the impossibility theorem can be circumvented. To this end, we propose {\em lexicographic probabilistic serial mechanism (\lps)} and prove that it satisfies many desirable properties of PS: it is guaranteed to output a \deca{} assignment, satisfy \sdopt{} and \sdef{} ({\bf\Cref{thm:lps}}), and satisfies \sdsp{}, when agents do not lie about their importance orders over types ({\bf\Cref{thm:lpssp}}).



For \mtaps{} with divisible items, we
show that when agents' preferences are linear orders over all bundles of items, the \mps{} mechanism proposed by~\citet{Wang19:Multi} satisfies \lexopt{} ({\bf\Cref{thm:mpslexopt}}) which is a stronger notion of efficiency than \sdopt{}. Indeed, we show that \lexopt{} implies the {\em \ngc{}} condition, which is a sufficient condition for \sdopt{} (similarly to~\citet{Wang19:Multi}), but not a necessary one ({\bf\Cref{prop:gc}}).
We also prove that {\em every} assignment satisfying \ngc{} can be computed by some algorithm in the family of {\em eating} algorithms ({\bf\Cref{thm:eps}}), of which \mps{} is a member.
Importantly, \mps{} retains \sdef{} ({\bf\Cref{prop:mps}}), and when agents' preferences are further assumed to be lexicographic, \mps{} satisfies \sdsp{} ({\bf\Cref{thm:mpssp}}).
Finally, we characterize \mps{} by \leximin{} and \iof{}, respectively ({\bf\Cref{thm:char}}).
However, the output of \mps{} is not always \deca{} (\Cref{rm:mpsnreal}) even under lexicographic preference, making it unsuitable for \mtaps{} with indivisible items.
%



\vspace{0.5em}
\paragraph{\bf Related Work and Discussions.}
Our paper provides the first results on designing efficient and fair mechanisms for \mtaps{} {\em with indivisible items}, to the best of our knowledge. Despite our impossibility theorem ({\bf\Cref{thm:imp}}), our \lps{} mechanism and its properties deliver the following positive message: {\em it is possible to design efficient and fair mechanism for indivisible items under the natural domain restriction of lexicographic preferences}.

Our results on properties of \mps{} are complementary, and not directly comparable, to the results by~\citet{Wang19:Multi} because we assume  that agents' preferences are linear orders over bundles of items, whereas~\citet{Wang19:Multi} assumed partial orders. In particular, we prove that \mps{} satisfies \lexopt{} which is a stronger notion that \sdopt{} for the unrestricted domain of linear orders, and we prove that \mps{} satisfies \sdsp{} when agents' preferences are lexicographic (w.r.t.~possibly different importance orders). In contrast,~\citet{Wang19:Multi} prove that \mps{} satisfies \sdopt{} for the unrestricted domain of partial orders, and satisfies \sdsp{} when agents' preferences are CP-nets with a common dependency structure.


\mtaps{} was first introduced and discussed by~\citet{Moulin95:Cooperative}, and was recently explicitly formulated in the form presented in this paper by~\citet{Mackin2016:Allocating}, who
provided a characterization of serial dictatorships satisfying strategyproofness, neutrality, and non-bossiness for \mtaps{}. In a similar vein, \citet{Sikdar18:Top,Sikdar2017:Mechanism}~considered multi-type housing markets~\cite{Moulin95:Cooperative}. A related problem setting is one where agents may demand multiple units of items.~\citet{Hatfield09:Strategy-proof} considered that agents have multi-unit demands, but their setting has a different combinatorial structure to ours'. \citet{Fujita2015:A-Complexity}~considered the exchange economy with multi-unit consumption. However, in each of these works, items are never shared, and agents must be assigned a whole bundle. The work by~\citet{Ghodsi11:Dominant} is an exception, as they consider the problem of allocating multiple types of divisible resources by shares, but is not comparable to ours because resources of the same type are indistinguishable in their setting. \mtaps{} with divisible items may also be viewed as a version of the cake-cutting problem~\cite{Steinhaus48:Problem,brams1996fair,robertson1998cake,even1984note,edmonds2006cake,procaccia2015cake}, with $p$ cakes, and agents having ordinal preferences over combinations of pieces from each cake.

The lexicographic preference is a natural restriction on preference domain in resource allocation~\cite{Sikdar18:Top,Sikdar2017:Mechanism,Fujita2015:A-Complexity}, and combinatorial voting~\citep{Lang12:Aggregating,Xia10:Strategy,Booth10:Learning}.~\citet{Saban14:Note} showed that PS is efficient, envy-free, and strategy-proof under lexicographic preference on allocations.~\citet{Fujita2015:A-Complexity} considered the allocation problem which allows agents to receive multiple items and agents rank the groups of items lexicographically. Our work follows in this research agenda of natural domain restrictions on agents' preferences to circumvent impossibility results in guaranteeing efficiency and fairness.

\mtaps{} belongs to a more general line of research on mechanism design known as multi-agent resource allocation (see \citet{Chevaleyre06:Issues} for a survey), where literature focuses on the problem where items are of a single type.
Early research focused mainly on developing ``discrete'' mechanisms for indivisible items, where each item is assigned fully to a single agent~\cite{Gale62:College,Kojima09:Axioms,Shapley74:Cores,Roth77:Weak,Abdulkadiroglu99:House}. However, discrete mechanisms often fail to simultaneously satisfy efficiency and fairness.

Fractional mechanisms simultaneously provide stronger efficiency and fairness guarantees. For example, the random priority (RP) mechanism~\cite{Abdulkadiroglu98:Random} outputs fractional assignments, and satisfies \epopt{}, \sdwef{}, and \sdssp{}.
Such fractional mechanisms can be applied to both divisible and indivisible items in the single type setting ($p=1$), due to the Birkhoff-Von Neumann theorem which implies that every fractional assignment is \deca{}.

\citet{Bogomolnaia01:New}~proposed the PS mechanism, a fractional mechanism satisfying \sdopt{}, \sdef{}, and \sdsp{}. PS uniquely possesses the important properties of \sdopt{} and \sdef{} with the restriction of {\em bounded invariance}~\cite{Bogomolnaia12:Probabilistic,Bogomolnaia15:Random}, and is the only mechanism satisfying ordinal fairness and non-wastefulness~\cite{Hashimoto14:Two}. Besides, \citet{Bogomolnaia15:Random}~characterized PS with leximin maximizing the vector describing cumulative shares at each item~\cite{Aziz14:Generalization,bogomolnaia2005collective}, which reflects that PS is egalitarian in attempting to equalize agents’ shares of their top ranked choices. The remarkable properties of PS has encouraged several extensions: to the full preference domain, allowing indifferences~\cite{Katta06:Solution,Heo15:Characterization}, to multi-unit demands~\cite{Heo14:Probabilistic}, and to housing markets~\cite{Athanassoglou11:House,Yilmaz2009:Random}. In the settings above, PS usually only retains some of its original properties and loses stategyproofness~\cite{Katta06:Solution,Heo14:Probabilistic,Yilmaz10:Probabilistic,Athanassoglou11:House}.

\vspace{0.5em}
\paragraph{\bf Structure of the paper.} The rest of the paper is organized as follows.
In Section~\ref{Preliminaries}, we define the \mtap{} problem, and provide definitions of desirable efficiency and fairness properties.
Section~\ref{Impossibility Result} is the impossibility result for \mtaps{} with indivisible items.
In Section~\ref{LPS}, we propose \lps{} for \mtaps{} with indivisible items under lexicographic preferences, which satisfies \sdopt{} and \sdef{}, and it is \sdspa{} when agents do not lie about importance orders.
In Section~\ref{MPS}, we show the properties of \mps{} for \mtaps{} with divisible items under linear preferences, and provide two characterizations for \mps{}.
In Section~\ref{Conclusion}, we summarize the contributions of our paper, and discuss directions for future work.

\section{Preliminaries}
\label{Preliminaries}

Let $N=\{1,\dots,n\}$ be the set of agents, and $M=D_1\cup\dots\cup D_p$ be the set of items.
For each $i\le p$, $D_i$ is a set of $n$ items of  type $i$, and for all $\hat{i}\neq i$, we have $D_i\cap D_{\hat{i}}=\emptyset$.
There is one unit of {\em \supply{}} of each item in $M$.
We use $\md=D_1\times\dots\times D_p$ to denote the set of {\em bundles}.
Each bundle $\vx\in\md$ is a $p$-vector and each component refers to an item of each type.
We use $o \in \vx$ to indicate that bundle $\vx$ contains item $o$.
In an \mtap{}, each agent demands, and is allocated one unit of item of each type.

A {\em preference profile} is denoted by $R=(\succ_j)_{j\le n}$, where $\succ_j$ represents agent $j$'s preference as a {\em strict linear preference}, i.e. the strict linear order over $\md$. Let $\mr$ be the set of all possible preference profiles.

A {\em \ram{} allocation} is a $|\md|$-vector, describing the fractional share of each bundle allocated to an agent.
Let $\Pi$ be the set of all the possible \ram{} allocations. For any $p\in\Pi$, $\vx\in\md$, we use $p_{\vx}$ to denote the share of $\vx$ assigned by $p$.
A {\em \ram{} assignment} is a $n\times|\md|$-matrix $P=[p_{j,\vx}]_{j\le n, \vx\in\md}$, where
\begin{enumerate*}[label=(\roman*)]
    \item for each $j\le n, \vx\in\md$, $p_{j,\vx}\in[0,1]$ is the fractional share of $\vx$ allocated to agent $j$,
	\item for every $j\le n$, $\sum_{\vx\in\md}p_{j,\vx}=1$, fulfilling the demand of each agent,
	\item for every $o\in M$, $S_o=\{\vx:\vx\in\md \text{ and }o\in\vx\}$, $\sum_{j\le n,\vx \in S_o}p_{j,\vx}=1$, respecting the unit supply of each item.
\end{enumerate*}
For each $j\le n$, row $j$ of $P$, denoted by $P_j$ represents agent $j$'s \ram{} allocation under $P$.
We use $\mP$ to denote the set of all possible \ram{} assignments.
A {\em discrete assignment} $A$, is an assignment where each agent is assigned a unit share of a bundle, and each item is fully allocated to some agent\footnote{For for indivisible items, discrete assignments refer to deterministic assignments in the papers about randomization.}. It follows that a discrete assignment is represented by a matrix where each element is either $0$ or $1$. We use $\ma$ to denote the set of all discrete assignment matrices.

A {\em mechanism} $f$ is a mapping from preference profiles to \ram{} assignments.
For any profile $R\in\mr$, we use $f(R)$ to refer to the \ram{} assignment output by $f$, and for any agent $j\le n$ and any bundle $\vx\in\md$, $f(R)_{j,\vx}$ refers to the value of the element of the matrix indexed by $j$ and $\vx$.

\subsection{\textbf{Desirable Properties}}
\label{Properties}


\noindent We use the notion of stochastic dominance to compare fractional assignments, and define desired notions of efficiency and fairness in the \mtap{} setting from~\citet{Wang19:Multi}.


\begin{definition}{\bf(stochastic dominance~\cite{Wang19:Multi})}\label{dfn:sd}
	Given a preference $\succ$ over $\md$, the {\em stochastic dominance} relation associated with $\succ$, denoted $\sd{\null}$, is a partial ordering over $\Pi$ such that for any pair of \ram{} allocations $p,q\in\Pi$, $p$ {\em weakly stochastically dominates} $q$, denoted $p\sd{\null} q$, if and only if for every $\vx\in\md$, $\sum_{\hat\vx\in\ucs(\succ,\vx)}p_{\hat\vx}\ge\sum_{\hat\vx\in\ucs(\succ,\vx)}q_{\hat\vx}$, where  $\ucs(\succ,\vx)=\{\hat\vx:\hat\vx \succ \vx\}\cup\{\vx \}$.
\end{definition}


The stochastic dominance order can also be extended to fractional assignments.
For $P,Q\in\mP,j\le n$, we assume that agent $j$ only cares about her own allocations $P_j,Q_j$.
If $P_j\sd{j} Q_j$, it will weakly prefer $P$ to $Q$, i.e. $P\sd{j} Q$.
Therefore we say that $P$ weakly stochastically dominates $Q$, denoted $P\sd{\null}Q$, if $P\sd{j}Q$ for any $j\leq n$.
It is easy to prove that $P\sd{j} Q,Q\sd{j} P$ if and only if $P_j=Q_j$.

\begin{definition}{\bf(\sdopt{}~\cite{Wang19:Multi})}\label{dfn:sdopt}
	Given an \mtap{} $(N,M,R)$, a \ram{} assignment $P$ is \sdopta{} if there is no other \ram{} assignment $Q\neq P$ such that $Q\sd{j}P$ for every $j\le n$.
    Correspondingly, if for every $R\in\mr$, $f(R)$ is \sdopta{}, then mechanism $f$ satisfies \sdopt.
\end{definition}

\begin{definition}{\bf(\sdef{}~\cite{Wang19:Multi})}\label{dfn:sdef}
	Given an \mtap{} $(N,M,R)$, a \ram{} assignment $P$ is \sdefa{} if for every pair of agents $j,\hat j\le n$, $P_j\sd{j}P_{\hat j}$.
    Correspondingly, if for every $R\in\mr$, $f(R)$ is \sdefa{}, then mechanism $f$ satisfies \sdef.
\end{definition}


\begin{definition}{\bf(\sdsp{}~\cite{Wang19:Multi})}\label{dfn:sdsp}
	Given an \mtap{} $(N,M,R)$, a mechanism $f$ satisfies \sdsp{} if for every profile $R\in\mr$, every agent $j\le n$, every $R'\in\mr$ such that $R'=(\succ'_j,\succ_{-j})$, where $\succ_{-j}$ is the preferences of agents in the set~$N\setminus\{j\}$, it holds that:
    \begin{equation*}
      f(R')\sd{j}f(R)\implies f(R')_j=f(R)_j.
    \end{equation*}
\end{definition}

Besides stochastic dominance, we introduce the {\em lexicographic dominance} relation to compare pairs of fractional allocations, by comparing the components of their respective vector representations one by one according to the agent's preference.

\begin{definition}
{\bf(lexicographic dominance)} Given a preference $\succ$, and a pair of allocations $p$ and $q$, $p$ lexicographically dominates $q$, denoted $p\ld{}q$, if and only if there exist a bundle $\vx$ such that $p_{\vx}>q_{\vx}$, and for each $\hat\vx\succ\vx$, $p_{\hat\vx}\ge q_{\hat\vx}$.
\end{definition}

Just like stochastic dominance, given assignments $P$ and $Q$, we say $Q \ld{} P$ if $Q_j \ld{j} P_j$ ($Q \ld{j} P$ for short) for every agent $j$. Note that stochastic dominance implies lexicographic dominance, but lexicographic dominance does not imply stochastic dominance.

\begin{definition}
    {\bf(\lexopt{})} Given a preference profile $R$, the fractional assignment $P$ is \lexopta{} if there is no $Q \in \mP$ s.t. $Q \ld{} P$.
    A fractional assignment algorithm $f$ satisfies \lexopt{} if $f(R)$ is \lexopta{} for every $R \in \mr$.
\end{definition}


\Leximin{} requires that the output of a mechanism leximin maximizes the vector describing cumulative shares at each item~\cite{Aziz14:Generalization,bogomolnaia2005collective}, which reflects the egalitarian nature of the mechenism in attempting to equalize agents’ shares of their top ranked choices.
\begin{definition}{\bf(\leximin{})}\label{dfn:leximin}
For any vector $\vu$ of length $k$, let $\vu^*=(u_1^*,u_2^*,\dots,u_k^*)$, be its transformation into the $k$-vector of $\vu$'s components sorted in ascending order. Let $L$ be the {\em leximin relation}, where for any two vectors $\vu, \vv$, we say $(\vu,\vv)\in L$, if  $\vu^*\ld{}\vv^*$.
For any \ram{} assignment $P$, let $\vu^P=(u^P_{j,\vx})_{j\le n, \vx\in\md}$, where for each agent $j\le n$, and bundle $\vx\in\md$, $u^P_{j,\vx}=\sum_{\hat\vx\in\ucs(\succ_j,\vx)}{p_{j,\hat\vx}}$. A \ram{} assignment $P$ is {\em \leximina{}}, if for every other assignment $Q$, $(\vu^P,\vu^Q)\in L$.
\end{definition}

\Iof{} involves the comparison of the cumulative share.
In contrast to \sdef{}, the upper contour sets in \iof{} depend on the different preferences, and the bundles to determine the sets only need to share a certain item.
\begin{definition}
    {\bf (\iof{})} The fractional assignment $P$ which is \iofa{} satisfies that: For any $\vx$ with positive shares for agent $j$ s.t. $p_{j,\vx}>0$, there exists $o\in\vx$ such that for any $\hat\vx$ containing $o$ and an arbitrary agent $k$, if $p_{k,\hat\vx}>0$, then $\sum_{\vx'\in\ucs(\succ_k,\hat\vx)}{p_{k,\vx'}}\le\sum_{\vx'\in\ucs(\succ_j,\vx)}{p_{j,\vx'}}$.
     A fractional assignment algorithm $f$ satisfies \iof{} if $f(R)$ is \iofa{} for every $R \in \mr$.
\end{definition}

\section{Efficiency and Fairness for \mtaps{} with Indivisible Items}
\label{Impossibility Result}
In this section, we show an impossibility result in \Cref{thm:imp} that no mechanism satisfying \sdef{} and \sdopt{} is guaranteed to output \deca{} assignments.
This is unlike the case of resource allocation problems with a single type of items, where by the Birkhoff-von Neumann theorem, every \ram{} assignment is \deca{}, i.e. every \ram{} assignment can be decomposed like
\begin{equation*}
    P=\sum_{A_k\in\ma}{\alpha_k\times A_k}.
\end{equation*}
Here each $A_k$ is a discrete assignment that assigns every item wholly to some agent. Observe that $\sum{\alpha_k}=1$. It follows that such a \deca{} assignment can be applied to allocating indivisible items, by issuing a lottery for $A_k$, and $\alpha_k$ is its probability being selected among all the discrete assignments.
It is not necessarily so in \mtaps{}, which leads to the impossibility result.


\begin{theorem}\label{thm:imp}
	For any \mtaps{} with $p\ge 2$, where agents are allowed to submit any linear order over bundles, no mechanism that satisfies \sdopt{} and \sdef{} always outputs \deca{} assignments.
\end{theorem}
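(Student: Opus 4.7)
The plan is to prove the impossibility by producing a two-agent, two-type instance in which some non-\deca{} fractional assignment is both \sdopta{} and \sdefa{}, while every \deca{} assignment fails at least one of the two properties. I would reuse the item set $\{1_F,2_F\}\cup\{1_B,2_B\}$ from Example~\ref{eg:undecomposable} and equip the agents with strict linear preferences
\[
\succ_1\colon\ 1_F1_B \succ 2_F2_B \succ 1_F2_B \succ 2_F1_B,
\qquad
\succ_2\colon\ 1_F2_B \succ 2_F1_B \succ 1_F1_B \succ 2_F2_B,
\]
chosen so that the non-\deca{} assignment $P$ of Example~\ref{eg:undecomposable} gives each agent exactly the half-half allocation over her own top two bundles.

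First I would verify that $P$ is \sdefa{} by writing out the four upper-contour-set inequalities for each agent: each agent's own mass sits on her top two bundles while the rival's mass sits on her bottom two, so every inequality holds with slack. Then I would argue that $P$ is \sdopta{}: if some $Q$ satisfies $Q\sd{j}P$ for $j=1,2$, the support of each agent's allocation must remain inside her own top two bundles (because $P$ already saturates those upper-contour sets with total mass one), and plugging these support restrictions into the four item-conservation equations pins down the unique solution $Q=P$.

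For the negative direction, I would enumerate the four discrete assignments $A_1,\dots,A_4$ corresponding to the $2!\cdot 2!$ pairs of agent-to-item bijections, and write an arbitrary \deca{} assignment as $Q=\sum_{k=1}^4\alpha_k A_k$ with $\alpha_k\ge 0$ and $\sum_k\alpha_k=1$. Substituting into the \sdef{} inequalities for both agents collapses to $\alpha_1=\alpha_4$ and $\alpha_2=\alpha_3$, so every \deca{} \sdefa{} assignment lies on the one-parameter family $\tfrac{t}{2}(A_1+A_4)+\tfrac{1-t}{2}(A_2+A_3)$ for $t\in[0,1]$. At $t=1$ agent 2's share is concentrated on $\{1_F1_B,2_F2_B\}$, which are her two \emph{worst} bundles and are strictly sd-dominated for her by $P_2$; at $t=0$ agent 1 is strictly sd-dominated by $P_1$ in the symmetric way; and a direct upper-contour-sum comparison shows that at least one of these two strict improvements persists for every intermediate $t$. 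Hence $P$ strictly sd-dominates every \deca{} \sdefa{} assignment, so no \deca{} assignment is simultaneously \sdefa{} and \sdopta{}.

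The main obstacle will be this last step, where the informal intuition that ``the correlation in $P$ is incompatible with any convex combination of discrete assignments'' must be turned into a rigorous exhaustive argument. The parameterization through the four discrete extreme points is what makes this tractable: \sdef{} alone cuts the three-dimensional simplex down to a one-dimensional segment, along which \sdopt{} can then be refuted by a single inequality check against the non-\deca{} anchor~$P$.
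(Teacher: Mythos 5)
Your proposal is correct and follows essentially the same route as the paper's proof: a two-agent, two-type counterexample in which \sdef{} forces any \deca{} assignment (written as a convex combination of the four discrete assignments) onto a one-parameter family, which is then shown to be sd-dominated by the very same non-\deca{} anchor $P$. The only differences are cosmetic --- your agent 1's preference swaps the paper's second- and third-ranked bundles, and you additionally verify that $P$ itself is \sdopta{} and \sdefa{}, a sanity check the paper's argument does not require.
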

\begin{proof}
	Suppose for the sake of contradiction that there exists a mechanism $f$ satisfying \sdopt{} and \sdef{} and $f(R)$ is always \deca{} for any $R\in\mr$.
    Let R be the following preference and $Q=f(R)$.
    \begin{center}
        \centering
        \begin{tabular}{|c|c|}
            \hline Agent & Preferences \\\hline
            1 & $1_F1_B\succ_11_F2_B\succ_12_F2_B\succ_12_F1_B$ \\
            2 & $1_F2_B\succ_22_F1_B\succ_21_F1_B\succ_22_F2_B$ \\\hline
        \end{tabular}
    \end{center}

    We show that if $Q$ is \sdefa{} and \deca{}, it fails to satisfy \sdopt{}.
	There are only four discrete assignments which assign $1_F1_B,1_F2_B,2_F1_B,2_F2_B$ to agent $1$ respectively.
	Since $Q$ is \deca{}, it can be represented as the following assignment.
	We also provide an assignment $P$ which is not \deca{} since it does not satisfy the constraints for $Q$.

    \vspace{1em}\noindent
    \begin{minipage}{\linewidth}
	\begin{minipage}{0.4\linewidth}
    \begin{center}
        \centering
        \begin{tabular}{|c|cccc|}
            \hline\multirow{2}{*}{Agent} & \multicolumn{4}{c|}{$P$}\\\cline{2-5}
             & $1_F1_B$ & $1_F2_B$ & $2_F1_B$ & $2_F2_B$ \\\hline
            1 & 0.5 & 0 & 0 & 0.5 \\
            2 & 0 & 0.5 & 0.5 & 0 \\\hline
        \end{tabular}
    \end{center}
	\end{minipage}
	\hspace{0.1\linewidth}
	\begin{minipage}{0.4\linewidth}
    \begin{center}
        \centering
        \begin{tabular}{|c|cccc|} \hline
            \multirow{2}{*}{Agent} & \multicolumn{4}{c|}{$Q$}\\\cline{2-5}
             & $1_F1_B$ & $1_F2_B$ & $2_F1_B$ & $2_F2_B$ \\\hline
            1 &  $v$    & $w$   & $y$     & $z$ \\
            2 &  $z$    & $y$   & $w$     & $v$ \\ \hline
        \end{tabular}
    \end{center}
	\end{minipage}
	\end{minipage}\vspace{1em}


	Here $v,w,y,z$ are probabilities of these four discrete assignments and there exists $v+w+y+z=1$.
	Due to \sdef, we have following inequalities in terms of agent $1$: It is easy to see that
	$\sum_{{\vx} \in \ucs(\succ_1,2_F1_B)}{q_{1,\vx}}=1=\sum_{{\vx} \in \ucs(\succ_1,2_F1_B)}{q_{2,\vx}}$. In addition,
	\begin{displaymath}
	\begin{split}
	\sum_{{\vx} \in \ucs(\succ_1,1_F1_B)}{q_{1,\vx}}=&v \geq z=\sum_{{\vx} \in \ucs(\succ_1,1_F1_B))}{q_{2,\vx}} \\
	\sum_{{\vx} \in \ucs(\succ_1,1_F2_B)}{q_{1,\vx}}=&v+w\geq z+y=\sum_{{\vx} \in \ucs(\succ_1,1_F2_B)}{q_{2,\vx}}\\
	\sum_{{\vx} \in \ucs(\succ_1,2_F2_B)}{q_{1,\vx}}=&v+w+z\geq z+y+v=\sum_{{\vx} \in \ucs(\succ_1,2_F2_B)}{q_{2,\vx}}
	\end{split}
	\end{displaymath}
	Similarly, for agent $2$, there exists $y \geq w, y+w+z\geq w+y+v$.
	Thus $w=y,v=z,v+w=y+z=0.5$.
	Because $Q$ is \sdopta, $P\nsd{\null}Q$.
	Suppose that $P\nsd{1}Q$.
	Therefore at least one of the following inequalities is true:
	\begin{equation} \label{prfir2}
	\begin{split}
	\sum_{{\vx} \in \ucs(\succ_1,1_F1_B)}{q_{1,\vx}}=&v> 0.5 =  \sum_{{\vx} \in \ucs(\succ_1,1_F1_B)}{p_{1,\vx}} \\
	\sum_{{\vx} \in \ucs(\succ_1,1_F2_B)}{q_{1,\vx}}=&v+w> 0.5 = \sum_{{\vx} \in \ucs(\succ_1,1_F2_B)}{p_{1,\vx}} \\
	\sum_{{\vx} \in \ucs(\succ_1,2_F2_B)}{q_{1,\vx}}=&v+w+z> 1 = \sum_{{\vx} \in \ucs(\succ_1,2_F2_B)}{p_{1,\vx}} \\
	\sum_{{\vx} \in \ucs(\succ_1,2_F1_B)}{q_{1,\vx}}=&v+w+z+y> 1 = \sum_{{\vx} \in \ucs(\succ_1,2_F1_B)}{p_{1,\vx}}
	\end{split}
	\end{equation}
	Since $v=z\leq v+w=y+z=0.5$, all inequalities in (\ref{prfir2}) do not hold.
	Thus we have $P\sd{1}Q$.
	With a similar analysis, we can also obtain that $P\sd{2}Q$.
	Therefore we have that $P\sd{\null}Q$ and $P\neq Q$, which is contradictory to the assumption.\qed
\end{proof}


\begin{remark}\label{rm:tightimp}
    \Cref{thm:imp} can be tightened under LP-tree preferences \cite{Booth10:Learning,Sikdar18:Top} contained in the strict linear preferences, with \sdwopt{} \cite{hashimoto2011characterizations,Bogomolnaia12:Probabilistic} implied by \sdopt{},  and \sdwef{} \cite{Bogomolnaia01:New} implied by \sdef{}.

    An LP-tree preference profile is that each agent's preference can be represented as a rooted directed tree with each node labeled by type and a conditional preference table.
    In each path from root to a leaf, every type occurs only once.
    For a node labeled by type $t$, its conditional preference table is a strict linear order over items in $D_t$, and each edge from it is labeled by each of these items.
    The \sdwopt{} only concerns that any two agents cannot improve their allocations by exchanging shares between them.
    The \sdwef{} requires that for any agent, the other agents' allocations are not better than hers.
    Here we give the brief definition of these two properties, and the proof of the tighter result is in \Cref{sec:rm:tightimp}.

    {\bf \sdwopt{}:} The mechanism $f$ satisfies \sdwopt{} if for any $R\in\mr$, there is no \ram{} assignment $P\neq f(R)$ such that $P\sd{j}f(R)$ for every $j\le n$ and $|\{j\in N: P_j\neq f(R)_j\}|\leq2$.

    {\bf \sdwef{}:} The mechanism $f$ satisfies \sdwef{} if for every pair of agents $j,\hat j\le n, P_{\hat j}\sd{j}P_j\Rightarrow P_{\hat j}=P_j$.

    We give the proof in \Cref{sec:rm:tightimp}.
\end{remark}

\section{\mtaps{} with Indivisible Items and Lexicographic Preferences}
\label{LPS}

In this section, we first introduce the lexicographic preference, and then develop \lps{} as a specialized mechanism for \mtaps{} where the items are indivisible and agents' preferences are restricted to the lexicographic preference domain, and show that it retains the good properties of PS.

Faced with the impossibility result of \Cref{thm:imp} we wonder if it is possible to circumvent it by adding some reasonable restriction.
We finally choose lexicographic preferences as a restriction on the preference domain in our setting according to the result in \Cref{rm:tightimp}.
An agent with a lexicographic preference compares two bundles by comparing the items of each type in the two bundles one by one according to the importance order on types. 
In short, the agent takes the importance of types into consideration while ranking bundles.
We say that a certain \mtap{} is under the restriction of lexicographic preferences if it satisfies that for every agent $j\le n$, $\succ_j$ is a lexicographic preference.
We give its formal definition in the following with the notation $\type{i}{\vx}$: Given any bundle $\vx\in\md$ and any type $i\le p$, $\type{i}{\vx}$ refers to the item of type $i$ in $\vx$.

\begin{definition}\label{dfn:lex}
	(\textbf{lexicographic preference}) Given an \mtap{}, the preference relation $\succ_j$ of agent $j\le n$ is lexicographic if there exists
	\begin{enumerate*}[label=(\roman*)]
		\item importance order i.e. a strict linear order $\impord{j}$ over $p$ types,
		\item for each type $i\le p$, a strict linear order $\prefintype{j}{i}$ over $D_i$,
	\end{enumerate*}
	 such that for every pair $\vx,\hat\vx \in \md$, $\vx \succ_j \hat\vx$ if and only if there exists a type $i$ s.t. $\type{i}{\vx} \prefintype{j}{i} \type{i}{\hat\vx}$, and for all $\hat i \impord{j} i$, $\type{\hat i}{\vx}=\type{\hat i}{\hat\vx}$.
\end{definition}

For example, the preference $1_F2_B \succ 1_F1_B \succ 2_F2_B \succ 2_F1_B$ is a lexicographic preference represented as importance order $F\impord{\null} B$, and strict linear orders $1_F \prefintype{\null}{F} 2_F$ and $2_B \prefintype{\null}{B} 1_B$ for two types.
We also note that although lexicographic preference and lexicographic dominance looks similar, lexicographic preference is used in ranking bundles in agents' preferences, while lexicographic dominance is used to compare allocations or assignments consisting of shares of bundles.

\subsection{\textbf{Algorithm for \lps{}}}

Before going any further with \lps{}, we introduce some notations for ease of exposition.
We use $\srd{P}{i}$ to denote the \ram{} assignment of items of type $i$ at $P$. $\srd{P}{i}$ is a $|N| \times |D_i| $ matrix and for any $i\le p$, any $o\in D_i$, $\srd{p}{i}_{j,o}=\sum_{o\in {\vx}, {\vx}\in \md}p_{i,\vx}$ represents the total share of bundles containing $o$ consumed by agent $j$.
To distinguish from single type \ram{} assignments, we call the ones for \mtaps{} as multi-type \ram{} assignments.
Besides, for $o\in D_i$, we use the notation of the upper contour set $\ucs(\prefintype{}{i},o)$ to refer to items of type $i$ that better or equal to $o$ regarding $\prefintype{}{i}$.

\begin{algorithm}
	\begin{algorithmic}[1]
		\State {\bf Input:} An \mtap{} $(N,M)$, a lexicographic preference profile $R$.
		\State For every $o\in M$, $\s(o)\gets 1$. For every $i \leq n$, $\srd{P}{i}\gets 0^{n\times n}$. $P\gets 0^{n\times|\md|}$.
        \Loop\ $p$ times
            \State {\bf Identify top type} $i_j$ for every agent $j\le n$.

            //For the $k$th loop, $i_j$ is the $k$th type regarding $\impord{j}$.
            \For{$i\le p$}
                \State $t \gets 0$. 
                \State $N^i=\{j\le n|i_j=i\}$.
        		\While{$t<1$}
        		
            		\State {\bf Identify top item} $\topb{j}{i}$ in type $i$ for every agent ${j\in N^i}$.
            		\parState{ {\bf Consume.}
            			\begin{itemize}[topsep=0em,partopsep=0em]
            				\item[10.1:] For each $o\in D_i$,${\con{o}{\null}\gets|\{j\in N^i:\topb{j}{i}=o\}|}$.
            				\item[10.2:] $\prog{\null}\gets \min_{o\in D_i}\frac{\s(o)}{\con{o}{\null}}$.
            				\item[10.3:] For each $j\in N^0$, $\srd{p}{i}_{j,\topb{j}{i}}\gets \srd{p}{i}_{j,\topb{j}{i}}+\prog{\null}$.
            				\item[10.4:] For each $o\in D_i$, $\s(o)\gets\s(o)-{\prog{\null}\times \con{o}{\null}}$.
            				\item[10.5:] $t \gets t + \prog{\null}$.
            			\end{itemize}}
        		\EndWhile
            \EndFor
        \EndLoop\vspace{-1em}
        \State For every $j\leq n,\vx\in\md$,$p_{j,\vx}=\prod_{o=\type{i}{\vx},i\leq p}{\srd{p}{i}_{j,o}}$.
		\State \Return $P$
	\end{algorithmic}
	\caption{\label{alg:lps} \lps{}}
\end{algorithm}

In the \lps{} mechanism, agent $j$ always consumes her favorite item $o_j$ available in the current most important type.
Agent $j$ would not stop consuming $o_j$ unless one of the following occurs:
\begin{enumerate}[label=(\roman*),itemindent=2em,topsep=0em]
  \item $o_j$ is exhausted, and then agent $j$ will consume her next favorite and available item according to $\prefintype{j}{i}$;
  \item $\sum_{o\in D_i}{\srd{p}{i}_{j,o}}=1,o_j \in D_i$, and then agent $j$ will turn to her next most important type $\hat i$ according to $\impord{j}$ and consume her favorite item available in that type.
\end{enumerate}
\hspace{1.5em}After consumption, we obtain $\srd{P}{i}$ for every type $i \leq p$.
With the assumption that allocations among different types are independent, we construct $P$ by
\begin{equation}\label{eq:lps:share}
    p_{j,\vx}=\prod_{o=\type{i}{\vx},i\leq p}{\srd{p}{i}_{j,o}}.
\end{equation}

We divide the execution of \lps{} into $p$ phase where each agent only consume items in her current most important type $i_j$, and the time $t$ for each phase is up to one unit.
In the beginning of each phase, we set the timer $t=0$.
During the consumption, agent $j$ first decides her most preferred {\em unexhausted} item $\topb{j}{i}$ in her current most importance type $i$ regarding $\prefintype{j}{i}$.
Here we say an item $o$ is exhausted if the supply $\s(o)=0$.
Each agent consumes the item at a uniform rate of one unit per unit of time.
The consumption pauses whenever one of the items being consumed becomes exhausted.
That means agent $j$'s share on $\topb{j}{i}$ is increased by $\prog{\null}$,
the duration since last pause, and the supply $\s(o)$ is computed by subtracting $\prog{\null}$ for $\con{o}{\null}$ times, the number of agent $j$ which satisfies $\topb{j}{i}=o$.
In \Cref{alg:lps}, $\prog{\null}$ is computed by $\min_{o\in M}\frac{\s(o)}{\con{o}{\null}}$.
After that we increase the timer $t$ by $\prog{\null}$, identify $\topb{j}{i}$ for each agent, and continue the consumption.
The current phase ends when the timer $t$ reaches $1$, and the algorithm starts the next phase.
We show that \lps{} is able to deal with indivisible items while retaining the good properties in \Cref{thm:lps}.



\begin{figure}[ht]
	\centering
	\includegraphics[width=0.8\linewidth]{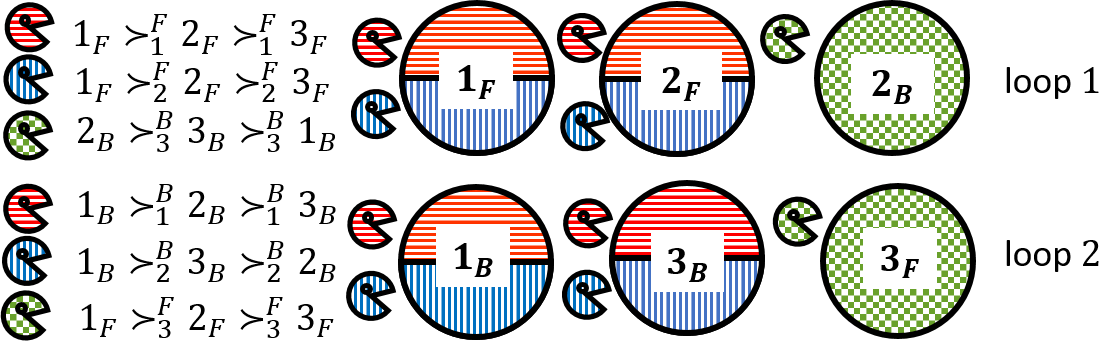}
	\caption{Execution of \lps{} in \Cref{eg:lps}.}
	\label{fig:lps}
\end{figure}



\begin{example}\label{eg:lps}
    Consider an \mtap{} where $N=\{1,2,3\}$, $M=D_F\times D_B, D_F=\{1_F,2_F,3_F\}$, $D_B=\{1_B,2_B,3_B\}$, and the profile $R=\{\succ_1,\succ_2,\succ_3\}$. The preferences $\succ_1,\succ_2,\succ_3$ are as follows:
    \begin{center}
        \begin{tabular}{|c|c|}\hline
            Agent & Preferences \\\hline
            1 & $F\impord{1}B, 1_F\prefintype{1}{F}2_F\prefintype{1}{F}3_F,1_B\prefintype{1}{B}2_B\prefintype{1}{B}3_B$ \\
            2 & $F\impord{2}B,  1_F\prefintype{2}{F}2_F\prefintype{2}{F}3_F,1_B\prefintype{2}{B}3_B\prefintype{2}{B}2_B$ \\
            3 & $B\impord{3}F,  1_F\prefintype{3}{F}2_F\prefintype{3}{F}3_F,2_B\prefintype{3}{B}3_B\prefintype{3}{B}1_B$\\\hline
        \end{tabular}
    \end{center}
	The execution of \lps{} is shown in {\Cref{fig:lps}}.
	In loop $1$, agent $1$ and $2$ consume items in $D_F$, while agent $3$ consumes alone in $D_B$.
	Therefore agent $3$ gets her favorite items $2_B$ in $D_B$ fully, and $1_B$ and $3_B$ are left.
	Since agent $1$ and $2$ have the same preference for $D_F$, they each obtain $0.5$ units of $1_F$ and $0.5$ units of $2_F$, and $3_F$ is left.
	Similarly in loop $2$, agent $1$ and $2$ prefers type $B$ while agent $3$ prefers $F$.
	Then agent $3$ gets the remaining item $3_F$, and agent $1$ and $2$ divide $1_B$ and $3_B$ uniformly according to their preferences.
    \begin{center}
        \centering
        \begin{tabular}{|c|ccccc|}\hline
            \multirow{2}{*}{Agent} & \multicolumn{5}{c|}{$P$}\\\cline{2-6}
             & $1_F1_B$ & $1_F3_B$ & $2_F1_B$ & $2_F3_B$ & $3_F2_B$\\\hline
            1 &  0.25    & 0.25   & 0.25     & 0.25       & 0       \\
            2 &  0.25    & 0.25   & 0.25     & 0.25       & 0       \\
            3 &  0      & 0       & 0        & 0          & 1       \\\hline
        \end{tabular}
    \end{center}
	The consumption above results in the final assignment $P$, and it is easy to check that $P$ is \deca{}.
\end{example}

\subsection{\textbf{Properties}}

\Cref{thm:lps} shows that, as an extension of PS, \lps{} inherits efficiency and envyfreeness based on stochastic dominance in solving \mtaps{} with lexicographic preferences, and it is able to deal with indivisible items for its \deca{} outputs.
\begin{theorem}\label{thm:lps}
    For \mtaps{} with lexicographic preferences, \lps{} satisfies \sdopt{} and \sdef{}.
    Especially, \lps{} outputs \deca{} assignments.
\end{theorem}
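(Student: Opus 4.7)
The plan is to verify three claims about the LPS output $P$ in turn, with decomposability following from the product form, sd-efficiency from the greedy/priority structure of the algorithm, and sd-envy-freeness being the technical core.

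For decomposability, each per-type marginal $\srd{P}{i}$ is a doubly stochastic $n\times n$ matrix: row sums equal $1$ because every agent consumes exactly one unit of type $i$ in the unique macro-phase where $i$ is her active type, and column sums equal $1$ because every item of type $i$ is fully exhausted by the time LPS terminates. Applying Birkhoff--von Neumann on each $\srd{P}{i}$ yields $\srd{P}{i}=\sum_k\alpha^i_k\pi^i_k$ for permutation matrices $\pi^i_k$; combined with the product form $p_{j,\vx}=\prod_i\srd{P}{i}_{j,\vx[i]}$ from the final line of Alg.~\ref{alg:lps}, this gives $P=\sum_{(k_1,\ldots,k_p)}\bigl(\prod_i\alpha^i_{k_i}\bigr)\cdot A_{k_1,\ldots,k_p}$, where $A_{k_1,\ldots,k_p}$ is the discrete multi-type assignment giving agent $j$ the bundle $(\pi^1_{k_1}(j),\ldots,\pi^p_{k_p}(j))$, and the coefficients sum to $\prod_i(\sum_k\alpha^i_k)=1$. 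For sd-efficiency I would argue by contradiction: if some $Q\neq P$ satisfies $Q\sd{}P$, then at the earliest phase-type step where cumulative consumptions under $Q$ and $P$ first diverge the reallocation must be either intra-class (ruled out by standard PS's sd-efficiency within that class), cross-class within one type (ruled out because higher-priority classes greedily ate their favorites before lower classes began), or cross-type swapping loss on a more-important type for gain on a less-important one for some agent (ruled out by that agent's lex preference, which puts the loss above the gain in any sd cut).

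For sd-envy-freeness, fix two agents $j,\hat j$, write $\vx_0=(o^0_1,\ldots,o^0_p)$ with $o^0_k$ of type $i_k^j$ in $j$'s importance order, and set $H_m(\vx_0):=\sum_{\hat\vx\in\ucs(\succ_j,\vx_0)}p_{m,\hat\vx}$. The starting fact is that on $j$'s top type $i_1^j$ agent $j$ sits in priority class~$1$ of the sequential PS on $D_{i_1^j}$, so standard PS envy-freeness within class~$1$, together with the observation that later-class agents only consume leftovers, yields $\srd{P}{i_1^j}_j\sd{j,i_1^j}\srd{P}{i_1^j}_{\hat j}$. Expanding the product form along $j$'s importance order gives the recursion
\[
H_j(\vx_0)-H_{\hat j}(\vx_0)=S_1+\srd{P}{i_1^j}_{j,o^0_1}\,H_j^{-}(\vx_0^{-})-\srd{P}{i_1^j}_{\hat j,o^0_1}\,H_{\hat j}^{-}(\vx_0^{-}),
\]
where $S_1:=\sum_{o\,\prefintype{j}{i_1^j}\,o^0_1}(\srd{P}{i_1^j}_{j,o}-\srd{P}{i_1^j}_{\hat j,o})\geq 0$ and $H_m^{-}(\vx_0^{-})$ is the analogous cumulative share under the reduced product $\srd{P}{i_2^j}\otimes\cdots\otimes\srd{P}{i_p^j}$ evaluated against $\succ_j$ restricted to the reduced types. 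A case analysis on the signs of $H_j^{-}-H_{\hat j}^{-}$ and $\srd{P}{i_1^j}_{j,o^0_1}-\srd{P}{i_1^j}_{\hat j,o^0_1}$, combined with the top-type sd-dominance inequality $S_1+\srd{P}{i_1^j}_{j,o^0_1}-\srd{P}{i_1^j}_{\hat j,o^0_1}\geq 0$ at $o^0_1$, shows that the right-hand side is non-negative in every case.

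The hardest step is sd-envy-freeness: per-type sd-dominance can fail for $j$ on her non-top types whenever $\hat j$ holds higher priority on such a type, so the proof does not reduce cleanly to a smaller LPS instance by induction on $p$. What rescues the argument is that any potential reduced-problem deficit enters the recursion multiplied by the factor $\srd{P}{i_1^j}_{\hat j,o^0_1}$, and the top-type sd-dominance provides the slack $S_1\geq \srd{P}{i_1^j}_{\hat j,o^0_1}-\srd{P}{i_1^j}_{j,o^0_1}$ precisely sufficient to absorb it. Making this quantitative balance rigorous while tracking the interaction between the product form of $P$ and the lexicographic comparison is the technical core of the proof.
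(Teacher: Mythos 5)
Your decomposability argument is the paper's own: Birkhoff--von Neumann applied to each per-type marginal, combined with the product form of the final assignment. Your sd-efficiency sketch also follows the paper's route (induction along $j$'s importance order, per-phase PS efficiency, and the observation that later phases only receive leftovers), though it is only a sketch. The gap is in \sdef{}, which you rightly call the technical core. Your recursion
\begin{equation*}
H_j(\vx_0)-H_{\hat j}(\vx_0)=S_1+a\,H_j^{-}-b\,H_{\hat j}^{-},\qquad a=\srd{p}{i_1}_{j,o^0_1},\quad b=\srd{p}{i_1}_{\hat j,o^0_1},
\end{equation*}
is correct, but the inequality you invoke to close it, $S_1\ge b-a$ (sd-dominance on the top type at the cut $o^0_1$), is not sufficient when $H_j^{-}<H_{\hat j}^{-}$: from $S_1\ge 0$, $S_1+a-b\ge 0$ and $H^{-}\in[0,1]$ alone, the values $S_1=0$, $a=b=\tfrac12$, $H_j^{-}=0$, $H_{\hat j}^{-}=1$ make the right-hand side equal to $-\tfrac12$. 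What you actually need in that regime is the strictly stronger fact $S_1\ge b$, i.e.\ that $j$'s cumulative type-$i_1$ share \emph{strictly above} $o^0_1$ already dominates $\hat j$'s cumulative share \emph{at or above} $o^0_1$. That does not follow from stochastic dominance on the top type; it is a structural property of the \lps{} execution that must be proved separately.

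That property is precisely where the paper does its work. A deficit $H_j^{-}<H_{\hat j}^{-}$ can only occur when $\hat j$'s importance order differs from $j$'s (if the orders agree, both agents run PS together in every phase, per-type dominance holds for every type, and your first case closes the recursion by induction on the number of types). When the orders differ, $\hat j$ consumes $j$'s top type only in a later phase, so the sets of type-$i_1$ items held with positive share by $j$ and by $\hat j$ overlap in at most one boundary item $\hat o$, and the unit-supply constraint yields $\sum_{o\prefintype{j}{i_1}\hat o}\srd{p}{i_1}_{j,o}=1-\srd{p}{i_1}_{j,\hat o}\ge\srd{p}{i_1}_{\hat j,\hat o}=\sum_{o\in\ucs(\prefintype{j}{i_1},\hat o)}\srd{p}{i_1}_{\hat j,o}$, which is exactly $S_1\ge b$ at the only nontrivial cut (above the boundary $b=0$, below it $S_1=1$). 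The paper packages this as a contradiction argument showing that envy would force $\hat j$'s importance order to coincide with $j$'s. Without identifying and proving this leftover/boundary/supply property, your sign-based case analysis cannot be completed, so as written the proof of \sdef{} does not go through.
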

\begin{proof}
    Given a \mtap{}$(N,M)$, and profile $R$ of lexicographic preferences, let $P=\lps(R)$.
    In the following proof, we use $\itemwt{i}$ to refer to an arbitrary item of type $i$, and $(\itemwt{i}*)$ to refer to any bundle containing $\itemwt{i}$.

    (1) (\textbf{\sdopt})
    By contradiction, we suppose in another assignment $Q$ there exists agent $j$ who get a better allocation than in $P$ when others are also better or not affected, which means $Q\sd{}P$.
    W.l.o.g, label the type according to $\impord{j}$ as $1 \impord{j} 2 \impord{j} \dots \impord{j} p$.
    We show that $Q_j=P_j$ by proving the following equation with mathematical induction: for any $i\leq p$ and any $\itemwt{1},\dots,\itemwt{i}$,
    \begin{equation}\label{eq:thm:lps:1}
        \sum_{\vx=(\itemwt{1},\dots,\itemwt{i}*)}{p_{j,\vx}}=\sum_{\vx=(\itemwt{1},\dots,\itemwt{i}*)}{q_{j,\vx}}.
    \end{equation}

    First, we prove (\ref{eq:thm:lps:1}) when $i=1$ i.e. $\srd{Q}{1}_j=\srd{P}{1}_j$.
    If $\srd{Q}{1}\sd{j}\srd{P}{1}$ is false, we have that there exists $\hitemwt{1}$ and the least preferred bundle $\hat\vx$ containing $\hitemwt{1}$ regarding $\succ_j$ such that
    \begin{equation*}
        \sum_{\vx\in\ucs(\succ_j,\hat\vx)}{p_{j,\vx}}=
        \sum_{o\in\ucs(\prefintype{j}{1},\hitemwt{1})}{p_{j,o}}>
        \sum_{o\in\ucs(\prefintype{j}{1},\hitemwt{1})}{q_{j,o}}=
        \sum_{\vx\in\ucs(\succ_j,\hat\vx)}{q_{j,\vx}}.
    \end{equation*}
    That is a contradiction to the assumption $Q\sd{j}P$.
    Therefore we suppose $\srd{Q}{1}\sd{j}\srd{P}{1}$ and $\srd{Q}{1}_j\neq\srd{P}{1}_j$.
    In some way, we say $\srd{P}{1}$ can be improved to $\srd{Q}{1}$ by shares transferring of bundles.
    Let $N_1$ denote the set of agents who consume items of type~$1$ in Phase~$1$.
    Let $\hitemwt{1}$ be the least preferred item agent $j$ gets in $\srd{P}{1}$ according to $\prefintype{j}{1}$.
    From \Cref{alg:lps}, we learn that agents in $N_1$ obey the rule of PS in Phase~$1$, and therefore the partial assignment for them $\srd{P}{1}_{N_1}=(\srd{P}{1}_j)_{j\in N_1}$ is \sdopta{}.
    Besides, agents not in $N_1$ only gets bundles $\vx$ with $\type{1}{\vx}$ not better than $\hitemwt{1}$.
    By assumption there exists $\itemwt{1}$ such that $\srd{p}{1}_{j,\itemwt{1}}<\srd{q}{1}_{j,\itemwt{1}}$ and $\sum_{o\prefintype{j}{1}\itemwt{1}}{\srd{p}{1}_{j,o}}=
    \sum_{o\prefintype{j}{1}\itemwt{1}}{\srd{q}{1}_{j,o}}$, and therefore we have that the share transferring:
    \begin{enumerate}[label=(\roman*),itemindent=2em,topsep=0em]
      \item only involves agents in $N_1$, which is a contradiction of \sdopt{} of $\srd{P}{1}_{N_1}$;
      \item involves agents not in $N_1$ and the best items of type $1$ w.r.t. $\prefintype{j}{1}$ they may have is $\hitemwt{1}$.
      If $\itemwt{1}\neq\hitemwt{1}$, the extra share of $\srd{q}{1}_{j,\itemwt{1}}$ comes from agents in $N_1$, which is a contradiction of \sdopt{} as in Case (\romannumeral1).
      Therefore $\itemwt{1}=\hitemwt{1}$ and we have $\sum_{o\in D_1}{\srd{q}{1}_{j,o}}=
      \sum_{o\in\ucs(\prefintype{j}{1},\itemwt{1})}{\srd{q}{1}_{j,o}}>
      \sum_{o\in\ucs(\prefintype{j}{1},\itemwt{1})}{\srd{p}{1}_{j,o}}=
      \sum_{o\in D_1}{\srd{p}{1}_{j,o}}=1$, which is a contradiction.
    \end{enumerate}
    Therefore we have $\srd{Q}{1}_j=\srd{P}{1}_j$.

    Then we prove (\ref{eq:thm:lps:1}) for type $i>1$ while for any $\itemwt{1},\dots,\itemwt{i-1}$,
    \begin{equation}\label{eq:thm:lps:2}
        \sum_{\vx=(\itemwt{1},\dots,\itemwt{i-1}*)}{p_{j,\vx}}=
        \sum_{\vx=(\itemwt{1},\dots,\itemwt{i-1}*)}{q_{j,\vx}}.
    \end{equation}
    It is easy to see that here (\ref{eq:thm:lps:1}) is equivalent to $\srd{Q}{i}=\srd{P}{i}$.
    We first show that like in type $1$ it is necessary that $\srd{Q}{i}\sd{j}\srd{P}{i}$.
    For any $\hitemwt{1},\dots,\hitemwt{i}$, let $\hat\vx$ be the least preferred bundle containing them regarding $\succ_j$.
    Let
    \begin{equation*}
    \begin{split}
    \md_i=&\{(\hitemwt{1},\dots,\hitemwt{i-1},\itemwt{i}*)|\itemwt{i}\in\ucs(\prefintype{j}{i},\hitemwt{i})\}, \\
    \md'_i=&\{(\itemwt{1},\dots,\itemwt{i-1}*)|\itemwt{1}\in\ucs(\prefintype{j}{1},\hitemwt{1}),\dots,\itemwt{i-2}\in\ucs(\prefintype{j}{i-2},\hitemwt{i-2}),\itemwt{i-1}\prefintype{j}{i-1}\hitemwt{i-1}\}.
    \end{split}
    \end{equation*}
    By assumption $Q\sd{j}P$ we have that
    \begin{equation}\label{eq:thm:lps:3}
    \begin{split}
        \sum_{\vx\in\md_i}{p_{j,\vx}}+\sum_{\vx\in\md'_i}{p_{j,\vx}}=
        \sum_{\vx\in\ucs(\succ_j,\hat\vx)}{p_{j,\vx}}
        \leq \sum_{\vx\in\ucs(\succ_j,\hat\vx)}{q_{j,\vx}}=
        \sum_{\vx\in\md_i}{q_{j,\vx}}+\sum_{\vx\in\md'_i}{q_{j,\vx}}
    \end{split}
    \end{equation}
    With (\ref{eq:thm:lps:2}) we have $\sum_{\vx\in\md'_i}{p_{j,\vx}}=\sum_{\vx\in\md'_i}{q_{j,\vx}}$, and induce from (\ref{eq:thm:lps:3}) that $\sum_{\vx\in\md_i}{p_{j,\vx}}\leq \sum_{\vx\in\md_i}{q_{j,\vx}}$.
    By summing up each side by $\hitemwt{1},\dots,\hitemwt{i-1}$, we have that:
    \begin{equation*}
        \sum_{\hitemwt{1}}\dots\sum_{\hitemwt{i-1}}\sum_{\vx\in\md_i}{p_{j,\vx}}\leq \sum_{\hitemwt{1}}\dots\sum_{\hitemwt{i-1}}\sum_{\vx\in\md_i}{q_{j,\vx}}.
    \end{equation*}
    According to the definition of lexicographic preference, it is equal to:
    \begin{equation*}
        \sum_{\vx\in\{(\itemwt{i}*)|\itemwt{i}\in\ucs(\prefintype{j}{i},\hitemwt{i})\}}{p_{j,\vx}}\leq \sum_{\vx\in\{(\itemwt{i}*)|\itemwt{i}\in\ucs(\prefintype{j}{i},\hitemwt{i})\}}{q_{j,\vx}}.
    \end{equation*}
    It can also be written as:
    \begin{equation*}
        \sum_{\itemwt{i}\in\ucs(\prefintype{j}{i},\hitemwt{i})}{\srd{p}{i}_{j,\itemwt{i}}}\leq \sum_{\itemwt{i}\in\ucs(\prefintype{j}{i},\hitemwt{i})}{\srd{q}{i}_{j,\itemwt{i}}}
    \end{equation*}
    That means $\srd{Q}{i}\sd{j}\srd{P}{i}$.

    With $\srd{Q}{i}\sd{j}\srd{P}{i}$, next we prove $\srd{Q}{i}\neq\srd{P}{i}$ is false by contradiction as for type $i$.
    Let $N_i$ denote the set of agents who consume items of type~$i$ in Phase~$i$ and $N'_i$ denote agents consume items of type~$i$ after Phase~$i$.
    Notice that agents which have obtained items of type $i$ before Phase $i$ are not considered because they can never benefit by trading shares with agents in $N_i\bigcup N'_i$.
    Let $\hitemwt{i}$ be the least preferred item agent $j$ gets in $\srd{P}{i}$ according to $\prefintype{j}{i}$.
    As we have shown in the proof for type $1$, $\srd{P}{i}_{N_i}=(\srd{P}{i}_j)_{j\in N_i}$, the partial assignment for agents in $N_i$, is \sdopta{} regarding the available items of type $i$ in Phase $i$, and agents in ${N'_i}$ only gets bundles $\vx$ with $\type{i}{\vx}$ not better than $\hitemwt{i}$.
    By assumption there exists $\itemwt{i}$ such that
    $\srd{p}{i}_{j,\itemwt{i}}<\srd{q}{i}_{j,\itemwt{i}}$ and $\sum_{o\prefintype{j}{i}\itemwt{i}}{\srd{p}{i}_{j,o}}=
    \sum_{o\prefintype{j}{i}\itemwt{i}}{\srd{q}{i}_{j,o}}$,
    and therefore we have that the share transferring:
    \begin{enumerate}[label=(\roman*'),itemindent=2em,topsep=0em]
      \item only involves agents in $N_i$, which is a contradiction of \sdopt{} of $\srd{P}{i}_{N_i}$;
      \item involves agents in $N'_i$, and the best items of type $i$ w.r.t. $\prefintype{j}{i}$ they may have is $\hitemwt{i}$.
      If $\itemwt{i}\neq\hitemwt{i}$, the extra share of $\srd{q}{i}_{j,\itemwt{i}}$ comes from agents in $N_i$, which is a contradiction of \sdopt{} as in Case (\romannumeral1').
      Therefore $\itemwt{i}=\hitemwt{i}$ and we have  $\sum_{o\in D_i}{\srd{q}{i}_{j,o}}=
      \sum_{o\in\ucs(\prefintype{j}{i},\itemwt{i})}{\srd{p}{i}_{j,o}}>
      \sum_{o\in\ucs(\prefintype{j}{i},\itemwt{i})}{\srd{q}{i}_{j,o}}=
      \sum_{o\in D_i}{\srd{q}{i}_{j,o}}=1$, which is a contradiction.
    \end{enumerate}
    Therefore we have the result.

    By mathematical induction, we have that $p_{j,\vx}=q_{j,\vx}$, which is a contradiction to $Q_j\neq P_j$.


    (2) (\textbf{\sdef})
    As agents spend one unit of time for each type, we divide the execution of \lps{} into $p$ phases by type, and the following proof develops by phases.

    We first declare that agent $j$ does not envy other agents who have the same importance order.
    For convenience, we label the types according to $\impord{j}$.
    Let $N_i$ be the set of agents who consume items of types $i$ in Phase $i$.
    Phase $i$ can be viewed as the execution of PS for the single type allocation problem with agents in $N_i$ and available items left in $D_i$ at Phase $i$.
    By \cite{Bogomolnaia01:New}, we know that PS satisfies \sdef.
    Therefore we have that for any $k \in N_i$, $\sum_{o'\in\ucs(\prefintype{j}{i},o)}{\srd{p}{i}_{j,o'}} \geq \sum_{o'\in\ucs(\prefintype{j}{i},o)}{\srd{p}{i}_{k,o'}}$ for any $o \in D_i$, which also means $\sum_{\vx'\in\ucs(\succ_j,\vx)}{p_{j,\vx'}} \geq \sum_{\vx'\in\ucs(\succ_j,\vx)}{p_{k,\vx'}}$ for any $\vx\in\md$.

    We next declare that agent $j$ does not envy agents who have different important orders.
    Assume by contradiction that $k$ is such an agent and there exists ${\hat\vx} \in \md$ which satisfies
    \begin{equation}\label{eq:lpsensup}
        \sum_{\vx \in \ucs(\succ_j,\hat\vx)}p_{j,\vx} < \sum_{\vx \in \ucs(\succ_j,\hat\vx)}p_{k,\vx}.
    \end{equation}
    We prove the declaration by mathematical induction.
    First we prove agent $k$'s most important type is $1$.
    If not, then agent $k$ would consume $o \in D_1$ in the later phase.
    However, the items left in $D_1$ at phase $i\neq1$ are not more preferable than those consumed by agent $j$.
    Thus for any $\hitemwt{1},\itemwt{1}$ satisfying $\srd{p}{1}_{j,\hitemwt{1}}>0,\srd{p}{1}_{k,\itemwt{1}}>0$, we have $\hitemwt{1}\prefintype{i}{1}{\itemwt{1}}$ or $\hitemwt{1}=\itemwt{1}$.
    We discuss them respectively.

    Case (\romannumeral1):
    If $\hitemwt{1} \prefintype{j}{1} \itemwt{1}$ for any $\hitemwt{1},\hitemwt{1}$ satisfying $\srd{p}{1}_{j,\itemwt{1}}>0,\srd{p}{1}_{k,\itemwt{1}}>0$.
    then we obtain that for any $\hat\vx,\vx,p_{j,\hat\vx}>0,p_{k,\vx}>0$, we have ${\hat\vx}\succ_j{\vx}$  since $\type{1}{\hat\vx} \prefintype{j}{1} \type{1}{\vx}$.
    Let $\hat\vx$ be least preferable bundle according to $\succ_j$ and $p_{j,\hat\vx}>0$, and we have that for any $\vx'\in\ucs(\succ_j,\hat\vx)$,
    \begin{equation}\label{eq:lpsef1}
        \sum_{\vx\in \ucs(\succ_j,\vx')}p_{j,\vx} \geq \sum_{\vx\in \ucs(\succ_j,\vx')}p_{k,\vx} = 0
    \end{equation}
    For any $\vx'$ which satisfies $\hat\vx\succ_j\vx'$,
    \begin{equation}\label{eq:lpsef2}
        \sum_{\vx \in \ucs(\succ_j,\vx')}p_{j,\vx} =1 \geq \sum_{\vx \in \ucs(\succ_j,\vx')}p_{k,\vx}
    \end{equation}
    Combining (\ref{eq:lpsef1}) and (\ref{eq:lpsef2}), we obtain that $\sum_{\vx \in \ucs(\succ_j,\vx')}p_{j,\vx}\geq\sum_{\vx \in \ucs(\succ_j,\vx')}p_{k,\vx}$ for any $\vx'$, a contradiction to (\ref{eq:lpsensup}).

    Case (\romannumeral2):
    If there exists $\hitemwt{1}$ satisfying $\srd{p}{1}_{j,\hitemwt{1}}>0,\srd{p}{1}_{k,\hitemwt{1}}>0$, then $\hitemwt{1}$ will be the least preferable item consumed by agent $j$, and the most preferable one consumed by agent $k$ both regarding $\prefintype{j}{1}$.
    We obtain that
    \begin{equation}\label{eq:lpsef3}
        \sum_{o\prefintype{j}{1}\hitemwt{1}}{\srd{p}{1}_{j,\itemwt{1}}}
        = 1 - \srd{p}{1}_{j,\hitemwt{1}}
        \geq \srd{p}{1}_{k,\hitemwt{1}} =
        \sum_{o \in \ucs(\prefintype{j}{1},\hitemwt{1})}{\srd{p}{1}_{k,\itemwt{1}}}.
    \end{equation}
    It means that for every $\hat\vx=(\hitemwt{1}*)$,
    \begin{equation*}\label{}
        \sum_{\vx \in \ucs(\succ_j,\hat\vx)}p_{j,\vx}
        \geq \sum_{o\prefintype{j}{1}\hitemwt{1}}{\srd{p}{1}_{j,o}}
        \geq \sum_{o \in \ucs(\prefintype{j}{1},\hitemwt{1})}{\srd{p}{1}_{k,o}}
        \geq \sum_{\vx \in \ucs(\succ_j,\hat\vx)}p_{k,\vx}
    \end{equation*}
    When $\hat\vx\neq(\hitemwt{1}*)$, we have (\ref{eq:lpsef1}) for $\type{1}{\hat\vx}\prefintype{j}{1}\itemwt{1}$ and (\ref{eq:lpsef2}) for $\itemwt{1}\prefintype{j}{i}\type{1}{\hat\vx}$, which cover the remaining cases.
    Therefore agent $k$'s first important type is $1$.

    Then we prove the agent $k$'s $i$th important type is $i$ when her ${i'}$th type is $\hat i$ for $i'<i$, i.e. $\srd{p}{i'}_{j,\itemwt{i'}}=\srd{p}{i'}_{k,\itemwt{i'}}$ for $i'<i$.
    Suppose it is false and we discuss by cases above as for type $1$.
    Because agent $k$ consumes items in type $i$ later that agent $j$, we note that for any $\itemwt{i},\hitemwt{i}$ satisfying $\srd{p}{i}_{j,\itemwt{i}}>0,\srd{p}{i}_{k,\hitemwt{i}}>0$, we have $\itemwt{i}\prefintype{j}{i}\hitemwt{i}$ or $\itemwt{i}=\hitemwt{i}$.

    Case (\romannumeral1'):
    Suppose for any $\itemwt{i},\hitemwt{i}$ satisfying $\srd{p}{i}_{j,\itemwt{i}}>0,\srd{p}{i}_{k,\hitemwt{i}}>0$ and $\itemwt{i}\neq\hitemwt{i}$, we have $\itemwt{i} \prefintype{j}{i} \hitemwt{i}$.
    That means for $\vx=(\itemwt{1},\dots,\itemwt{i-1},\itemwt{i}*),\hat\vx=(\itemwt{1},\dots,\itemwt{i-1},\hitemwt{i}*)$ satisfying $p_{j,\vx}>0,p_{k,\hat\vx}>0$, we have that $\vx\succ_j\hat\vx$. 
    For any $\pitemwt{1},\dots,\pitemwt{i}$, let $\vx'=(\pitemwt{1},\dots,\pitemwt{i}*)$.
    Then we can obtain the result similar to
    (\ref{eq:lpsef1}) and (\ref{eq:lpsef2})
    in Case~(\romannumeral1) regarding (\ref{eq:lps:share}), the computation way of bundle shares in \lps{}:
    \begin{equation}\label{eq:lpsef7}
        \sum_{\vx=(\pitemwt{1},\dots,\pitemwt{i-1}*), \vx\in\ucs(\succ_j,\vx')}{p_{j,{\vx}}}
        \geq
        \sum_{\vx=(\pitemwt{1},\dots,\pitemwt{i-1}*), \vx\in\ucs(\succ_j,\vx')}{p_{k,{\vx}}}
    \end{equation}
    In addition, we can take the cumulative shares of upper contour set apart as follows:
    \begin{equation}\label{eq:lpsef4}
	\begin{split}
	    \sum_{{\vx}\in \ucs(\succ_j,\vx')}{p_{j,{\vx}}} =&
            \sum_{{\vx}\in\{(\itemwt{1}*)|\itemwt{1} \prefintype{j}{1} \pitemwt{1}\}}{p_{j,{\vx}}} +
            \sum_{{\vx}\in\{(\pitemwt{1},\itemwt{2}*)|\itemwt{2} \prefintype{j}{2} \pitemwt{2}\}}{p_{j,{\vx}}} \\
            &+\dots+
            \sum_{\vx=(\pitemwt{1},\dots,\pitemwt{i-1}*), \vx\in\ucs(\succ_j,\vx')}{p_{j,{\vx}}}
	\end{split}
	\end{equation}
	By the computation of shares for bundles (\ref{eq:lps:share}) and for any $\hat{i}<i$, $\srd{p}{\hat i}_{j,\itemwt{\hat i}}=\srd{p}{\hat i}_{k,\itemwt{\hat i}}$, we have that
	\begin{equation}\label{eq:lpsef5}
	    \sum_{{\vx}\in\{(\pitemwt{1},\dots,\itemwt{\hat{i}}*)|\itemwt{\hat{i}} \prefintype{j}{\hat{i}} \pitemwt{\hat{i}}\}}{p_{j,{\vx}}}=
	    \sum_{{\vx}\in\{(\pitemwt{1},\dots,\itemwt{\hat{i}}*)|\itemwt{\hat{i}} \prefintype{j}{\hat{i}} \pitemwt{\hat{i}}\}}{p_{k,{\vx}}}
	\end{equation}
	Therefore with (\ref{eq:lpsef7}), (\ref{eq:lpsef4}) and (\ref{eq:lpsef5}) we have $\sum_{{\vx}\in  \ucs(\succ_j,\vx')}{p_{j,{\vx}}}\geq\sum_{{\vx}\in \ucs(\succ_j,\vx')}{p_{k,{\vx}}}$ for any $\vx'$, a contradiction to (\ref{eq:lpsensup}).

    Case (\romannumeral2'):
    Suppose there exists $\hitemwt{i}$ satisfying $\srd{p}{i}_{j,\hitemwt{i}}>0,\srd{p}{i}_{k,\hitemwt{i}}>0$.
    Therefore $\hitemwt{i}$ will be the least preferable item consumed by agent $j$, and the most preferable one consumed by agent $k$ both regarding $\prefintype{j}{i}$.
    Let $\vx'=(\pitemwt{1},\dots,\pitemwt{i-1},\pitemwt{i}*)$. 
    We first consider $\pitemwt{i}=\hitemwt{i}$.
    Similar to (\ref{eq:lpsef3}), we have that
    \begin{equation}\label{eq:lpsef3.5}
        \sum_{o\prefintype{j}{i}\hitemwt{i}}{\srd{p}{i}_{j,o}}
        = 1 - \srd{p}{i}_{j,\hitemwt{i}}
        \geq  \srd{p}{i}_{k,\hitemwt{i}} =
        \sum_{o \in \ucs(\prefintype{j}{i},\hitemwt{i})}{\srd{p}{i}_{k,o}}
    \end{equation}
   With computation way of bundle shares (\ref{eq:lps:share}), we induce that for $\alpha=\prod_{i'=1}^{i-1}{\srd{p}{i}_{j,\pitemwt{i'}}}=\prod_{i'=1}^{i-1}{\srd{p}{i}_{k,\pitemwt{i'}}}$,
    \begin{equation}\label{eq:lpsef6}
    \begin{split}
        &\sum_{\vx=(\pitemwt{1},\dots,\pitemwt{i-1}*), \vx\in\ucs(\succ_j,\vx')}{p_{j,{\vx}}}\\
        \geq& \alpha\times\sum_{o\prefintype{j}{i}\hitemwt{i}}{\srd{p}{i}_{j,o}}
        \geq \alpha\times\sum_{o \in \ucs(\prefintype{j}{i},\hitemwt{i})}{\srd{p}{i}_{k,o}}\\
        \geq& \sum_{\vx=(\pitemwt{1},\dots,\pitemwt{i-1}*), \vx\in\ucs(\succ_j,\vx')}{p_{k,{\vx}}}.
    \end{split}
    \end{equation}
    With (\ref{eq:lpsef4}) and (\ref{eq:lpsef5}), (\ref{eq:lpsef6}) means $\sum_{{\vx}\in \ucs(\succ_j,\vx')}{p_{j,{\vx}}}\geq\sum_{{\vx}\in \ucs(\succ_j,\vx')}{p_{k,{\vx}}}$.
    When $\pitemwt{i}\neq\hitemwt{i}$, we have (\ref{eq:lpsef7}) for $\vx'$, and we can obtain the same results as in Case~(\romannumeral1').
    Therefore agent $k$'s first important type is $i$.

    Therefore agent $k$'s important order is the same as agent $j$'s, which is a contradiction.
    With the results above, we prove that agent $j$ envies nobody.

    (3) (\textbf{\deca{} outputs})
    By the classical Birkhoff-Von Neumann theorem, all the single type \ram{} assignments are \deca{}.
    For $i\leq p$, let $\srd{\ma}{i}$ be the set of all the single type deterministic assignments for type $i$, $\srd{A}{i}$ be an arbitrary assignment in $\srd{\ma}{i}$, and we use $\alpha^i$ to denote the possibility of $\srd{A}{i}$.
    Therefore $\srd{P}{i}=\sum{(\alpha_k^i\times\srd{A}{i}_k)}$ for every $i\leq p$.

    We first give a obvious claim that single type deterministic assignments of $p$ types decide an unique multi-type deterministic assignments, and vice versa.
    Specifically, for any multi-type deterministic assignment $A\in\ma$ and $\srd{A}{1},\dots,\srd{A}{p}$ comprising $A$, we have that for every $\vx\in\md$ and $\itemwt{i}=\type{i}{\vx}$,
    \begin{equation}\label{eq:lpsreal0}
        A_{j,\vx}=\prod_{i=1}^{p}{\srd{A}{i}_{j,\itemwt{i}}}.
    \end{equation}
    That means agent $j$ is assigned bundle $\vx$ if she is assigned all the items in $\vx$.
    We show an example of types $\{F,B\}$ for agent $\{1,2\}$ as follows:


    \begin{displaymath}
        \centering
        \begin{split}
        \begin{tabular}{|c|cc|cc|}
            \hline\multirow{2}{*}{Agent} & \multicolumn{2}{c|}{$\srd{P}{F}$} & \multicolumn{2}{c|}{$\srd{P}{B}$}\\\cline{2-5}
             & $1_F$ & $2_F$ & $1_B$ & $2_B$\\\hline
            1 & 1 & 0 & 0 & 1 \\
            2 & 0 & 1 & 1 & 0 \\\hline
        \end{tabular}
        \hspace{1em}
        \begin{tabular}{|c|cccc|}
            \hline\multirow{2}{*}{Agent} & \multicolumn{4}{c|}{$P$}\\\cline{2-5}
             & $1_F1_B$ & $1_F2_B$ & $2_F1_B$ & $2_F2_B$ \\\hline
            1 & 0 & 1 & 0 & 0 \\
            2 & 0 & 0 & 1 & 0 \\\hline
        \end{tabular}
        \end{split}
    \end{displaymath}


    Then by definition of \lps{} we obtain that for any $\vx\in\md$ and any $\itemwt{i}=\type{i}{\vx}$,
    \begin{equation}\label{eq:lpsreal1}
    \begin{split}
       p_{j,\vx} & =\prod_{i\leq p}{\srd{p}{i}_{j,\itemwt{i}}}=\prod_{i\leq p}\sum_{\srd{A}{i}_k\in\srd{\ma}{i}}(\srd{\alpha}{i}_k\times (\srd{A}{i}_k)_{j,\itemwt{i}})
    \end{split}
    \end{equation}
    The result of (\ref{eq:lpsreal1}) is a product of $p$ polynomials, and we can write it as one polynomial and induce as below:
    \begin{equation}\label{eq:lpsreal2}
    \begin{split}
        (\ref{eq:lpsreal1})&=\sum_{\srd{A}{1}_{k_1}\in\srd{\ma}{1},\srd{A}{2}_{k_2}\in\srd{\ma}{2},\dots,\srd{A}{p}_{k_p}\in\srd{\ma}{p}}\prod_{i\leq p}({\alpha^i_{k_i}}\times (\srd{A}{i}_{k_i})_{j,\itemwt{i}}) \\
         & =\sum_{\srd{A}{1}_{k_1}\in\srd{\ma}{1},\srd{A}{2}_{k_2}\in\srd{\ma}{2},\dots,\srd{A}{p}_{k_p}\in\srd{\ma}{p}}(\prod_{i\leq p}{\alpha^i_{k_i}}\times \prod_{i\leq p}(\srd{A}{i}_{k_i})_{j,\itemwt{i}})
    \end{split}
    \end{equation}
    Given $\srd{A}{1}_{k_1},\srd{A}{2}_{k_2},\dots,\srd{A}{p}_{k_p}$, let $A_k$ be the corresponding multi-type assignment consisting of them and $\alpha_k=\prod_{i\leq p}{\srd{\alpha}{i}_{k_i}}$ is the probability of $A_k$ where $\srd{\alpha}{i}_{k_i}$ is the probability of $\srd{A}{i}_{k_i}$ for $\srd{P}{i}$.
    With (\ref{eq:lpsreal0}) we have:
    \begin{equation*}
        (\ref{eq:lpsreal1})=\sum_{A_k\in\ma}(\alpha_k\times (A_k)_{j,\vx})
    \end{equation*}
    Therefore $P$ is \deca{}.
    \qed
\end{proof}

\begin{remark}\label{rm:lpsnlexopt}
\lps{} does not satisfy \lexopt{}.
In \Cref{eg:lps}, agent $1$'s allocation in $P$, denoted $p$ here, is lexicographically dominated by the following allocation $q$:
\begin{center}
    \begin{tabular}{|c|ccccc|}
     \hline Agent & $1_F1_B$ & $1_F3_B$ & $2_F1_B$ & $2_F3_B$ & $3_F2_B$\\\hline
     1     & 0.5      & 0        & 0        & 0.5      & 0 \\\hline
\end{tabular}
\end{center}
We note that $q$ can be obtained by reallocating the shares of items in $p$.
\end{remark}

However, in resistance against manipulation, \lps{} does not perform as well as PS just like most extensions.
\Cref{thm:lpssp} shows that \lps{} is \sdspa{} if the importance orders reported by agents are ensured to be truthful. We also show in \Cref{rm:lpsnsp} how an agent cheats by misreporting her importance order.

\begin{theorem}\label{thm:lpssp}
    For \mtaps{} with lexicographic preferences, \lps{} satisfies \sdsp{} when agents report importance orders truthfully.
\end{theorem}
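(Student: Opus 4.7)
The plan is to reduce the claim to the classical sd-strategyproofness of PS~\cite{Bogomolnaia01:New} applied phase by phase, and then to translate bundle-level sd-dominance into type-level sd-dominance via the lexicographic structure and the independence formula (\ref{eq:lps:share}). Without loss of generality I relabel types so that agent $j$'s (truthful) importance order is $1 \impord{j} 2 \impord{j} \cdots \impord{j} p$, and let $R' = (\succ'_j, \succ_{-j})$, where $\succ'_j$ is any lexicographic preference sharing $\impord{j}$ with $\succ_j$. Write $P = \lps(R)$ and $P' = \lps(R')$, and suppose $P' \sd{j} P$; the goal is to show $P'_j = P_j$.

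The first ingredient is a phase-invariance observation for \lps{}. Because every agent truthfully reports her importance order, the cohort $N^{i,k}$ of agents eating type $i$ in the $k$-th outer-loop iteration is determined solely by agents' importance orders, and hence does not depend on $j$'s within-type preferences. Since $j \in N^{i,i}$ forces $j \notin N^{i,k}$ for $k \neq i$, the items remaining in $D_i$ at the start of iteration $i$ depend only on the consumption of the disjoint cohorts $N^{i,1}, \ldots, N^{i,i-1}$, none of which contains $j$. Consequently the computation of $\srd{P}{i}_j$ reduces to running single-type PS on a fixed item pool and a fixed cohort, in which $j$'s only degree of freedom is her reported within-type ranking of type $i$. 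By the sd-strategyproofness of PS, if $\srd{P'}{i}_j$ weakly stochastically dominates $\srd{P}{i}_j$ with respect to the true $\prefintype{j}{i}$, then $\srd{P'}{i}_j = \srd{P}{i}_j$.

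The main obstacle is to deduce such type-level sd inequalities from the bundle-level hypothesis $P' \sd{j} P$; I plan to do this by induction on $i$. For the base case $i=1$ I test the bundle-sd inequality at $\vx = (\itemwt{1}, o^{\ast}_2, \ldots, o^{\ast}_p)$, where $o^{\ast}_k$ denotes the $\prefintype{j}{k}$-least preferred item of $D_k$; under the lexicographic preference, $\ucs(\succ_j, \vx)$ collapses to $\{\hat\vx : \type{1}{\hat\vx} \in \ucs(\prefintype{j}{1}, \itemwt{1})\}$, and (\ref{eq:lps:share}) reduces the cumulative bundle share on each side to $\sum_{\hitemwt{1} \in \ucs(\prefintype{j}{1}, \itemwt{1})} \srd{p}{1}_{j,\hitemwt{1}}$ and its primed counterpart, yielding type-$1$ sd directly. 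For the inductive step $i > 1$, assuming $\srd{P'}{l}_j = \srd{P}{l}_j$ for every $l < i$, I pick items $\hitemwt{l} \in D_l$ with $\srd{p}{l}_{j,\hitemwt{l}} > 0$ (which exist since the shares in each $D_l$ sum to $1$) and test at $\vx = (\hitemwt{1}, \ldots, \hitemwt{i-1}, \itemwt{i}, o^{\ast}_{i+1}, \ldots, o^{\ast}_p)$. Expanding $\ucs(\succ_j, \vx)$ according to the position of the first lexicographic disagreement and applying (\ref{eq:lps:share}) yields a sum whose first $i-1$ summands cancel by the inductive hypothesis, leaving a single term with strictly positive prefactor $\prod_{l<i}\srd{p}{l}_{j,\hitemwt{l}}$; dividing through gives type-$i$ sd at $\itemwt{i}$. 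The phase-invariance observation then upgrades this to $\srd{P'}{i}_j = \srd{P}{i}_j$, closing the induction, and (\ref{eq:lps:share}) finally assembles the type-wise equalities into $P'_j = P_j$.
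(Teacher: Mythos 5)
Your proposal is correct and follows essentially the same route as the paper's proof: decompose the run into phases, observe that truthful importance orders fix the item pool and cohort so that each phase is a single-type PS instance to which sd-weak-strategyproofness applies, and translate bundle-level stochastic dominance into type-level stochastic dominance via the product formula (\ref{eq:lps:share}) and the upper-contour-set decomposition (the paper's equations (\ref{eq:lpsef4})--(\ref{eq:lpsef5})). The only difference is organizational --- you run a clean induction establishing $\srd{P'}{i}_j=\srd{P}{i}_j$ for every type, whereas the paper argues by contradiction at the most important misreported type --- and your version arguably handles more carefully the case where a misreport at some type leaves that type's marginal unchanged.
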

\begin{proof}
    In the following proof, we use $\itemwt{i}$ to refer to an arbitrary item of type $i$, and $(\itemwt{i}*)$ to refer to any bundle containing $\itemwt{i}$.
	We also use $(\itemwt{i}*)$ as a shorthand to refer to an arbitary bundle containing $\itemwt{i}$.
	
    Suppose agent $j$ misreports its preference as $\succ'_j$ for some types and acquire a better allocation.
    Label the type according to $\impord{j}$ and let $i$ be the most important type for which $j$ misreports.
    Let $P=\lps(R)$, $R'=(\succ'_j,\succ_{-j})$ and $Q=\lps(R')$.
    Then by assumption we have that $Q \sd{j} P,Q\neq P$ and $\srd{p}{\hat i}_{j,o}=\srd{q}{\hat i}_{j,o}$ for every $\hat i < i$.
    The phase when agent $j$ consumes items in $D_i$ can be viewed as executing PS of type $i$.
    By \cite{Bogomolnaia01:New}, PS satisfies \sdsp{}, which means $\srd{Q}{i} \sd{j} \srd{P}{i}$ is false.
    Thus there exists $\hitemwt{i}$ which satisfies
    \begin{equation}\label{eq:lpssp1}
        \sum_{o \in \ucs(\prefintype{j}{i},\hitemwt{i})}\srd{p}{i}_{j,o}>\sum_{o \in \ucs(\prefintype{j}{i},\hitemwt{i})}\srd{q}{i}_{j,o}
    \end{equation}
    With (\ref{eq:lpsef4}) and (\ref{eq:lpsef5}) in \Cref{thm:lps}, we can simplify (\ref{eq:lpssp1}) and obtain that for $\hat\vx=(\hitemwt{1},\dots,\hitemwt{i}*)$,
     \begin{equation*}
            \sum_{{\vx}\in\{(\hitemwt{1},\dots,\hitemwt{i-1},\itemwt{i}*)|\itemwt{i} \in \ucs(\prefintype{j}{i},\hitemwt{i})\}}p_{j,\vx}
            > \sum_{{\vx}\in\{(\hitemwt{1},\dots,\hitemwt{i-1},\itemwt{i}*)|\itemwt{i} \in \ucs(\prefintype{j}{i},\hitemwt{i})\}}q_{j,\vx}.
    \end{equation*}
    It is equal to $\sum_{\vx \in \ucs(\succ_j,\hat\vx)}p_{j,\vx}> \sum_{\vx \in \ucs(\succ_j,\hat\vx)}q_{j,\vx}$, which means agent $j$ does not obtain a better allocation in $Q$, a contradiction.\qed
\end{proof}

\begin{remark}\label{rm:lpsnsp}
    When applying \lps{} to \mtaps{} with lexicographic preferences, an agent may get a better allocation by misreporting her importance order.
\end{remark}
\begin{proof}
    Consider a \mtap{} with lexicographic preferences where there are agents $1,2$ and types $F,B,T$.
    Agents both prefer $1_i$ to $2_i$ for $i \in \{F,B,T\}$, but their preferences over bundle are different due to their importance orders as follow:
    \begin{center}
        \begin{tabular}{|c|c|}
            \hline Agent & Importance Order \\\hline
            1 & $F\impord{1}B\impord{1}T$ \\
            2 & $T\impord{2}F\impord{2}B$ \\\hline
        \end{tabular}
    \end{center}
    \lps{} gives the \ram{} assignment denoted $P$.
    If agent $2$ misreports her importance order as $\impord{2}:F\impord{2}T\impord{2}B$, \lps{} gives another \ram{} assignment denoted $P'$.
    Both $P$ and $P'$ are shown as follows (To save space, we omit the columns which only contain $0$, similarly hereinafter.):


    \vspace{1em}\noindent
    \begin{minipage}{\linewidth}
	\begin{minipage}{0.3\linewidth}
    \begin{center}
        \centering
        \begin{tabular}{|c|cc|}
            \hline\multirow{2}{*}{Agent} & \multicolumn{2}{c|}{$P$}\\\cline{2-3}
             & $1_F1_B2_T$ & $2_F2_B1_T$  \\\hline
            1 & 1 & 0 \\
            2 & 0 & 1 \\\hline
        \end{tabular}
    \end{center}
	\end{minipage}
	\hspace{0.1\linewidth}
	\begin{minipage}{0.5\linewidth}
    \begin{center}
        \centering
        \begin{tabular}{|c|cccc|}
            \hline\multirow{2}{*}{Agent} & \multicolumn{4}{c|}{$P'$}\\\cline{2-5}
             & $1_F1_B2_T$ & $1_F2_B1_T$ & $2_F1_B2_T$ & $2_F2_B1_T$ \\\hline
            1 & 0.5 & 0 & 0.5 & 0 \\
            2 & 0 & 0.5 & 0 & 0.5 \\\hline
        \end{tabular}
    \end{center}
	\end{minipage}
	\end{minipage}\vspace{1em}

    We observe that compared with $P$, agent $2$ loses $0.5$ shares of $2_F2_B1_T$, but acquires $0.5$ shares of $1_F2_B1_T$ in $P'$.
    Since $1_F2_B1_T\succ_22_F2_B1_T$, we obtain that $P' \sd{2} P$, but $P \sd{2} P'$ is false, which means \lps{} does not satisfy \sdsp{} when an agent can misreport her importance order.\qed
\end{proof}

\section{\mps{} for \mtaps{} with divisible items}
\label{MPS}
In this section we only consider \mtap{} with divisible items, which is not affected by \Cref{thm:imp}.
We present a simplified version of \mps{}~\cite{Wang19:Multi} in {\bf\Cref{alg:mps}}, since we no longer need to deal with partial preferences.
At a high level, in \mps{} agents consume bundles consisting of items in contrast with PS where agents consume items directly.
Under strict linear preferences, we prove that \mps{} satisfies \lexopt{}, which implies \sdopt{}, and provide two characterizations involving \leximin{} and \iof{} respectively.

\subsection{\textbf{Algorithm for \mps{}}}

Given an \mtap{}, \mps{} proceeds in multiple rounds as follows:
In the beginning of each round, $M'$ contains all items that are unexhausted.
Agent $j$ first decides her most preferred {\em available} bundle $\topb{j}{}$ according to $\succ_j$.
A bundle $\vx$ is available so long as every item $o\in\vx$ is unexhausted.
Then, each agent consumes the bundle by consuming all of the items in it at a uniform rate of one unit per unit of time. The round ends whenever one of the bundles being consumed becomes unavailable because an item being consumed has been exhausted.
The algorithm terminates when all the items are exhausted.

\begin{algorithm}
	\begin{algorithmic}[1]
		\State {\bf Input:} An \mtap{} $(N,M)$ and a preference profile $R$.
		\State For every $o\in M$, $\s(o)\gets 1$. $M' \gets M$. $P\gets 0^{n\times|\md|}$.
		\While{$M'\neq\emptyset$}
		\State {\bf Identify top bundle} $\topb{j}{}$ for every agent $j\le n$.
		\parState{ {\bf Consume.}
			\begin{itemize}
				\item[5.1:] For any $o\in M'$, $\con{o}{}\gets|\{j\in N:o\in\topb{j}{}\}|$.
				\item[5.2:] $\prog{}\gets \min_{o\in M'}\frac{\s(o)}{\con{o}{}}$.
				\item[5.3:] For each $j\le n$, $p_{j,\topb{j}{}}\gets p_{j,\topb{j}{}}+\prog{}$.
				\item[5.4:] For each $o\in M'$, $\s(o)\gets\s(o)-{\prog{}\times \con{o}{}}$.
			\end{itemize}
		}\vspace{-1.5em}
		\State $\ex{}\gets\arg\min_{o\in M'}\frac{\s(o)}{\con{o}{}}$, $M'\gets M'\setminus B$
		\EndWhile
		\State \Return $P$
	\end{algorithmic}
	\caption{\label{alg:mps} \mps{} for \mtaps{} under strict linear preference.}
\end{algorithm}

\begin{example}\label{eg:mps}
	The execution of \mps{} for the following instance of \mtap{} is shown in {\bf\Cref{fig:mps}}.
	
	\vspace{1em}\noindent
    \begin{minipage}{\linewidth}
    \begin{minipage}{0.4\linewidth}
	\begin{center}
        \begin{tabular}{|c|c|}
            \hline Agent & Preferences \\\hline
            1 & $1_F1_B\succ_11_F2_B\succ_12_F2_B\succ_12_F1_B$ \\
            2 & $1_F2_B\succ_22_F1_B\succ_21_F1_B\succ_22_F2_B$ \\\hline
        \end{tabular}
    \end{center}
	\end{minipage}
	\hspace{0.1\linewidth}
	\begin{minipage}{0.4\linewidth}
    \begin{center}
        \centering
        \begin{tabular}{|c|cccc|}
            \hline\multirow{2}{*}{Agent} & \multicolumn{4}{c|}{$P$}\\\cline{2-5}
             & $1_F1_B$ & $1_F2_B$ & $2_F1_B$ & $2_F2_B$ \\\hline
            1 & 0.5 & 0 & 0 & 0.5 \\
            2 & 0 & 0.5 & 0.5 & 0 \\\hline
        \end{tabular}
    \end{center}
	\end{minipage}
	\end{minipage}\vspace{1em}
	
	At round $1$, agent $1$'s top bundle is $1_F1_B$, and agent $2$'s top bundle is $1_F2_B$. Notice that both agents wish to consume $1_F$. Therefore, round $1$ ends as $1_F$ gets exhausted with both agents getting a share of $0.5$ units of $1_F$. Agents $1$ and $2$ also consume $1_B$ and $2_B$ respectively at the same rate during round $1$. At the end of round $1$, agents $1$ and $2$ are assigned with a $0.5$ share of
	$1_F1_B$ and $1_F2_B$ respectively.
	
	At the start of round $2$, there is a supply of $1$ unit of $2_F$ and $0.5$ units each of $1_B$ and $2_B$. Agent $1$'s top available bundle is $2_F2_B$ since $1_F2_B$ is unavailable for the exhausted item $1_F$, and agent $2$'s top available bundle is $2_F1_B$ accordingly. The agents consume the items of each type from their top bundles at a uniform rate and at the end of the round, all items are exhausted, and agents $1$ and $2$ have consumed $0.5$ units each of $2_F2_B$ and $2_F1_B$ respectively, which results in the final assignment as shown in {\bf\Cref{fig:mps}}, which is the un\deca{} assignment $P$ in \Cref{eg:undecomposable}.
	
	Further, we show in \Cref{rm:mpsnreal} that even under lexicographic preferences, the output of \mps{} is not always \deca{}. This means that \mps{} is only applicable to \mtaps{} with divisible items. 
\end{example}

\begin{figure}[ht]
	\centering
	\includegraphics[width=0.8\textwidth]{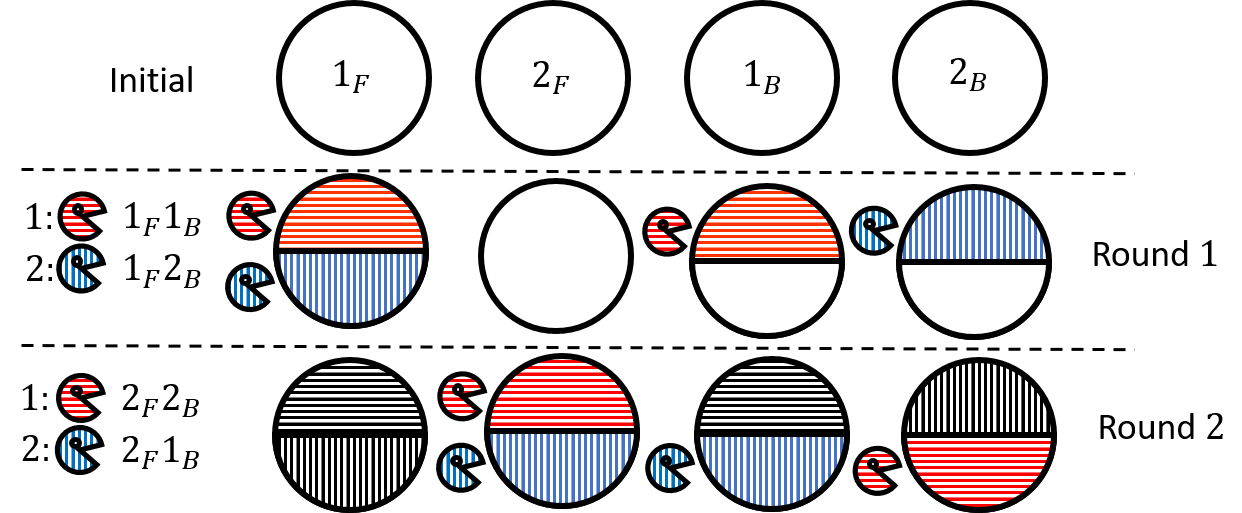}
	\caption{An example of the execution of \mps{}.}
	\label{fig:mps}
\end{figure}

\begin{remark}\label{rm:mpsnreal}
The output of \mps{} is not always a \deca{} assignment under the restriction of lexicographic preferences.
\end{remark}
\begin{proof}
For the \mtap{} in \Cref{eg:lps}, \mps{} outputs the following \ram{} assignment, denoted $P$:

\begin{center}
    \centering
    \begin{tabular}{|c|ccccccc|}
        \hline\multirow{2}{*}{Agent} & \multicolumn{7}{c|}{$P$}\\\cline{2-8}
         & $1_F1_B$ & $1_F2_B$ & $2_F1_B$ & $2_F2_B$ & $2_F3_B$ & $3_F2_B$  & $3_F3_B$\\\hline
        1 &  1/3    & 0       & 1/6       & 1/6     & 0      & 1/12    & 1/4 \\
        2 &  1/3    & 0       & 1/6       & 0       & 1/6    & 0       & 1/3 \\
        3 &  0      & 1/3     & 0         & 1/3     & 0      & 1/12    & 1/4 \\\hline
    \end{tabular}
\end{center}
When items are indivisible, if agent $2$ get $2_F3_B$, agent $1$ will get $1_F1_B$ and agent $3$ will get $3_F2_B$ as $P$ indicates.
However, $p_{1,1_F1_B}=1/3,p_{2,2_F3_B}=1/6,p_{3,3_F2_B}=1/12$ are not equal, which is a contradiction.\qed
\end{proof}

\subsection{\textbf{Efficiency and Generalized Cycles}}
Under the unrestricted domain of strict linear preferences, Theorem 2 in \cite{Wang19:Multi} implies that \mps{} satisfies \sdopt{}. We prove in~\Cref{thm:mpslexopt} below that \mps{} satisfies \lexopt{}, which is a stronger notion of efficiency than \sdopt{}, as we will prove in~\Cref{prop:gc} later.
\begin{theorem}\label{thm:mpslexopt}
    \mps{} satisfies \lexopt{}.
\end{theorem}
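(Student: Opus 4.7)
The plan is to argue by contradiction. Suppose $P=\mps(R)$ is not \lexopta{}, so there is $Q\in\mP$ with $Q\ld{j}P$ for every agent $j$. For each $j$, I would pick $\vx_j$ to be the most preferred bundle under $\succ_j$ at which the two allocations differ; the definition of $\ld{j}$ then forces $q_{j,\vx_j}>p_{j,\vx_j}$ and $q_{j,\hat\vx}=p_{j,\hat\vx}$ for every $\hat\vx\succ_j\vx_j$. Set $a_j:=q_{j,\vx_j}-p_{j,\vx_j}>0$.

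Next, I would track the \mps{} execution by defining $t_j:=\sum_{\hat\vx\in\ucs(\succ_j,\vx_j)}p_{j,\hat\vx}$, the time at which agent $j$ has finished consuming every bundle in $\ucs(\succ_j,\vx_j)$. Let $j^*\in\arg\min_j t_j$ and $t^*=t_{j^*}$. Because \mps{} always has each agent eat her top available bundle, agent $j^*$ must stop eating bundles in $\ucs(\succ_{j^*},\vx_{j^*})$ at time $t^*$ because $\vx_{j^*}$ has become unavailable (or, in the corner case $p_{j^*,\vx_{j^*}}=0$, because it was already unavailable earlier); in either case some item $o^*\in\vx_{j^*}$ is exhausted in $P$ by time $t^*$.

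The contradiction will come from a conservation argument on $o^*$. Since $t_j\geq t^*$ for every $j$ and \mps{} eats bundles in non-increasing preference order for each agent, by time $t^*$ every agent has consumed exactly $t^*$ units of bundles in $\ucs(\succ_j,\vx_j)$ and nothing strictly below $\vx_j$. Because $o^*$ is exhausted by $t^*$ in $P$, every unit of $o^*$ is consumed through some such top bundle, giving
\begin{equation*}
    1 \;=\; \sum_{j}\sum_{\hat\vx\in\ucs(\succ_j,\vx_j),\,\hat\vx\ni o^*} p_{j,\hat\vx}.
\end{equation*}
Plugging in $q_{j,\hat\vx}=p_{j,\hat\vx}$ for $\hat\vx\succ_j\vx_j$ and $q_{j,\vx_j}\geq p_{j,\vx_j}$, together with $a_{j^*}>0$ and $o^*\in\vx_{j^*}$, one obtains
\begin{equation*}
    \sum_{j}\sum_{\hat\vx\in\ucs(\succ_j,\vx_j),\,\hat\vx\ni o^*} q_{j,\hat\vx} \;\geq\; 1+a_{j^*} \;>\; 1,
\end{equation*}
which contradicts $\sum_j\sum_{\hat\vx\ni o^*}q_{j,\hat\vx}=1$.

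The main obstacle is justifying the two structural claims behind the counting: that an item of $\vx_{j^*}$ really is exhausted by $t^*$ (handling the $p_{j^*,\vx_{j^*}}=0$ corner case), and that no agent consumes any bundle strictly below her own $\vx_j$ before $o^*$ is exhausted — both of which rest on the fact that \mps{} always eats a top available bundle and hence, for each agent, the bundle being eaten is monotone non-increasing in $\succ_j$ over time.
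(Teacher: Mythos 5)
Your proof is correct and takes essentially the same route as the paper's: pick for each agent the most preferred bundle at which $Q$ strictly exceeds $P$, take the agent whose finishing time $t_j$ is minimal, observe that some item of her bundle is exhausted in the \mps{} execution by that time, and derive a contradiction from that item's unit supply. Your explicit conservation count over $o^*$ actually spells out the step the paper leaves implicit (``there is another agent who loses shares before $t$''), so no gap remains.
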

\begin{proof}
    Given \mtap{}(M,N) and preference profile $R$, let $P=\mps(R)$, and suppose another assignment $Q$ satisfies $Q\ld{}P$.
    Let $\hat N$ be the set of agents which have different allocations in $Q$, and by assumption we have that for any $j\in\hat N$, there exists a bundle $\vx$ satisfies $q_{j,\vx}>p_{j,\vx}$ and $q_{j,\hat\vx}=p_{j,\hat\vx}$ for any $\hat\vx\succ_j\vx$.
    W.l.o.g. let $j$ be the agent with smallest $\sum_{\hat\vx\succ_j\vx}{p_{j,\hat\vx}}$, and let $t$ denote the value.
    However, when \mps{} executes till time $t$, $\vx$ is unavailable, which means that at least one item in $\vx$ is exhausted and $p_{j,\vx}$ cannot be increased anymore.
    Therefore if agent $j$ gains more shares of $\vx$ in $Q$, there is another agent who lose shares before $t$, which is a contradiction.\qed
\end{proof}

We establish the relationship between \lexopt{} and \sdopt{} in~\Cref{prop:gc} through the {\em \ngc{}} condition (\Cref{dfn:gc}), by showing that \sdopt{} is implied by the \ngc{} condition, which is implied by \lexopt{}. We begin by borrowing the tool named generalized cycle from~\citet{Wang19:Multi}, which is based on the relation $\tau$ and the notion of {\em improvable tuples} defined as follows:
\begin{definition}{\bf{(improvable tuples~\cite{Wang19:Multi})}}
Given a fractional assignment $P$, and a profile $R=(\succ_j)_{j\le n}$,
\begin{itemize}
\item for pair of bundles $\vx,\hat{\vx}\in\md$, $\vx\tau\hat{\vx}$ $\iff$ there exists an agent $j\le n$, such that $\vx\succ_j\hat{\vx}$ and $p_{j,\hat{\vx}}>0$. \item for any pair of bundles $\vx,\hat{\vx}\in\md$, $(\vx,\hat{\vx})$ is an {\em improvable tuple} if and only if $\vx\tau\hat{\vx}$, and
\item $\gcycle(P,R)$ is the set of all improvable tuples admitted by assignment $P$ w.r.t. the preference profile $R$.
\end{itemize}
\end{definition}

For ease of exposition, we use $\gcycle(P)$ to refer to the set of all improvable tuples admitted by the fractional assignment $P$ when the profile is clear from the context. We are now ready to formally define the \ngc{} condition.
\begin{definition}{\bf{(\ngc{}~\cite{Wang19:Multi})}}\label{dfn:gc}
	Given an \mtap{} $(N,M,R)$ and a \ram{} assignment $P$, a set $C \subseteq \gcycle(P,R)$ is a {\em generalized cycle} if it holds for every $o \in M$ that: if an improvable tuple $(\vx_1,\hat\vx_1)\in C$ satisfies that $o\in \vx_1$, then there exists a tuple $(\vx_2,\hat\vx_2)\in C$ such that $o\in \hat\vx_2$. 
	An assignment $P$ satisfies {\em \ngc{}}, if it admits no generalized cycles.
\end{definition}

When $p=1$, \citet{Bogomolnaia01:New} proved that an assignment is \sdopta{} if and only if the relation $\tau$ on it is acyclic, i.e. there does not exists $\vx_1\tau \vx_2 \tau\cdots\tau\vx_1$. However, this condition fails for \mtaps{}.~\Cref{eg:gc} shows that an assignment which does not satisfy \sdopt{} satisfies the acyclicity of $\tau$, but admits a generalized cycle, which means the generalized cycle is more reliable in identifying \sdopta{} assignments.

\begin{example}\label{eg:gc}
	We illustrate generalized cycles, with the following assignment $Q$ for the \mtap{} in \Cref{eg:mps}. Note that $Q$ is not \sdopta{} because the assignment $P$ in \Cref{eg:mps} stochastically dominates $Q$.
	
    \vspace{1em}\noindent
    \begin{minipage}{\linewidth}
	\begin{minipage}{0.4\linewidth}
    \begin{center}
        \centering
        \begin{tabular}{|c|cccc|}
            \hline\multirow{2}{*}{Agent} & \multicolumn{4}{c|}{$Q$}\\\cline{2-5}
             & $1_F1_B$ & $1_F2_B$ & $2_F1_B$ & $2_F2_B$ \\\hline
            1 &  0.4    & 0   & 0     & 0.6 \\
            2 &  0.2    & 0.4   & 0.4   & 0 \\\hline
        \end{tabular}
    \end{center}
	\end{minipage}
	\hspace{0.1\linewidth}
	\begin{minipage}{0.4\linewidth}
	\centering
    \begin{center}
        \begin{tabular}{|c|c|}
            \hline Agent & Improvable Tuples \\\hline
            1 & \tabincell{c}{$(1_F1_B,2_F2_B),(1_F2_B,2_F2_B),$\\$(2_F1_B,2_F2_B)$} \\\hline
            2 & \tabincell{c}{$(1_F2_B,2_F1_B),(1_F2_B,1_F1_B),$\\$(2_F1_B,1_F1_B)$} \\\hline
        \end{tabular}
    \end{center}
	\end{minipage}
	\end{minipage}\vspace{1em}
	

	It is easy to see that $\tau$ is acyclic on $Q$. 
    However, it is easy to verify that there is a generalized cycle on $Q$: $\{(1_F1_B,2_F2_B),\allowbreak(1_F2_B,2_F1_B),\allowbreak(2_F1_B,1_F1_B)\}$.
	Consider for example the items in type $B$: $1_B$ is present in both components of $(2_F1_B,1_F1_B)$, and $2_B$ is present in the left component of $(1_F2_B,2_F1_B)$ and the right component of $(1_F1_B,2_F2_B)$. A similar analysis can be performed for the items of type $F$.
\end{example}

~\Cref{prop:gc} reveals the relationship between \lexopt{} and \sdopt{} vis-\`a-vis the \ngc{} condition. Unlike~\citet{Bogomolnaia15:Random} who point out that \lexopt{} and \sdopt{} are equivalent in their setting where $p=1$, because both of them are equivalent to acyclicity condition on the relation $\tau$, we show that this is no longer true for \mtaps{}.

\begin{proposition}\label{prop:gc}
	Given a preference profile $R$ and a \ram{} assignment $P$,
	\begin{enumerate}[label=(\arabic*)]
	    \item \label{prop:gcsdopt}$P$ is \sdopta{} regarding $R$ if $P$ admits no generalized cycle.
	    \item $P$ is \lexopta{} regarding $R$ only if $P$ admits no generalized cycle.
	\end{enumerate}
\end{proposition}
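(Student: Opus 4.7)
The plan is to prove the two parts separately. For part (1), the statement matches exactly the sufficient-condition direction established as Theorem 2 of Wang et al.~\cite{Wang19:Multi}: a \ram{} assignment admitting no generalized cycle is necessarily \sdopta{}. Their proof is given for partial preferences and applies verbatim to our strict linear preferences, so we simply invoke it.

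For part (2) we prove the contrapositive: if $P$ admits a generalized cycle $C \subseteq \gcycle(P, R)$, then $P$ is not \lexopta{}, which we show by constructing an assignment $Q$ with $Q \ld{} P$. For each tuple $t = (\vx_t, \hat\vx_t) \in C$ fix a witnessing agent $j_t$ with $\vx_t \succ_{j_t} \hat\vx_t$ and $p_{j_t, \hat\vx_t} > 0$. The heart of the argument is to find nonnegative weights $(w_t)_{t \in C}$, not identically zero, that balance every item's consumption:
\[
\sum_{t \in C :\, o \in \vx_t} w_t \;=\; \sum_{t \in C :\, o \in \hat\vx_t} w_t \qquad \text{for every } o \in M.
\]
The defining property of a generalized cycle -- every item appearing as a left component also appears as a right component -- lets us extract such a nontrivial solution, possibly after passing to a suitable sub-cycle of $C$ or augmenting $C$ with additional improvable tuples of $\gcycle(P, R)$, via a flow-decomposition argument on the bipartite tuple-item incidence structure.

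Once the weights are in place, define $Q$ by $q_{j_t, \vx_t} := p_{j_t, \vx_t} + \epsilon w_t$ and $q_{j_t, \hat\vx_t} := p_{j_t, \hat\vx_t} - \epsilon w_t$ for sufficiently small $\epsilon > 0$, leaving every other entry of $P$ untouched. Balance of $w$ preserves each item's total supply, and smallness of $\epsilon$ together with $p_{j_t, \hat\vx_t} > 0$ keeps $Q$ entrywise nonnegative and each row summing to one. For each affected agent $j$, let $\vx^*$ be her most preferred bundle whose share changed; an argument analogous to the one used in the proof of \Cref{thm:mpslexopt} shows that $\vx^*$ must appear as a left component of some tuple associated with $j$, for otherwise $\vx^*$ would only lose share through right components, and the corresponding left bundle would be strictly preferred to $\vx^*$ by $j$ and have its share increased, contradicting the maximality of $\vx^*$. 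Thus $q_{j, \vx^*} > p_{j, \vx^*}$ while every bundle strictly preferred to $\vx^*$ is unchanged, giving $Q_j \ld{j} P_j$ and hence $Q \ld{} P$.

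The principal obstacle is the existence of nontrivial balanced weights: the weak ``left $\Rightarrow$ right'' property of a generalized cycle does not guarantee that the uniform choice $w_t \equiv 1$ balances, and indeed one can write down cycles for which no $w \ge 0$ on $C$ alone yields balance. The technical heart of the proof therefore lies in either extracting an appropriate sub-cycle, enlarging $C$ within $\gcycle(P, R)$ to close off unbalanced items, or ruling out the Farkas alternative -- a potential function on items that strictly increases across every tuple -- against the cyclic structure imposed by the generalized cycle condition.
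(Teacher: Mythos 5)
Part (1) is unobjectionable: the paper likewise attributes the argument to Wang et al.\ (it reproduces a self-contained proof in its appendix showing that any assignment that is not \sdopta{} admits a generalized cycle), so invoking that result is consistent with what the paper does.

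Part (2) has a genuine gap, and you have flagged it yourself without closing it. Your construction routes every compensating share along a tuple of $C$, which forces the balance condition $\sum_{t\,:\,o\in\vx_t}w_t=\sum_{t\,:\,o\in\hat\vx_t}w_t$ for every item $o$; but the generalized-cycle property only guarantees that an item occurring in a left component also occurs in some right component, not the converse, so an item occurring only in right components forces the weights of every tuple containing it to zero, and this can cascade to $w\equiv 0$. You list three possible remedies (sub-cycle extraction, enlarging $C$ inside $\gcycle(P,R)$, a Farkas alternative) but carry out none of them, so the step you yourself call the ``technical heart'' is absent. The paper's proof avoids the balance requirement entirely by a looser construction: for each agent $j$ it fixes only the top-ranked tuple $(\vx,\hat\vx)$ among those witnessing $j$; for each item $o_k\in\vx$ the generalized-cycle condition supplies a tuple $(\vx_k,\hat\vx_k)\in C$ with $o_k\in\hat\vx_k$ and a donor holding positive share of $\hat\vx_k$; an $\epsilon$ share of each $\hat\vx_k$ is dismantled into item shares, only the $o_k$ pieces are used to assemble $\epsilon$ of $\vx$ for $j$, who surrenders $\epsilon$ of $\hat\vx$ in exchange, and all leftover item shares (the count is exactly $p$ shares of each type, so they recombine into whole bundles) are handed back to the donors as \emph{arbitrary} bundles. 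Since each affected agent only ever loses shares of right components of tuples witnessing her, all ranked strictly below her chosen $\vx$, and gains of arbitrary bundles never hurt lexicographic dominance, $Q\ld{}P$ follows with no flow balancing. A further, smaller worry in your write-up: identifying $\vx^*$ as the most preferred bundle ``whose share changed'' can misfire if the gains and losses on the corresponding left bundle exactly cancel; the paper's device of improving only one designated tuple per agent sidesteps this as well. As written, your proposal does not establish part (2) unless the postponed existence lemma for the weights is actually proved.
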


\begin{proof}
    (1) The proof is similar to the proof of Theorem 5, Claim (1) in \citet{Wang19:Multi}. A full proof is provided in \Cref{sec:prop:gc} for completeness.

    (2) Suppose by contradiction there is a \lexopta{} assignment $P$ which admits a generalized cycle $C$.
    For any agent $j$, let $(\vx,\hat\vx)\in C$ be the one of tuple she involves, and let $\vx$ be top ranked among bundles in all the tuples involved $j$.
    For every $o_k\in\vx$, we can find a tuple $(\vx_k,\hat\vx_k)\in C$ which satisfies that $o\in\hat\vx_k$ and $p_{\hat j,\hat\vx_k}>0$ for some $\hat j$ by definition.
    Then we extract shares of each $\hat\vx_k$ by a small enough value $\epsilon$, and we can have $\epsilon$ shares of $\vx$ with the shares of each $o_k$ from $\hat\vx_k$.
    And to keep the supply of bundle not beyond her demand, she should give out the same shares of $\hat\vx$.
    It is trivial that the shares of items in $\hat\vx$ and rest items in each $\hat\vx_k$ can be combined as bundles with no share left for single items, and we just assign them arbitrarily to agents who donate $\hat\vx_k$ to meet their demands.
    We note that $\epsilon$ is chosen to be small enough so that the shares of bundles above are not used up, and therefore they can be used for other agents.
    Let $Q$ be the new assignment after we do the step above for all the agents.
    In this way, for any agent $j$ with the chosen tuple $(\vx,\hat\vx)$, it gains shares of $\vx$ and maybe other bundles, with loss of shares of $\hat\vx$ and other bundles ranked behind $\vx$.
    It follows that $Q_{j,\vx}\ge P_{j,\vx}+\epsilon>P_{j,\vx}$ and for any $\vx'\succ_j\vx$, $Q_{j,\vx'}\ge P_{j,\vx'}$ because agent $j$ gains but not loses the shares of these bundles, which means $Q\ld{}P$, a contradiction.\qed
\end{proof}

\begin{remark}\label{rm:gcnne}
    The \ngc{} condition is not a necessary condition of \sdopt{}.
    Consider an \mtap{} with two agents where $\succ_1,\succ_2$ are the same as $1_F1_B\succ_11_F2_B\succ_12_F1_B\succ_12_F2_B$.
    We find that the assignment $P$ in \Cref{eg:mps} is \sdopta{} for this \mtap{} but admits a generalized cycle (not unique): $\{(1_F1_B,1_F2_B),\allowbreak(1_F2_B,2_F1_B),\allowbreak(2_F1_B,2_F2_B)\}$.
    The \ngc{} condition is also not a sufficient condition of \lexopt{}.
    With the $\succ_1,\succ_2$ above, the following assignment admits no generalized cycle, but is not \lexopta{}:
    \begin{center}
        \centering
        \begin{tabular}{|c|cccc|}
            \hline Agent & $1_F1_B$ & $1_F2_B$ & $2_F1_B$ & $2_F2_B$ \\\hline
            1,2 &  0    & 0.5   & 0.5   & 0 \\\hline
        \end{tabular}
    \end{center}
\end{remark}


In~\Cref{thm:eps}, we characterize the entire set of assigments that do not admit generalized cycles by the family of {\em eating} algorithms for \mtaps{} (Algorithm~\ref{alg:eps}), which is a natural extension of the family of eating algorithms introduced in~\citet{Bogomolnaia01:New} for the single type setting. Each eating algorithm is specified by a collection of exogenous {\em eating speed functions} $\omega=(\omega_j)_{j\le n}$. An eating speed function $\omega_j$ specifies the instantaneous rate at which agent $j$ consumes bundles, consisting of an item of each type, at each instant $t\in[0,1]$, such that the integral $\int_{t=0}^1\omega_j(t)$ is $1$. In an eating algorithm, in each round, each agent $j$ consumes her most preferred available bundle at the rate specified by her eating speed function $\omega_j$, until the supply of one of the items in one of the bundles being consumed is exhausted. Note that \mps{} is a special case of the family of eating algorithms, with $\omega_j(t)=1$ for all $t\in[0,1]$, and all $j\le n$.

\begin{algorithm}
	\begin{algorithmic}[1]
		\State {\bf Input:} An \mtap{} $(N,M)$ and a preference profile $R$.
		\State {\bf Parameters:} Eating speed functions $\omega=(\omega_j)_{j\le n}$.
		\State For every $o\in M$, $\s(o)\gets 1$. $M' \gets M$. $P\gets 0^{n\times|\md|}$. $t \gets 0$.
		\While{$M'\neq\emptyset$ and $t<1$}
		\State {\bf Identify top bundle} $\topb{j}{}$ for every agent $j\le n$.
		\parState{ {\bf Consume.}
			\begin{itemize}
				\item[5.1:] For any $o\in M'$, $\con{o}{}\gets|\{j\in N:o\in\topb{j}{}\}|$.
				\item[5.2:] $\prog{}\gets \min\{\prog{}|\sum_{j\in\con{o}{}}\int_t^{t+\prog{}}\omega_j=\s(o),o\in M'\}$.
				\item[5.3:] For each $j\le n$, $p_{j,\topb{j}{}}\gets p_{j,\topb{j}{}}+\int_t^{t+\prog{}}\omega_j$.
				\item[5.4:] For each $o\in M'$, $\s(o)\gets\s(o)-{\sum_{j\in\con{o}{}}\int_t^{t+\prog{}}\omega_j}$.
			\end{itemize}
		}\vspace{-1.5em}
		\State $M'\gets M'\setminus \{o\in M'|\s(o)=0\}$. $t\gets t+\prog{}$.
		\EndWhile
		\State \Return $P$
	\end{algorithmic}
	\caption{\label{alg:eps} Eating Algorithms}
\end{algorithm}

\begin{theorem}\label{thm:eps}
	An assignment satisfies \ngc{} if and only if the assignment is the output of an eating algorithm (Algorithm~\ref{alg:eps}).
\end{theorem}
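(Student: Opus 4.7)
The plan is to prove the two implications separately. The forward direction (any output of Algorithm~\ref{alg:eps} admits no generalized cycle) will be handled by a timestamp-based contradiction on the eating schedule, while the reverse direction (no generalized cycle implies realizability) will proceed by induction on $|M|$, constructing the eating speed profile phase by phase.

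For the forward direction, suppose $P$ is produced by Algorithm~\ref{alg:eps} under some speed profile $\omega$ and, for contradiction, admits a generalized cycle $C$. For each $(\vx,\hat\vx)\in C$ I would fix a witnessing agent $j_{\vx,\hat\vx}$ with $\vx\succ_{j_{\vx,\hat\vx}}\hat\vx$ and $p_{j_{\vx,\hat\vx},\hat\vx}>0$, and let $s(\vx,\hat\vx)$ be the earliest instant at which this agent eats $\hat\vx$ at a positive rate. Because $\hat\vx$ is her top available bundle at that instant, the strictly preferred $\vx$ must be unavailable, so some item $o\in\vx$ has been exhausted at an earlier time $\tau_o\le s(\vx,\hat\vx)$. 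By the defining property of a generalized cycle, this $o$ also lies in $\hat\vx'$ for some $(\vx',\hat\vx')\in C$; the corresponding witnessing agent consumes $\hat\vx'\ni o$ with positive share, and therefore must begin consuming $\hat\vx'$ strictly before $\tau_o$, giving $s(\vx',\hat\vx')<\tau_o\le s(\vx,\hat\vx)$. Iterating produces a strictly decreasing infinite sequence of times inside the finite set $C$, a contradiction.

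For the reverse direction, I would induct on $|M|$. The structural key is that the no-gc hypothesis, applied to the full set $C=\gcycle(P,R)$ (nonempty in the nontrivial case; otherwise each agent's sole positively-consumed bundle is her $\succ_j$-top and any constant profile realizes $P$), produces an item $o^*$ that appears in some left component but in no right component of any improvable tuple. Equivalently, every agent $j$ positively consuming any bundle containing $o^*$ must do so via her $\succ_j$-top bundle $\vx^j_0$; otherwise a strictly preferred bundle in her support would promote the consumed bundle to a right component of an improvable tuple. Letting $B=\{j:o^*\in\vx^j_0,\,p_{j,\vx^j_0}>0\}$, the supply constraint for $o^*$ forces $\sum_{j\in B}p_{j,\vx^j_0}=1$. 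Fixing any $t^*>0$, I would set $\omega_j(t)=p_{j,\vx^j_0}/t^*$ for $t\in[0,t^*]$ when $j\in B$ and $\omega_j(t)=0$ otherwise on this interval; a direct verification shows each $j\in B$ accumulates exactly $p_{j,\vx^j_0}$ of $\vx^j_0$ during this phase and that $o^*$ exhausts precisely at $t^*$, leaving a reduced state with items $M\setminus\{o^*\}$, suitably reduced residual supplies and demands, and residual target $P'$ obtained by zeroing out $p_{j,\vx^j_0}$ for $j\in B$.

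The main obstacle will be the sublemma that $P'$ still satisfies no-gc in the reduced setting, which is what unlocks the induction. I expect this to follow from a lifting argument: any generalized cycle of $P'$ consists of improvable tuples whose bundles avoid $o^*$, and these tuples should already form a generalized cycle of $P$, since our share reductions can only destroy rather than create improvable tuples and $o^*$ cannot appear on the right of the lifted cycle by the original choice of $o^*$. Once this is established, the induction hypothesis supplies speed functions $\omega'$ realizing $P'$ on the remaining horizon, and concatenating them with the first-phase rates produces the desired global $\omega$. A minor bookkeeping point is that the reduced problem has nonuniform residual demands; this can be absorbed by strengthening the induction statement to allow arbitrary demand profiles, or equivalently by rescaling each agent's time axis so that her residual demand is renormalized to one.
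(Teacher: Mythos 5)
Your proposal is correct in both directions and rests on the same combinatorial insight as the paper's proof, but the reverse direction is organized quite differently, so a comparison is worth recording. Your forward direction (infinite descent on the starting times $s(\vx,\hat\vx)$) is interchangeable with the paper's version, which picks the earliest-exhausted item among the left components of the cycle and derives the same contradiction in one step; both hinge on the observation that a positively consumed right component must begin being eaten strictly before the exhaustion of any item blocking the corresponding left component. For the reverse direction, the paper computes from $P$ and $R$ alone the full sequence $B^1,\dots,B^S$ of ``safe'' items (those appearing in no right component of an improvable tuple among currently available bundles), removes each whole batch $B^s$ during the $s$-th of $S$ equal-length phases, writes the speed functions down explicitly, and verifies by induction on $s$ that the algorithm reproduces $P$ --- never leaving the original instance. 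You instead peel off a single safe item $o^*$ per phase and recurse on the reduced instance, which forces you to (i) strengthen the induction statement to generalized instances with fractional supplies and nonuniform residual demands, and (ii) prove the preservation sublemma that the residual assignment still satisfies \ngc{}. You correctly identify both obligations, and your lifting argument for (ii) is sound: every improvable tuple of $P'$ over the surviving bundles is an improvable tuple of $P$, and the closure condition at exhausted items holds vacuously because no surviving bundle contains them. What your route buys is a modular reduction lemma and a cleaner recursive structure; what it costs is the heavier induction hypothesis that the paper's global construction avoids.

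Two points you should make explicit when writing this out. First, the ``direct verification'' that phase one runs undisturbed requires checking that no item is exhausted strictly before $t^*$: for any item $o'$ the amount consumed by time $t<t^*$ is at most $(t/t^*)\sum_{j\in B,\,o'\in\vx^j_0}p_{j,\vx^j_0}\le t/t^*<1$ by the unit-supply constraint, so no agent is forced to switch bundles mid-phase and each $j\in B$ really does accumulate exactly $p_{j,\vx^j_0}$. Second, the reduced item set must exclude \emph{every} item whose supply reaches zero at time $t^*$, not only $o^*$, since other items of the bundles $\vx^j_0$ may be exhausted simultaneously; this changes nothing substantive but is needed for the reduced bundle set to consist only of available bundles.
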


\begin{proof}

    $(\Leftarrow)$ The proof that an assignment $P$, which is an output of an eating algorithm satisfies \ngc{} is similar to the proof of \Cref{prop:gc}, Claim~\ref{prop:gcsdopt}.


    $(\Rightarrow)$ Let $R$ be an arbitrary preference profile, and let $P$ be an arbitrary assignment satisfying \ngc{} w.r.t. $R$. For convenience, we define some quantities to represent the state during the execution of a member of the family of eating algorithms at each round $s$. For ease of exposition, we use $s=0$ to represent the initial state before the start of execution. Let $M^0 = M$, $\md^0 = \md$. Let $B^s = \{ o \in M^{s-1} : \text{there are no } \vx,\allowbreak \hat \vx \in \md^{s-1} \text{ s.t. } o \in \hat \vx \text{ and } (\vx, \hat \vx) \in \gcycle(P) \}$, $M^s = M^{s-1} - B^s$ and $\md^s = \{ \vx \in \md : \text{for every } o \in \vx, o \in M^s \}$ be the available bundles in $M^s$. Let $S = \min\{s : M^s = \emptyset\}$.
    Let $\omega_j(t)$ be the eating rate of agent $j$ at time $t$, $N(\vx,\md^s)$ be the set of agents who prefer $\vx$ best in the available bundles $\md^s$ for any $\vx \in \md^s$. The following eating speed functions $\omega_j$ define the algorithm:
    \[
    \forall s \leq S, \frac{s-1}{S} \leq t \leq \frac{s}{S}, \omega_j(t)\overset{\text{def}}{=}
    \left\{
    \begin{aligned}
     S \times p_{j,\vx}, \qquad & \exists o \in \vx, o \in B^s \text{ and } j \in N(\vx, \md^{s-1}) \\
     0,  \qquad & \text{otherwise.}
    \end{aligned}
    \right.
    \]
    From the design of algorithm, we know that items in $B^s$ decide which bundles in $\md^{s-1}$ are consumed in round $s$, and these items are not consumed after the round $s$.
    We claim that the algorithm specified by the eating speed functions $\omega=(\omega_j)_{j\le n}$ above, outputs $P$ for the \mtap{} with preference profile $R$.
    Let $Q$ be the output of the eating algorithm and we prove $P = Q$ by proving with induction that for any $s\in[1,S]$, $o \in B^s$, $\vx \in \md^{s-1}$ s.t. $o \in \vx$, we prove for each $j \in N, p_{j,\vx} = q_{j,\vx}$. The base case where $s=1$ is omitted here, and we provide a proof sketch for the inductive step in the following.
    The full proof is in \Cref{sec:thm:eps}.



    \noindent{\bf Inductive step.}
    With the assumption that $p_{j,\vx} = q_{j,\vx}$ for any $o \in \bigcup_{k = 1}^s B^k$, $\vx \in \md$ s.t. $o \in \vx$ and any $j \in N$, we prove that: for any $o \in B^{s+1}$, $\vx \in \md$ s.t. $o \in \vx$ and any $j \in N$, $p_{j,\vx} = q_{j,\vx}$. If $\vx$ is not in $\md^s$, then there is an item $\hat o \in \bigcup_{k = 1}^s B^k$ s.t. $\hat o \in \vx$ and we have $p_{j,\vx} = q_{j,\vx}$ by the assumption.
    For $o\in B^{s+1}$ and $\vx\in\md^{s}$ s.t. $o\in\vx$, if $\vx$ is not most preferred in $\md^s$ by some agent $j$, then $p_{j,\vx}=0$, because if $p_{j,\vx}>0$, then let $\hat\vx$ be the most preferred bundle by $j$ in $\md^s$ and we have $(\hat\vx,\vx)\in\gcycle(P)$ where $o\in\vx$, a contradiction to the construction of $B^{s+1}$.
    Therefore we have the following equation by construction
    \[
    \sum_{o\in\vx,\vx\in\md^s}\sum_{j\in N(\vx,\md^s)}p_{j,\vx} + \sum_{o\in\vx,\vx\in \md\setminus\md^s}\sum_{j \in N}p_{j,\vx} = \sum_{o\in\vx,\vx\in\md}\sum_{j\in N}p_{j,\vx} = 1.
    \]
    That implies for any $o\in B^{s+1}$, $\vx \in \md^s$ s.t. $o \in \vx$, if $j$ is not in $N(\vx, \md^s)$, then $p_{j,\vx} = 0$.

    For any $o \in B^{s+1}$, $\vx \in \md^s$ s.t. $o \in \vx$ and $j \in N(\vx, \md^s)$, we prove that agent $j$ consumes exactly $p_{j,\vx}$ units of bundle $\vx$. By the assumption, $o$ remains available in $\frac{s}{S} \leq t \leq \frac{s+1}{S}$ and agent $j$ consumes $\vx$ in $\frac{s}{S} \leq t \leq \frac{s+1}{S}$. We know that $q_{j,\vx}=0$ by time $\frac{s}{S}$, and by time $\frac{s+1}{S}$, $q_{j,\vx} = \frac{1}{S} \times S \times p_{j,\vx} = p_{j, \vx}$. That means
    \begin{equation*}
        \begin{split}
            \sum_{o\in\vx,\vx\in\md^s}\sum_{j\in N(\vx,\md^s)}q_{j,\vx}  + \sum_{o\in\vx,\vx\in\md\setminus\md^s}\sum_{j \in N}q_{j,\vx} =
    \sum_{o\in\vx,\vx\in\md^s}\sum_{j\in N(\vx,\md^s)}p_{j,\vx} + \sum_{o\in\vx,\vx\in\md\setminus\md^s}\sum_{j \in N}p_{j,\vx} = 1
        \end{split}
    \end{equation*}
    That implies for any $o \in B^{s+1}$, $\vx \in \md^s$ s.t. $o \in \vx$,
    \begin{enumerate*}[label=(\roman*)]
    \item if $j \in N(\vx, \md^s)$, then $q_{j,\vx} = p_{j, \vx}$;
    \item if $j$ is not in $N(\vx, \md^s)$, then $q_{j,\vx} = 0 = p_{j, \vx}$.
    \end{enumerate*}
    Therefore, for any $o \in B^{s+1}$, $\vx \in \md^s$ s.t. $o \in \vx$ and any $j \in N$, $p_{j,\vx} = q_{j,\vx}$.
    \qed
\end{proof}

\subsection{\textbf{Fairness and Characterization}}
For CP-net preferences, Theorem~5 in~\citet{Wang19:Multi} showed that \mps{} satisfies \sdef{} for CP-net preferences.
Here CP-net determines the dependence among preferences of types, which also reflects the importance of each type.
Since the domain of CP-net preferences and strict linear preferences are not totally overlapping, we provide \Cref{prop:mps} as complement and the proof is in \Cref{sec:prop:mps}.

\begin{proposition}\label{prop:mps}
	\mps{} satisfies \sdef{} for \mtaps{}.
\end{proposition}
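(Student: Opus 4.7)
The plan is to fix two agents $j$ and $\hat j$ and an arbitrary bundle $\vx\in\md$, and to compare, as a function of time, the bundles the two agents are consuming during the execution of \mps{}. I aim to show $\sum_{\vy\in\ucs(\succ_j,\vx)}p_{\hat j,\vy}\le\sum_{\vy\in\ucs(\succ_j,\vx)}p_{j,\vy}$, which is exactly \sdef{}.

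The key observation is the following monotonicity lemma about \mps{}. At any instant $t$ during the execution, both agents eat at unit rate, and each one eats her top available bundle. I claim that whenever $\hat j$ is eating some bundle $\vy$ with $\vy\succeq_j\vx$ at time $t$, then $j$ is simultaneously eating some bundle $\vz$ with $\vz\succeq_j\vx$ as well. The reason is short: if $\hat j$ is eating $\vy$ at time $t$, then $\vy$ is available at time $t$, hence agent $j$'s top available bundle $\vz$ satisfies $\vz\succeq_j\vy\succeq_j\vx$, so $\vz\in\ucs(\succ_j,\vx)$. Here I use the fact that once a bundle becomes unavailable (some item in it is exhausted) it stays unavailable, so availability is monotone in time.

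Integrating the lemma over $t\in[0,1]$ finishes the proof. For each agent $k\in\{j,\hat j\}$, let $T_k$ denote the total measure of time during which $k$ is eating a bundle in $\ucs(\succ_j,\vx)$. Because agents eat at uniform unit rate, we have $T_k=\sum_{\vy\in\ucs(\succ_j,\vx)}p_{k,\vy}$. The lemma guarantees $\{t:\hat j \text{ eats from } \ucs(\succ_j,\vx) \text{ at } t\}\subseteq\{t:j \text{ eats from } \ucs(\succ_j,\vx) \text{ at } t\}$, so $T_{\hat j}\le T_j$, which gives the desired inequality. Since $\vx$ was arbitrary, $P_j\sd{j}P_{\hat j}$, and since $j,\hat j$ were arbitrary, $P$ is \sdefa{}.

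I do not anticipate a real obstacle; the only mild subtlety is being careful that ``availability is monotone'' holds as stated: an item once exhausted stays exhausted, so the set of available bundles $\md^{s}$ only shrinks across \mps{} rounds, which licenses the lemma across all instants $t$, including the boundary moments between rounds where top bundles may change. I would state this monotonicity explicitly at the start of the argument to make the lemma's derivation airtight.
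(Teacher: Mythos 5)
Your proof is correct: it establishes exactly the required inequality $\sum_{\vy\in\ucs(\succ_j,\vx)}p_{\hat j,\vy}\le\sum_{\vy\in\ucs(\succ_j,\vx)}p_{j,\vy}$ for every $\vx$, and the two facts it rests on --- that at every instant each agent eats her top \emph{available} bundle at unit rate, and that availability only shrinks over time --- both hold for \Cref{alg:mps}. The route is genuinely different from the paper's, though built on the same time--consumption correspondence. The paper argues by contradiction with explicit bookkeeping of clock times: it first records that the cumulative share $\sum_{\vx\in\ucs(\succ_j,\hat\vx)}p_{j,\vx}$ equals (or, for bundles $j$ skips, upper-bounds) the time by which $\hat\vx$ has become unavailable; then, assuming $j$ envies $\hat j$ at some $\hat\vx$, it isolates the bundle $\vx'$ least preferred by $\hat j$ among those in $\ucs(\succ_j,\hat\vx)$ with $p_{\hat j,\vx'}>0$ and squeezes the unavailability time $t'$ of $\vx'$ to get $\hat t\le t'\le t$, contradicting $t<\hat t$. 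You replace all of this with a single pointwise-in-time containment --- whenever $\hat j$ eats from $\ucs(\succ_j,\vx)$, the availability of her current bundle forces $j$'s current top available bundle into the same upper contour set --- and then integrate. This is more direct (no contradiction, no auxiliary bundle $\vx'$), it is the natural multi-type analogue of the classical envy-freeness argument for PS, and it generalizes verbatim to any eating algorithm in which all agents share a common eating-speed function. The only care needed, which you already flag, is that for $t<1$ every type retains positive supply so every agent always has an available bundle; this licenses the identity ``time spent eating from a set of bundles equals total share consumed from that set'' on all of $[0,1]$.
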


By \cite{Wang19:Multi}, MPS is \sdspa{} if all the agents share the same CP-net, which means their importance orders are identical.
We prove that under lexicographic preferences, \mps{} satisfies \sdsp{}, and in particular, the importance orders of agents can be different.

\begin{theorem}\label{thm:mpssp}
	\mps{} satisfies \sdsp{} for \mtaps{} with lexicographic preferences.
\end{theorem}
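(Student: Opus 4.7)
The plan is to mirror the inductive structure used in the proof of \Cref{thm:lpssp} for LPS, adapted to MPS by exploiting the type-wise decoupling that lexicographic preferences induce. Fix an agent $j$ with true lexicographic preference $\succ_j$ and an arbitrary misreport $\succ'_j$; let $P=\mps(R)$ and $Q=\mps(R')$ where $R'=(\succ'_j,\succ_{-j})$. Assuming $Q_j\sd{j}P_j$, we wish to conclude $Q_j=P_j$.

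I would open with two structural observations. First, under a lexicographic preference an agent's top available bundle at every moment is the tuple of her $\prefintype{j}{i}$-top available items in each type, independently of her importance order; hence misreporting the importance order alone cannot alter the MPS execution, and WLOG $\succ_j$ and $\succ'_j$ differ only in the within-type orders $\prefintype{j}{i}$. Second, the MPS dynamics within each type $i$ are identical to a standalone execution of single-type PS on $D_i$: at every instant the consumption rate of any $o\in D_i$ equals the number of agents for whom $o$ is the $\prefintype{\cdot}{i}$-top available item in $D_i$, which depends only on the type-$i$ preference profile. Consequently, for each type $i$ and $o\in D_i$, the time interval $I^i_j(o)\subseteq[0,1]$ during which $j$ consumes $o$ under $P$ coincides with the corresponding interval in the single-type PS on $D_i$; write $\tilde I^i_j(o)$ for the analogous interval under $Q$. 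The marginals $\srd{P}{i}_j,\srd{Q}{i}_j$ match their single-type PS counterparts, and the joint masses are $p_{j,(o_1,\dots,o_p)}=|\bigcap_{i\le p}I^i_j(o_i)|$ and $q_{j,(o_1,\dots,o_p)}=|\bigcap_{i\le p}\tilde I^i_j(o_i)|$.

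The core of the argument is an induction on $k\le p$ showing $\tilde I^i_j(o)=I^i_j(o)$ as subsets of $[0,1]$ for every $i\le k$ and $o\in D_i$. For the base case $k=1$, applying $Q_j\sd{j}P_j$ at $\vx=(o_1,w_2,\dots,w_p)$ with $w_i$ the $\prefintype{j}{i}$-worst item collapses $\ucs(\succ_j,\vx)$ to $\{\hat\vx:\type{1}{\hat\vx}\in\ucs(\prefintype{j}{1},o_1)\}$, yielding type-$1$ marginal sd-dominance, and PS sd-strategyproofness \cite{Bogomolnaia01:New} on $D_1$ then forces $\srd{Q}{1}_j=\srd{P}{1}_j$. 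For the step at level $k\ge 2$, the IH implies equalities of joint prefix marginals $F_m^Q\equiv F_m^P$ for $m<k$, where $F_m^P(o_1,\dots,o_m):=\sum_{o_{m+1},\dots,o_p}p_{j,(o_1,\dots,o_p)}=|I^1_j(o_1)\cap\cdots\cap I^m_j(o_m)|$ depends only on the intervals in types $1,\dots,m$. Evaluating sd-dominance at $\vx=(o_1,\dots,o_{k-1},o_k,w_{k+1},\dots,w_p)$ and using the IH to cancel all strictly-lex-better prefix contributions of length $<k$ leaves
\[
\sum_{o'_k\in\ucs(\prefintype{j}{k},o_k)} F_k^Q(o_1,\dots,o_{k-1},o'_k) \;\ge\; \sum_{o'_k\in\ucs(\prefintype{j}{k},o_k)} F_k^P(o_1,\dots,o_{k-1},o'_k),
\]
and summing over all prefixes $(o_1,\dots,o_{k-1})$ yields type-$k$ marginal sd-dominance, hence $\srd{Q}{k}_j=\srd{P}{k}_j$ by PS sd-sp on $D_k$.

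The hard step, and where I expect the main difficulty, is upgrading this marginal equality to the trajectory equality $\tilde I^k_j=I^k_j$ needed to close the induction. I would establish a subsidiary lemma: in single-type PS, if two reports by agent $j$ produce identical marginals for her, then they must produce identical consumption trajectories, because PS is deterministic and the ``consume-top-available'' rule pins down the start time and duration of each of $j$'s consumption phases once her marginal is fixed, and any discrepancy in the reported ordering can only involve items $j$ never consumes and is therefore harmless. Once this lemma is granted, the induction closes at $k=p$: the intervals across all types coincide, so the intersection measures $p_{j,\vx}$ and $q_{j,\vx}$ agree for every bundle, giving $Q_j=P_j$ and contradicting $Q_j\neq P_j$. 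The delicate points in proving the lemma are handling simultaneous exhaustions across types and the harmless freedom in orderings over items $j$ never consumes.
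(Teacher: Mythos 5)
Your reduction to per-type single-type PS and the induction along agent $j$'s importance order track the paper's proof closely: your opening observation (under lexicographic preferences the top available bundle is the tuple of per-type top available items, independently of the importance order) is in fact a cleaner route to the paper's \Cref{cl:mpstops} that $(\mps{}(R))^i=PS(R^i)$, and your base case plus the prefix-cancellation in the inductive step correspond to the paper's decomposition (\ref{sdspi0}) together with \Cref{cl:eqwtype}; invoking the sd-weak-strategyproofness of single-type PS to turn per-type marginal dominance into per-type marginal equality is exactly the paper's step. Up to the conclusion $\srd{Q}{i}_j=\srd{P}{i}_j$ for every type $i$, the argument is sound.

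The gap is precisely where you flag it, and it is not a removable formality. Your induction needs the \emph{interval} equality $\tilde I^i_j(o)=I^i_j(o)$ — not just marginal equality — both to feed the joint prefix marginals $F_m^Q=F_m^P$ into the cancellation at the next level and to conclude $q_{j,\vx}=p_{j,\vx}$ at the end; the subsidiary lemma you invoke, that in single-type PS two reports by $j$ with identical marginals induce identical consumption trajectories, is asserted rather than proved, and the justification offered is circular. The consume-top-available rule pins down $j$'s phases only once the items' exhaustion times are known, but those exhaustion times depend on the whole execution, including $j$'s report; with identical marginals the \emph{durations} of $j$'s phases agree, but their order, hence their placement in $[0,1]$, is exactly what is in question. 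Equivalently, your lemma amounts to the claim that any two reports yielding the same marginal must order $j$'s support identically, which requires a genuine PS-specific argument (e.g., locating the first time the two executions diverge and exploiting that the set of agents eating a given item only grows until it is exhausted). The paper avoids this entirely: \Cref{cl:eqwtype} shows that, because under the truthful report agent $j$ consumes bundles of each upper contour set $\ucs(\succ_j,\hat\vx)$ greedily until none is available, the truthful upper-contour sums are maximal given the per-type marginals; combined with $Q\sd{j}P$ this forces equality of every upper-contour sum, and hence $Q_j=P_j$, with no statement about trajectories. Either prove your trajectory-rigidity lemma or substitute this upper-contour-sum argument; as written, the proof is incomplete at its acknowledged hard step.
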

\begin{proof}
	Consider an arbitrary \mtap{} $(N,M)$ and an arbitrary lexicographic preference profile $R$. Suppose for the sake of contradiction that an agent $j$ can obtain a better allocation by misreporting her preference as another lexicographic preference $\succ'_j$. Throughout, we use $P=\mps{}(R)$ and $Q=\mps{}(R')$, where $R'=(\succ'_j,\succ_{-j})$. By assumption of beneficial misreporting, we have $Q\sd{j}P$. We will show that $Q_j =P_j$.
	
	We claim that $\srd{Q}{i}_j=\srd{P}{i}_j$ for every type $i \leq p$, where $\srd{P}{i}_j$ is agent $j$'s single type allocations of type $i$ at $P$ and $\srd{Q}{i}_j$ the same at $Q$.
	W.l.o.g. let the types be labeled such that $1 \impord{j} \cdots \impord{j} p$.
	For convenience, for any type $i\le p$, we define $\itemwt{i}$ to be an item $o$ of type $i$, and $(\itemwt{i}*)$ to be an arbitrary bundle containing $\itemwt{i}$.

	\begin{clm}\label{cl:mpstops}
		Under lexicographic preferences, $(\mps{}(R))^i=PS(R^i)$, where ${R^i=(\prefintype{j}{i})_{j\leq n}}$.
	\end{clm}
	The claim is obtained by comparing the execution of \mps{} with PS in each type.
	The full proof of the claim is in \Cref{sec:cl:mpstops}.
	Since PS satisfies \sdsp{}~\cite{Bogomolnaia01:New}, we deduce from {\bf\Cref{cl:mpstops}} that for each type $i \leq p$, $\srd{Q}{i}\sd{j}\srd{P}{i} \Rightarrow \srd{Q}{i}_j=\srd{P}{i}_j$.
	Therefore we can prove $\srd{Q}{i}_j=\srd{P}{i}_j$ by showing $\srd{Q}{i}\sd{j}\srd{P}{i}$ instead, by induction.

	

	First we assume $\srd{Q}{1} \nsd{j} \srd{P}{1}$.
	That means there exists $\hitemwt{1}$ such that $\sum_{o \in \ucs(\prefintype{j}{i},\hitemwt{1})}\srd{p}{1}_{j,o} > \sum_{o \in \ucs({\prefintype{j}{i},\hitemwt{1}})}{\srd{q}{1}_{j,o}}$.
    Let $\hat\vx$ be the least preferred bundle containing $\hitemwt{1}$.
    It follows that $$\sum_{o \in \ucs(\prefintype{j}{i},\hitemwt{1})}\srd{p}{1}_{j,o}=
    \sum_{\vx\in\ucs(\succ_j,\hat\vx)}{p_{j,\vx}}>
	\sum_{\vx\in\ucs(\succ_j,\hat\vx)}{q_{j,\vx}}=
	\sum_{o \in \ucs({\prefintype{j}{i},\hitemwt{1}})}{\srd{q}{1}_{j,o}},$$
    which is a contradiction to our assumption. Thus $\srd{Q}{1}_j=\srd{P}{1}_j$.
	
	Next we prove that for any type $i \leq p$ when $\srd{Q}{\hat i}_j = \srd{P}{\hat i}_j$ for every $\hat i < i$.
	For arbitrary $\hitemwt{1},\hitemwt{2},\dots,\hitemwt{i}$, let $\hat\vx=(\hitemwt{1},\dots,\hitemwt{i}*)$ be the least preferred bundle containing them.
    Because $Q\sd{j}P$, we have $\sum_{{\vx}\in \ucs(\succ_j,\hat\vx)}{p_{j,{\vx}}} \leq
    \sum_{{\vx}\in \ucs(\succ_j,\hat\vx)}{q_{j,{\vx}}}$, and we take apart the sum like:

	\begin{equation}\label{sdspi0}
	\begin{split}
    &\sum_{{\vx}\in\{(\itemwt{1}*)|\itemwt{1} \prefintype{j}{1} \hitemwt{1}\}}{p_{j,{\vx}}} +
    \sum_{{\vx}\in\{(\hitemwt{1},\itemwt{2}*)|\itemwt{2} \prefintype{j}{2} \hitemwt{2}\}}{p_{j,{\vx}}}\\
   &+\dots+\sum_{{\vx}\in\{(\hitemwt{1},\hitemwt{2},\dots,\itemwt{i-1}*)|\itemwt{i-1} \prefintype{j}{i-1} \hitemwt{i-1}\}}{p_{j,{\vx}}}+\sum_{{\vx}\in S_i}{p_{j,{\vx}}}\\
	\end{split}
	\end{equation}
   Here we use $S_i$ to refer to the set $\{(\hitemwt{1},\dots,\hitemwt{i-1},\itemwt{i}*)|\itemwt{i} \in \ucs(\prefintype{j}{i},\hitemwt{i})\}$.
   {\bf\Cref{cl:eqwtype}} follows from the observation that agents consume bundles until unavailable.
   The full proof of the claim is in \Cref{sec:cl:eqwtype}.

    \begin{clm}\label{cl:eqwtype}
        For $P=\mps(R)$, given a fixed $i'$ and $\srd{Q}{\hat i}_j=\srd{P}{\hat i}_j$ for any ${\hat i} \le i'$, if {$Q\sd{j}P$}, then for $S_{\hat i}=\{(\hitemwt{1},\dots,\hitemwt{{\hat i}-1},\itemwt{\hat i}*)|\itemwt{\hat i}\in\ucs(\prefintype{j}{\hat{i}},\hitemwt{\hat i})\},
        \sum_{\vx\in S_{\hat i}}{p_{j,{\vx}}}=\sum_{\vx\in S_{\hat i}}{q_{j,{\vx}}}$.
	\end{clm}
	
    With (\ref{sdspi0}) and {\bf\Cref{cl:eqwtype}} we obtain $\sum_{{\vx}\in S_i}{p_{j,{\vx}}} \leq
    \sum_{{\vx}\in S_i}{q_{j,{\vx}}}$. 
	We can sum up the inequality by $\hitemwt{1},\hitemwt{2},\dots,\hitemwt{i-1}$ like
	\begin{equation*}
	    \sum_{\hitemwt{1}} \sum_{\hitemwt{2}} \dots \sum_{\hitemwt{i-1}} \sum_{{\vx}\in S_i}{p_{j,\vx}} \leq \sum_{\hitemwt{1}} \sum_{\hitemwt{2}} \dots \sum_{\hitemwt{i-1}} \sum_{{\vx}\in S_i}{p_{j,{\vx}}}.
	\end{equation*}
	Then we have
	\begin{displaymath}
	\sum_{{\vx}\in\{(\itemwt{i}*)|\itemwt{i} \in \ucs(\prefintype{j}{i},\hitemwt{i})\}}{p_{j,{\vx}}} \leq \sum_{{\vx}\in\{(\itemwt{i}*)|\itemwt{i} \in \ucs(\prefintype{j}{i},\hitemwt{i})\}}{q_{j,{\vx}}}
	\end{displaymath}
	It is equal to $\sum_{\itemwt{i} \in \ucs(\prefintype{j}{i},\hitemwt{i})}{\srd{p}{i}_{j,\itemwt{i}}} \leq \sum_{\itemwt{i} \in \ucs(\prefintype{j}{i},\hitemwt{i})}{\srd{q}{i}_{j,\itemwt{i}}}$, which means $\srd{Q}{i} \sd{j} \srd{P}{i}$.
	
	We have proved that $\srd{Q}{i}_j=\srd{P}{i}_j$ for every $i \leq p$.
	By {\bf\Cref{cl:eqwtype}}, agent $j$ has the same share over each upper contour set and therefore $Q_j=P_j$.
    \qed
\end{proof}

In \Cref{thm:char} we provide two characterizations for \mps{}.
The \leximin{} reflects the egalitarian nature of the PS mechanism, which means that the mechanism always tries to balance the shares over top ranked items/bundles among all the agents.
\mps{} retains this nature for \mtaps{} with strict linear preferences.
The \iof{} is extended from \cite{Hashimoto14:Two}.
We note that \iof{} involves the cumulative shares regarding items, different from the version in \cite{Wang19:Multi} which only involves the share of each bundle.

\begin{theorem}\label{thm:char}
    \mps{} is the unique mechanism which \begin{enumerate*}[label=(\Roman*)] \item satisfies \leximin{}, and \item satisfies \iof{}.\end{enumerate*}
\end{theorem}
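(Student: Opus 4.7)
The plan is to prove the two characterizations separately, in each case first showing that \mps{} satisfies the property, and then that any mechanism satisfying the property must output exactly $\mps(R)$ on every profile $R$. The natural state variable in both directions is the time parameter $t$ of the eating process: if agent $j$ consumes bundle $\vx$ during the interval $[s_{j,\vx},t_{j,\vx}]$ in the execution of \mps{}, then $p_{j,\vx}=t_{j,\vx}-s_{j,\vx}$ and, crucially, $\sum_{\hat\vx\in\ucs(\succ_j,\vx)}p_{j,\hat\vx}=t_{j,\vx}$. Thus cumulative shares coincide with consumption times, which is the bridge to both properties.

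For part (II), I would first verify that \mps{} satisfies \iof{} by a ``blocking item'' argument. Whenever $p_{j,\vx}>0$, the reason \mps{} stopped agent $j$ from consuming $\vx$ further is that some item $o\in\vx$ was exhausted at time $t_{j,\vx}$. I claim this $o$ is the witness required by \iof{}: for any other agent $k$ and bundle $\hat\vx\ni o$ with $p_{k,\hat\vx}>0$, consumption of $\hat\vx$ must have finished no later than $t_{j,\vx}$ (since $o$ is gone), so $\sum_{\vx'\in\ucs(\succ_k,\hat\vx)}p_{k,\vx'}=t_{k,\hat\vx}\le t_{j,\vx}=\sum_{\vx'\in\ucs(\succ_j,\vx)}p_{j,\vx'}$. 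For uniqueness, given any \iofa{} assignment $Q$, I would proceed by induction on the rounds of \mps{}. In the first round, every agent's top bundle is determined by $\succ_j$, and the \iof{} condition pins down when each agent's consumption of her top bundle must stop (at the earliest exhaustion time, which matches \mps{}'s $\rho$-computation); any deviation would either leave some item with supply remaining while a later bundle receives positive share, violating \iof{}, or else force a strict inequality that \mps{} saturates. Iterating over the rounds $\ex{s}$ of \mps{} shows $Q=\mps(R)$.

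For part (I), I would show that \mps{} satisfies \lexopt{} as a warm-up (noting \Cref{thm:mpslexopt} already gives this) and then strengthen to \leximin{}. For the vector $\vu^P=(u^P_{j,\vx})_{j,\vx}$, observe that $u^P_{j,\vx}$ equals either $t_{j,\vx}$ (if $p_{j,\vx}>0$) or the time at which $\vx$ became unavailable (if agent $j$ never consumed $\vx$ because an item in it was already exhausted when $\vx$ reached the top of her remaining-preferred list). The key egalitarian observation is that \mps{} processes rounds in order of earliest exhaustion; the smallest entries of $\vu^P$ correspond exactly to the agents/bundles involved in the earliest round, with common value equal to the first exhaustion time $\rho^1$. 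I would argue, by considering any competitor assignment $Q$, that if the smallest entry of $\vu^Q$ exceeds $\rho^1$ then the total consumption needed in round~$1$ exceeds supply, a contradiction; and if it equals $\rho^1$, one can pass to the sub-problem $(N,M\setminus \ex{1},R|_{\md^1})$ and induct. For uniqueness, leximin-maximality forces the vector $(\vu^Q)^*$ to coincide with $(\vu^P)^*$, and the inductive argument identifies the entries themselves, giving $Q=P$.

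The main obstacle will be the uniqueness direction of part (I): the leximin relation compares sorted vectors, so equality of sorted vectors does not immediately identify which agent--bundle pair receives which value. To overcome this I plan to exploit the structural fact (from~\Cref{thm:eps}) that any \sdopta{} assignment, and in particular any leximin-optimal one, is the output of some eating algorithm; combined with \iof{}-type time-stamping (which any leximin-optimal assignment inherits, because shifting share from an earlier exhaustion item to a later bundle lex-improves the sorted vector), this will force the eating speed functions to equal the uniform $\omega_j\equiv 1$ of \mps{}. A small subtlety I anticipate is handling ties when several items are exhausted simultaneously in a round of \mps{}; I will treat such a round as a single block and prove the characterization one block at a time rather than one item at a time.
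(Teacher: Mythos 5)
Your overall architecture is the paper's: interpret cumulative shares as consumption times in the eating process, use an item exhausted at time $t_{j,\vx}$ as the \iof{} witness, and for \leximin{} run a block-by-block induction on the sorted values $u^*_k$, deriving a contradiction from the unit supply of the items exhausted at each such time. The satisfaction half of (II) is essentially the paper's argument verbatim, and your round-by-round induction for \iof{}-uniqueness is a workable variant of the paper's argument, which instead picks the agent--bundle pair where $P$ and $Q$ first disagree in time and exhibits an \iof{} violation at the agent who gains share later.

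The genuine gap is in your plan for the uniqueness direction of (I). You propose to resolve the sorted-vector identification problem by invoking \Cref{thm:eps} through the chain ``leximin-optimal $\Rightarrow$ \sdopta{} $\Rightarrow$ output of an eating algorithm.'' The second implication is false: \Cref{thm:eps} characterizes the assignments satisfying \ngc{}, and \Cref{rm:gcnne} exhibits an \sdopta{} assignment that admits a generalized cycle, so \sdopt{} does not place an assignment inside the eating-algorithm family. (Only the stronger \lexopt{} implies \ngc{} by \Cref{prop:gc}, and \leximin{} of the sorted vector $\vu^*$ is a different relation from \lexopt{} of individual allocations, so you would owe a separate proof that leximin-optimality implies \ngc{}.) The detour is also unnecessary: the entries can be identified directly. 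Within the block $S_k$ of indices attaining $u^*_k$, every $(j,\vx)\in S_k$ has $p_{j,\vx}\le q_{j,\vx}$ (by your two cases), and each item $o$ exhausted at time $u^*_k$ satisfies $\sum_{(j',\vx')\in\bigcup_{l\le k}S_l,\,o\in\vx'}p_{j',\vx'}=1$, so any strict inequality would make $Q$ allocate more than one unit of $o$; conversely, if $p_{j,\vx}<q_{j,\vx}$ for some $(j,\vx)\in S_k$, conservation of that item forces some $(\hat j,\hat\vx)\in S_k$ with $p_{\hat j,\hat\vx}>q_{\hat j,\hat\vx}$, and since $p_{\hat j,\hat\vx}>0$ that index is one where $\vx$ heads its upper contour set, whence $u_{\hat j,\hat\vx}=u^*_k>v_{\hat j,\hat\vx}$, contradicting the assumed leximin improvement of $\vv$ over $\vu$. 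Replacing your eating-algorithm step with this supply argument closes the proof.
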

\begin{proof}
    \noindent{\bf (\uppercase\expandafter{\romannumeral1}) \leximin{}:} Let $R$ be any profile of strict linear preferences.
    Let $P=\mps{}(R)$, $\vu=(u_{j,\vx})_{j\le n, \vx\in\md}$.
    For $j\le n$, and bundle $\vx\in\md$, let $u_{j,\vx}=\sum_{\hat\vx\in\ucs(\succ_j,\vx)}{p_{j,\hat\vx}}$. Suppose for the sake of contradiction that $P$ is not \leximina{}. Then, there exists a \ram{} assignment $Q$, such that for $\vv=(v_{j,\vx})_{j\le n, \vx\in\md}$, where $v_{j,\vx}=\sum_{\hat\vx\in\ucs(\succ_j,\vx)}{q_{j,\hat\vx}}$, it holds that $(\vv,\vu)\in L$, where $L$ is the leximin relation (\Cref{dfn:leximin}).
    Let $\vu^*$ constructed by sorting the components of $\vu$ in ascending order, and $\vv^*$ be defined on $\vv$ similarly.

    We prove by induction that $\vv^*=\vu^*$, i.e. $v^*_k=u^*_k$ for all $k$, and the corresponding assignments $P$ and $Q$ are identical.

    \noindent{\bf Base case.} As the basis of induction, we prove that $u^*_1=v^*_1$.
    Suppose for the sake of contradiction that $u^*_1<v^*_1$.
    We use the tuple $(j,\vx)$ as the index of the component $u_{j,\vx}$, for every agent $j\le n$, and bundle $\vx\in\md$. Let $S_1$ be the set of indices, such that for each $(j,\vx)\in S_1$, $u_{j,\vx} = u^*_1$.

    We consider the corresponding elements in $\vv$ indicated by the set $S_1$.
    We note that for each $(j,\vx)\in S_1$, there are two possible cases:
    \begin{enumerate}[label={\em Case \arabic*}.,wide,labelindent=0pt,topsep=0pt]%
        \item\label{lexicase1} $\vx$ is the most preferred bundle w.r.t. $\succ_{j}$. Then, $p_{j,\vx}=\allowbreak\sum_{\vx'\in\ucs({j},{\vx})}{p_{j,\vx'}}\allowbreak=\allowbreak u^*_1$, and $u^*_1<v^*_1$, implies that $p_{j,\vx}=u^*_1<v^*_1\leq\sum_{\vx'\in\ucs({j},{\vx})}{q_{j,\vx'}}=q_{j,\vx}$, 
        \item\label{lexicase2} $\vx$ is {\em not} the most preferred bundle w.r.t. $\succ_{j}$. Then, for the most preferred bundle $\hat\vx$ w.r.t. $\succ_{j}$, there must exist $p_{j,\hat\vx}=u^*_1$ as in {\em Case 1}, since $u^*_1\allowbreak\le\sum_{\vx'\in\ucs({j},{\hat\vx})}{p_{j,\vx'}}\allowbreak\le\sum_{\vx'\in\ucs({j},{\vx})}{p_{j,\vx'}}\allowbreak=u^*_1$. This implies that $p_{j,\vx}=0\leq q_{j,\vx}$.
    \end{enumerate}

    From the execution of \mps{}, $\vx$ must be unavailable at time $u^*_1$ because some items in it are exhausted at that time.
    Let $B_1$ denote the set of the items exhausted at time $u^*_1$.
    For any $o\in B_1$, we have that $\sum_{(j',\vx')\in S_1,o\in\vx'}{p_{j',\vx'}}=1$.
    With the inequality in {\em Case 1} and {\em Case 2} we have
    $\sum_{(j',\vx')\in S_1,o\in\vx'}{q_{j',\vx'}}>\sum_{(j',\vx)\in S_1,o\in\vx'}{p_{j',\vx'}}=1$ for some $o\in B_1$, which is a contradiction.

    Having shown that $u^*_{1}=v^*_{1}$, we claim that the shares indicated by $S_1$ are equal in $P$ and $Q$, i.e. for any $(j,\vx)\in S_{1}$, $p_{j,\vx}=q_{j,\vx}$.
    Suppose for the sake of contradiction, there exists a tuple $(j,\vx)\in S_{1}$ such that $p_{j,\vx}<q_{j,\vx}$, then for any $o\in\vx$ such that $o\in B_1$, there must exist $(\hat j,\hat\vx)\in S_{1}$ such that $\hat\vx$ contains $o$ and $p_{\hat j,\hat\vx}>q_{\hat j,\hat\vx}\geq 0$, because for any $o\in B_1$ $\sum_{(j',\vx')\in S_1,o\in\vx'}{p_{j',\vx'}}=1$.
    By {\em Case 1} we know that $\hat\vx$ is most preferred by agent $\hat j$, and therefore $u^*_1=\sum_{\vx'\in\ucs(\succ_{\hat j},\hat\vx)}p_{\hat  j,\vx'}>\sum_{\vx'\in\ucs(\succ_{\hat j},\hat\vx)}q_{{\hat j},\vx'}$ which means $(\vu,\vv)\in L$, a contradiction to the assumption.
    The claim also means that for all $k\le |S_1|$, $u^*_k = u^*_1 = v^*_k$.

    \noindent{\bf Inductive step.} Next, we prove by induction that for any $k>1$, $u^*_k=v^*_k$ given $u^*_k>u^*_{k-1}$, and $u^*_{l}=v^*_{l}$ for $l<k$, and $p_{j,\vx}=q_{j,\vx}$ for any $(j,\vx)\in S_{l},l<k$.
    Suppose for the sake of contradiction that $u^*_k<v^*_k$.
    Let $S_k$ be the set of indices, such that for each $(j,\vx)$ in $S_k$, $u_{j,\vx} = u^*_k$.
    Let $S_k$ be the set of tuples which correspond to $u^*_k$.
    For $(j,\vx)\in S_k$, let $\hat\vx$ be the least preferred bundle in $\{\vx'|\vx'\succ_{j}\vx\}$ w.r.t. $\succ_j$, and its corresponding index is $(j,\hat\vx)$.
    Then, we have $p_{j,\vx}=u^*_k-u_{j,\hat\vx}$.
    Let $u^*_{l}=u_{j,\hat\vx}$.
    By our intial assumption that $(\vv,\vu)\in L$, we claim that for all $(j,\vx)\in S_k$, $p_{j,\vx}\leq q_{j,\vx}$ and strict for some indices, because
    \begin{enumerate}[label={\em Case \arabic*'}.,wide,labelindent=0pt,topsep=0pt]
        \item if $(j,\hat\vx)\notin S_k$, then we have that $l<k$ and $u^*_k>u^*_{l}=v^*_{l}$, and therefore $p_{j,\vx}=u^*_k-u^*_{l}< v^*_k-v^*_{l}=q_{j,\vx}$,
        \item if $(j,\hat\vx)\in S_k$, then $u^*_{l}=u^*_{k}$ and $p_{j,\vx}=0\leq q_{j,\vx}$.
    \end{enumerate}
    W.l.o.g let $\vx$ satisfy $p_{j,\vx}< q_{j,\vx}$.
    We know $\vx$ is unavailable at time $u^*_k$ in the execution of \mps{} because of exhausted items in it.
    Let $B_k$ be the set of items exhausted at time $u^*_k$.
    For any $o\in B_k$, we have
    \begin{equation}\label{char:3}
    \begin{split}
        \sum_{(j',\vx')\in \cup_{l<k}S_{l},o\in\vx'}{p_{j',\vx'}}&=\sum_{(j',\vx')\in \cup_{l<k}S_{l},o\in\vx'}{q_{j',\vx'}}\\
        \sum_{(j',\vx')\in \cup_{l<k}{S_{l}},o\in\vx'}{p_{j',\vx'}}&+
        \sum_{(j',\vx')\in S_k,o\in\vx'}{p_{j',\vx'}}=1.
    \end{split}
    \end{equation}
    Thus we have $\sum_{(j',\vx')\in \cup_{l\le k}S_{l},o\in\vx}{q_{j',\vx'}}>\allowbreak\sum_{(j',\vx')\in \cup_{l\le k}S_{l},o\in\vx}{p_{j',\vx'}}=1$ for some $o\in B_k$, which is a contradiction.

    Then we show if the shares indicated by $S_{k}$ in $P$ and $Q$ are equal.
    With $u^*_{k}=v^*_{k}$, we claim that
    \begin{equation}
        \text{for any }(j,\vx)\in S_{k}, p_{j,\vx}=q_{j,\vx}.\label{char:4}
    \end{equation}
    If a tuple $(j,\vx)\in S_k$ satisfies $p_{j,\vx}<q_{j,\vx}$, there must exist some $(\hat j,\hat\vx)\in S_k$ such that $p_{\hat j,\hat\vx}>q_{\hat j,\hat\vx}\ge 0$ by (\ref{char:3}) since $p_{j',\vx'}=q_{j',\vx'}$ for $(j',\vx')\in S_{l},l<k$.
    Because $p_{\hat j,\hat\vx}>0$, by {\em Case 1'} we know that $u^*_k=u_{\hat j,\hat\vx}>v_{\hat j,\hat\vx}$ which means $(\vu,\vv)\in L$, a contradiction to the assumption.
    By the claim we also know that for all $k\le l< k+|S_k|$, $u^*_l = u^*_k$.

    By induction, we have $u^*_k=v^*_k$ for all $k\leq n|\md|$ and therefore $\vu=\vv$.
    Besides, since $\bigcup_{k<=|\vu|}{S_k}=N\times\md$, from (\ref{char:4}) we have that $p_{j,\vx}=q_{j,\vx}$ for any $j\in N,\vx \in \md$.

\vspace{1em}
    \noindent{\bf (\uppercase\expandafter{\romannumeral2}) \iof{}:} We use the relation of time and consumption to make the proof and show the uniqueness.
    Given any \mtap{}(M,N) and preference profile $R$, let $P=\mps(R)$ in the following proof.

    \noindent{\bf Satisfaction:}
    For an arbitrary agent $j$ and any bundle $\vx$, let $t=\sum_{\vx'\in\ucs(\succ_j,\vx)}p_{j,\vx}$.
    We assume by contradiction that there exists an agent $k$ who gets shares of bundle $\hat\vx$ containing $o$ i.e. $p_{k,\hat\vx}>0$, and $\hat t=\sum_{\vx'\in\ucs(\succ_k,\hat\vx)}p_{k,\vx}>t$.
    From the relation of time and consumption, we know that at time $t$, $\vx$ is unavailable and therefore the supply of some item $o\in\vx$ is exhausted.
    We note that $t$ is not necessary to be the exact time when $\vx$ become unavailable.
    We also know that at time $\hat t$, $\hat\vx$ become unavailable, and therefore for $t<t'<\hat t$, $\hat\vx$ is still available, which also means $o\in\hat\vx$ is not exhausted, which is a contradiction.

    \noindent{\bf Uniqueness:}
    Suppose $Q$ is another \iofa{} assignment by contradiction.
    For some agent $j$, we can find by comparing $P$ and $Q$ that a bundle $\vx$ which satisfies $p_{j,\vx}\neq q_{j,\vx}$ and $p_{j,\hat\vx}=q_{j,\hat\vx}$ for $\hat\vx\succ_j\vx$.
    W.l.o.g. let agent $j$ and $\vx$ just mentioned satisfy that $t=min(\sum_{\vx'\in\ucs(\succ_j,\vx)}{p_{j,\vx'}},\sum_{\vx'\in\ucs(\succ_j,\vx)}{q_{j,\vx'}})$ is the smallest among all the agents and bundles.
    By the selection of $j$ and $\vx$, we have $p_{k,\hat\vx}=q_{k,\hat\vx}$ for $(k,\hat\vx)\in S=\{(k,\hat\vx)|\sum_{\vx'\in(\succ_k,\hat\vx)}{q_{j,\vx'}}\le t\}$.

    We first claim that $p_{j,\vx}> q_{j,\vx}$, i.e. $t=\sum_{\vx'\in\ucs(\succ_j,\vx)}{q_{j,\vx'}}$.
    Otherwise, if $p_{j,\vx}< q_{j,\vx}$, then agent $j$ get more shares of $\vx$ in $Q$ and therefore demands more supply of item contained in $\vx$.
    Since the assumption also means $t=\sum_{\vx'\in\ucs(\succ_j,\vx)}{p_{j,\vx'}}$, we know that in $P$ there exists an item $o\in\vx$ exhausted at $t$ which makes $\vx$ unavailable, and the supply of $o$ is also used up in $Q$ for agents and bundles in $S$.
    Therefore the extra demand of $o$ for agent $j$ on bundle $\vx$ comes from agent $k$ on bundle $\hat\vx$ containing $o$ such that $(k,\hat\vx)\in S$, which means $q_{k,\hat\vx}<p_{k,\hat\vx}$, a contradiction.

    Since $p_{j,\vx}> q_{j,\vx}$ i.e. $t=\sum_{\vx'\in\ucs(\succ_j,\vx)}{q_{j,\vx'}}$ now, which means agent $j$ gives up some shares of $\vx$ in $Q$, and thus for every $o\in\vx$ we can find out some agent $k$ and $\hat\vx$ containing $o$ such that $k$ gains more shares of $\hat\vx$ in $Q$, i.e. $q_{k,\hat\vx}>p_{k,\hat\vx}\ge 0$.
    By the definition of $t$, we have $\sum_{\vx'\in(\succ_k,\hat\vx)}{q_{j,\vx'}}> t=\sum_{\vx'\in(\succ_j,\vx)}{q_{j,\vx'}}$.
    Therefore every $o\in\vx$ does not meet the condition of \iof{}, which is contradictory to the assumption that $Q\neq P$ is \iofa{}.
    \qed
\end{proof}

\begin{remark}\label{rm:mpsnsp}
    \mps{} does not satisfy \sdsp{}.
\end{remark}
\begin{proof}
    Consider an \mtap{} with two agents where $\succ_1,\succ_2$ are as below:
    \begin{center}
        \begin{tabular}{|c|c|}
            \hline Agent & Preferences \\\hline
            1 & $1_F2_B\succ_11_F1_B\succ_12_F1_B\succ_12_F2_B$ \\
            2 & $1_F1_B\succ_22_F1_B\succ_22_F2_B\succ_21_F2_B$ \\\hline
        \end{tabular}
    \end{center}
    \mps{} outputs $P$ for this preference profile.
    If agent $1$ misreports $\succ_1$ as $\succ_1':2_F1_B\succ_11_F1_B\succ_11_F2_B\succ_12_F2_B$, then \mps{} outputs $P'$.
    Both $P$ and $P'$ are shown as below:

    \vspace{1em}\noindent
    \begin{minipage}{\linewidth}
	\begin{minipage}{0.4\linewidth}
    \begin{center}
        \centering
        \begin{tabular}{|c|cccc|}
            \hline\multirow{2}{*}{Agent} & \multicolumn{4}{c|}{$P$}\\\cline{2-5}
             & $1_F1_B$ & $1_F2_B$ & $2_F1_B$ & $2_F2_B$ \\\hline
            1 &  0    & 0.5   & 0.25     & 0.25 \\
            2 &  0.5    & 0   & 0.25     & 0.25 \\\hline
        \end{tabular}
    \end{center}
	\end{minipage}
	\hspace{0.1\linewidth}
	\begin{minipage}{0.4\linewidth}
    \begin{center}
        \centering
        \begin{tabular}{|c|cccc|}
            \hline\multirow{2}{*}{Agent} & \multicolumn{4}{c|}{$P'$}\\\cline{2-5}
             & $1_F1_B$ & $1_F2_B$ & $2_F1_B$ & $2_F2_B$ \\\hline
            1 &  0    & 0.5   & 0.5     & 0 \\
            2 &  0.5    & 0   & 0    & 0.5 \\\hline
        \end{tabular}
    \end{center}
	\end{minipage}
	\end{minipage}\vspace{1em}


   We have that $P' \sd{1} P$ and $P' \neq P$ by shifting agent $1$'s share from $2_F2_B$ to $2_F1_B$, which violates the requirement of \sdsp{} i.e. $P'_1 = P_1$ if $P' \sd{1} P$.\qed
\end{proof}

\section{Conclusion and Future Work}
\label{Conclusion}
In the paper, we first point out that it is impossible to design \sdopta{} and \sdefa{} mechanisms for \mtaps{} with indivisible items.
However, under the natural assumption that agents' preferences are lexicographic, we propose \lps{} as a mechanism which can deal with indivisible items while satisfying the desired efficiency and fairness properties of \sdopt{} and \sdef{}.

For divisible items, we show that \mps{} satisfies the stronger efficiency notion of \lexopt{} in addition to \sdef{} under the unrestricted domain of linear preferences, and is \sdspa{} under lexicographic preferences, which complement the results in~\cite{Wang19:Multi}.
We also provide two separate characterizations for \mps{} by \leximin{}{} and \iof{}. We also show that every assignment that satisfies \ngc{}, a sufficient condition for \sdopt{}, can be computed by an eating algorithm.

Characterizing the domain of preferences under which it is possible to design mechanisms for \mtaps{} with indivisible items that are simultaneously fair, efficient, and strategyproof is an exciting topic for future research. 
For divisible items, characterizations of mechanisms for \mtaps{} satisfying \sdopt{} and \sdef{}, in addition to other combinations of desirable properties, is an interesting topic for future work.
In addition, developing efficient and fair mechanisms for natural extensions of the \mtap{} problem such as settings where there are demands for multiple units of each type, or initial endowments is also an exciting new avenue for future research.




\appendix
\section{Appendix}
\subsection{\textbf{Proof of \Cref{rm:tightimp}}}
\label{sec:rm:tightimp}
    We use the \mtap{} with the following LP-tree preference profile $R$ as a tool to prove the tighter result.
    We also show $\succ_1$ by a LP-tree.

    \vspace{1em}\noindent
    \begin{minipage}{\linewidth}
	\begin{minipage}{0.4\linewidth}
    \begin{center}
        \begin{tabular}{|c|c|}
            \hline Agent & Preferences \\\hline
            1 & $1_F1_B\succ_11_F2_B\succ_12_F2_B\succ_12_F1_B$ \\
            2 & $1_F2_B\succ_21_F1_B\succ_22_F1_B\succ_22_F2_B$ \\\hline
        \end{tabular}
    \end{center}
	\end{minipage}
	\hspace{0.1\linewidth}
	\begin{minipage}{0.4\linewidth}
        \includegraphics[width=\linewidth]{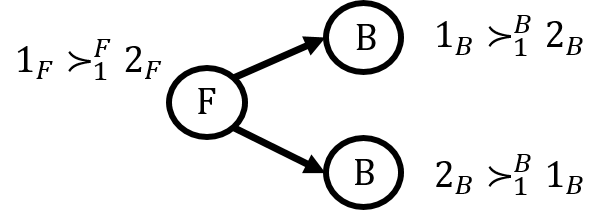}
	\end{minipage}
	\end{minipage}\vspace{1em}

    Suppose that $f$ satisfies \sdwopt{} and \sdwef{}, and let $Q=f(R)$.
    As in the proof of \Cref{thm:imp}, we show that such an \deca{} assignment $Q$ in the following does not exist.
    \begin{center}
        \centering
        \begin{tabular}{|c|cccc|}
            \hline\multirow{2}{*}{Agent} & \multicolumn{4}{c|}{$Q$}\\\cline{2-5}
             & $1_F1_B$ & $1_F2_B$ & $2_F1_B$ & $2_F2_B$ \\\hline
            1 &  $v$    & $w$   & $y$     & $z$ \\
            2 &  $z$    & $y$   & $w$     & $v$ \\\hline
        \end{tabular}
    \end{center}

    From agent $1$'s preference, we observe that she does not get shares of $1_F2_B$ and $2_F1_B$ simultaneously, because that means she can mix up the items in them to get $1_F1_B$ that better than $1_F2_B$ and $2_F2_B$ that better than $2_F1_B$.
    With the similar reason, agent $2$ does not get shares of $1_F1_B$ and $2_F2_B$ simultaneously.
    This means that in the assignment $Q$, there are at least two variants are $0$: either $v$ or $z$ is $0$, and either $y$ or $w$ is $0$.

    We first exclude the case that $v,w,y,z$ are all $0$ for $v+w+y+z=1$, and consider the cases that there are three variants are $0$, which means $Q$ is a discrete assignment. The assignments that assign any agent with her least preferable bundle, i.e. $2_F1_B$ to agent $1$ or $2_F2_B$ to agent $2$, are excluded because it violates \sdwef{}.
    The other possible cases are:
    If $Q$ assigns $1_F2_B$ to agent $1$ and $2_F1_B$ to agent $2$, then agent $2$ envies agent $1$ due to $1_F2_B\succ_22_F1_B$;
    If $Q$ assigns $2_F2_B$ to agent $1$ and $1_F1_B$ to agent $2$, then agent $1$ envies agent $2$ due to $1_F1_B\succ_12_F2_B$.
    Both cases violate \sdwef{}.

    Therefore the only possibility is that two variants are $0$, and we list all the possible combination of variants as following and explain in brief why they fail to meet the requirements:

    ($v\neq0, y\neq0$): There exists a generalized cycle ${(2_F2_B,2_F1_B),(2_F1_B,2_F2_B)}$;

    ($v\neq0, w\neq0$): $Q_1\neq Q_2$ and $Q_1\sd{2} Q_2$;

    ($y\neq0, z\neq0$): $Q_2\neq Q_1$ and $Q_2\sd{1} Q_1$;

    ($w\neq0, z\neq0$): There exists a generalized cycle ${(1_F1_B,1_F2_B),(1_F2_B,1_F1_B)}$.

    By (1) in \Cref{prop:gc}, the existence of generalized cycle means violating \sdopt{}, i.e. \sdwopt{} in this problem since there are only two agents.
    Therefore we can conclude that such a mechanism $f$ does not exist.\qed

\subsection{\textbf{Proof of (1) in \Cref{prop:gc}}}
\label{sec:prop:gc}
\begin{proof}
	The proof involves showing that every \ram{} assignment which is not \sdopta{} admits a generalized cycle. Let $P$ be such as \ram{} assignment for a given \mtap{}. Then, there exists another \ram{} assignment $Q\neq P$ such that $Q \sd{\null} P$.
	We show that the set of tuples which shows the difference between $P$ and $Q$ is a generalized cycle on $P$.
	
	Let $\hat N=\{j \leq n:P_j\neq Q_j\}\subseteq N$. By definition of \sdopt, for every $j\in \hat N$, $Q\sd{j}P$.
	Let $C$ be the set of all tuples $(\vx,\hat \vx)$ such that $\vx\succ_j \hat \vx$ and $q_{j,{\vx}}>p_{j,{\vx}}, q_{j,\hat \vx}<p_{j,\hat \vx}$ for some $j \in \hat N$. Therefore at a high level, from $C$ we can learn all the difference of shares for bundles in allocations of agents in $P$ and $Q$. First we prove the following claim:
	\begin{clm}\label{cl:gc}
		For any agent $j \in \hat N$, there exists a bundle with a greater share in $Q_j$ than in $P_j$, and this bundle is ranked above all bundles with a smaller share.
	\end{clm}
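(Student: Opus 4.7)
The plan is to exploit the fact that $Q\sd{j}P$ imposes the inequality $\sum_{\hat\vx\in\ucs(\succ_j,\vx)}q_{j,\hat\vx}\ge\sum_{\hat\vx\in\ucs(\succ_j,\vx)}p_{j,\hat\vx}$ at every bundle $\vx\in\md$, together with $Q_j\neq P_j$, to locate a specific bundle where the share strictly increases in $Q$, namely the $\succ_j$-maximal bundle at which $P_j$ and $Q_j$ disagree. This bundle will serve as the witness required by the claim.

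First, I would fix $j\in\hat N$ and consider the set $\Delta_j=\{\vx\in\md:p_{j,\vx}\neq q_{j,\vx}\}$, which is nonempty because $P_j\neq Q_j$. Since $\succ_j$ is a strict linear order over the finite set $\md$, there is a unique $\succ_j$-maximal element $\vx^*$ of $\Delta_j$. By maximality, every bundle $\hat\vx\succ_j\vx^*$ satisfies $p_{j,\hat\vx}=q_{j,\hat\vx}$, so
\[
\sum_{\hat\vx\succ_j\vx^*}q_{j,\hat\vx}=\sum_{\hat\vx\succ_j\vx^*}p_{j,\hat\vx}.
\]
Combining this identity with the stochastic dominance inequality applied at $\vx^*$, namely $\sum_{\hat\vx\in\ucs(\succ_j,\vx^*)}q_{j,\hat\vx}\ge\sum_{\hat\vx\in\ucs(\succ_j,\vx^*)}p_{j,\hat\vx}$, I obtain $q_{j,\vx^*}\ge p_{j,\vx^*}$. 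Since $\vx^*\in\Delta_j$, the inequality is strict: $q_{j,\vx^*}>p_{j,\vx^*}$.

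Finally, I would verify the second part of the claim: $\vx^*$ is ranked above every bundle with a smaller share in $Q_j$. Any bundle $\hat\vx$ with $q_{j,\hat\vx}<p_{j,\hat\vx}$ lies in $\Delta_j$ and hence satisfies $\vx^*\succeq_j\hat\vx$ by the maximality of $\vx^*$; the equality case is ruled out because $q_{j,\vx^*}>p_{j,\vx^*}$, so $\vx^*\succ_j\hat\vx$. I do not anticipate any real obstacle here: the argument is a direct unfolding of stochastic dominance restricted to the top of the disagreement set, and the only thing to be careful about is making the maximality step and the strictness of $q_{j,\vx^*}>p_{j,\vx^*}$ rigorous, rather than arguing in words.
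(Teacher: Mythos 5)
Your proof is correct and uses essentially the same idea as the paper: both arguments reduce to observing that at the $\succ_j$-maximal bundle where $P_j$ and $Q_j$ disagree, the stochastic-dominance inequality over the upper contour set forces the share to have strictly increased. The paper phrases this contrapositively (a decreased-share bundle ranked above all increased-share bundles would violate $Q\sd{j}P$), while you argue directly via the maximal element of the disagreement set, but the content is identical.
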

		In fact, we can prove the claim by supposing for the sake of contradiction that there is a bundle $\hat\vx$ with a smaller share in $Q$ which is ranked above all bundles with a greater share, i.e. for every ${\vx} \succ_j {\hat\vx}$, $p_{j,\hat\vx}> q_{j,\hat\vx}$ and $p_{j,\vx}\geq q_{j,\vx}$. This implies that $\sum_{\vx \in \ucs(\succ_j,\hat\vx)}p_{j,\vx}>\sum_{\vx \in \ucs(\succ_j,\hat\vx)}q_{j,\vx}$, which is a contradiction to our assumption that $Q \sd{\null} P$.
	
    Then We show that $C$ is not empty.
    If $q_{j,\vx}\geq p_{j,\vx},j\in\hat N$ for every $\vx\in\md$, then $P_j=Q_j$, a contradiction to our initial assumption.
    Thus there exists $\hat\vx$ which satisfies $q_{j,\hat\vx}<p_{j,\hat\vx}$, which means there is a tuple $(\vx,\hat\vx) \in C$ according to {\bf\Cref{cl:gc}}.
	Therefore, $C\neq \emptyset$ since $P\neq Q$.
	Since $p_{j,\hat\vx}>q_{j,\hat\vx}\ge 0$, we have that $(\vx,\hat\vx)\in\gcycle(P)$, which implies that $C\subseteq\gcycle(P)$.
	
	Suppose for sake of contradiction that $C$ is not a generalized cycle.
	W.l.o.g. let $o$ be an item such that for any bundle $\vx$ containing $o$, $\vx$ is always the left component of any improvable tuple involving $\vx$ in $C$.
    If so, we have that:
    \begin{enumerate}[label=(\roman*),itemindent=2em,topsep=0em]
        \item for $j \notin \hat N$, $q_{j,\vx}= p_{j,\vx}$ trivially by the assumption,
        \item for $j \in \hat N$, $q_{j,\vx}\geq p_{j,\vx}$.
        Because if $q_{j,\vx_0}< p_{j,\vx_0}$  for some $\vx_0$ containing $o$, then by {\bf\Cref{cl:gc}} there will be a tuple $({\hat\vx_0},{\vx_0}) \in \gcycle(P)$ where $\vx_0$ is the right component and $\vx_0$ contains $o$, which is a contradiction.
        Specifically, for $(\vx,\hat\vx)\in C$ where $o \in \vx$, $q_{j,\vx}>p_{j,\vx}$ exists for some $j\in\hat N$ because otherwise $q_{j,\vx}=p_{j,\vx}$ and therefore $\vx$ is not in any tuple in $C$.
    \end{enumerate}
	Therefore we have $\sum_{j \leq n,o \in \vx}q_{j,{\vx}}>\sum_{j \leq n,o \in \vx}p_{j,\vx}=1$, a contradiction to our assumption that $Q$ is a \ram{} assignment.
	That means $C$ satisfies one of the requirements of generalized cycle: If an tuple $(\vx_1,\hat\vx_1)\in C$ satisfies that $o\in \vx_1$, then there exists a tuple $(\vx_2,\hat\vx_2)\in C$ such that $o\in \hat\vx_2$.
	Thus $C$ is a generalized cycle on $P$.\qed
\end{proof}

\subsection{\textbf{Proof of \Cref{prop:mps}}}
\label{sec:prop:mps}
\begin{proof}
	We begin our proof with any \mtap{}$(N,M)$ with an arbitrary preference $R$. Throughout, we use $P$ to refer to \mps{}(R).
    We first give a few observations about \mps{} which are helpful for understanding the following proof.
    We know that \mps{} executes multiple rounds which end when some items are exhausted and we label all the rounds by time.
    Let $j$ be an arbitrary agent.
    Here we consider the round $r_{k_m} \in \{r_{k_1},r_{k_2},\dots\}$ where agent $j$ stop consuming a bundle.
    We note that these rounds are not necessarily continuous because agent $j$ may not change her current most preferred bundles for every round.
    Without loss of generality, let $r_{k_m}<r_{k_{\hat m}}$ if $m<\hat m$.

    Let $\vx_{k_m}$ denote the bundle consumed by $j$ at round $r_{k_m}$ of \mps{}. Let $t_{k_0}=0$, and for any round $r_{k_m}$, let $t_{k_m}$ be the units of time elapsed from the start of the mechanism till the end of round $r_{k_m}$. Then, by construction of \mps{}, we have that for any round $r_{k_m},m>1$, $p_{j,\vx_{k_m}}=t_{k_m}-t_{k_{m-1}}$.
    Specially, when $m=1$, $p_{j,\vx_{k_1}}=t_{k_1}-t_{k_0}$ trivially. This implies that

    \begin{equation}\label{cm}
    t_{k_m}=t_{k_m}-t_{k_0}=\sum_{\hat m=1}^{m}(t_{k_{\hat m}}-t_{k_{\hat m-1}}) = \sum_{\vx \in \ucs(\succ_j,\vx_{k_m})} p_{j,\vx}.
    \end{equation}

    For any round $r_{k_m}$, and any $\hat \vx$ such that $\vx_{k_m} \succ_j \hat \vx \succ_j \vx_{k_{m+1}}$, $\hat \vx$ is not consumed by $j$ i.e. $p_{j,\hat\vx}=0$. Therefore it must hold that $\hat \vx$ is unavailable by the end of round $r_{k_m}$. Let $\hat t$ denote the time at which $\hat \vx$ becomes unavailable. Then,

    \begin{equation}\label{cm2}
    \hat t \leq t_{k_m} =\sum_{\vx \in \ucs(\succ_j,\vx_{k_m})}p_{j,\vx}= \sum_{\vx \in \ucs(\succ_j,\hat \vx)}p_{j,\vx}.
    \end{equation}
	Suppose for the sake of contradiction that there is a pair of agents $j,\hat j$ such that $j$ envies $\hat j$ at $P$. Then, $P_j\nsd{j}P_{\hat j}$ and there exists a bundle $\hat\vx$ which satisfies $\sum_{{\vx}\in \ucs(\succ_j,\hat\vx)}{p_{\hat j,\vx}} > \sum_{\vx\in \ucs(\succ_j,\hat\vx)}{p_{j,\vx}}$.
	
	Let $t=\sum_{{\vx}\in \ucs(\succ_j,{\hat\vx})}{p_{j,\vx}}$ and $\hat t=\sum_{{\vx} \in \ucs{(\succ_j,\hat\vx)}}{p_{\hat j,\vx}}$. It is easy to see that it must hold that $t< \hat t$. The rest of the proof involves showing that due to the construction of \mps{}, $\hat t\le t$, contradicting our assumption.
	
	Now, let $\vx'$ be the bundle least preferred by $\hat j$ while satisfying $\vx'\in\ucs(\succ_j,\hat\vx)$ and $p_{\hat j,\vx'}>0$. Such a $\vx'$ must exist. Otherwise, $p_{\hat j,\vx}=0$ for every $\vx \in \ucs(\succ_j,\hat\vx)$ which implies that $\hat t=0 \le t$, a contradiction.
	
	Let $t'$ be the time at which $\vx'$ becomes unavailable.
	Due to ${\ucs(\succ_j,\vx') \subseteq \ucs(\succ_j,\hat\vx)}$, we can deduce that $t' \leq \sum_{\vx \in \ucs(\succ_j,\vx')}p_{j,\vx}\leq\sum_{\vx \in \ucs(\succ_j,\hat\vx)}p_{j,\vx} = t$ by (\ref{cm2}) .
	Also, we have $t'=\sum_{\vx \in \ucs(\succ_{\hat j},\vx')}p_{\hat j,\vx}$ by (\ref{cm}).
	By the selection of $\vx'$, we have the following set relation: $\ucs(\succ_j,\hat\vx)\setminus~\{\vx:p_{\hat j,\vx}=~0\}\subseteq~{\ucs(\succ_{\hat j},\vx')}$.	
	Therefore we can deduce that
	\begin{displaymath}
	\hat t =\sum_{\vx \in \ucs(\succ_j,\hat\vx)\setminus \{\vx:p_{\hat j,\vx}=0\}}p_{\hat j,\vx} \leq \sum_{\vx \in \ucs(\succ_{\hat j},\vx')}p_{\hat j,\vx}=t'.
	\end{displaymath}
	This implies $\hat t \leq t' \leq t$, a contradiction.\qed
\end{proof}

\subsection{\textbf{Full Proof of ($\Rightarrow$) part in \Cref{thm:eps}}}\label{sec:thm:eps}
\begin{proof}


    $(\Rightarrow)$ Let $R$ be an arbitrary preference profile, and let $P$ be an arbitrary assignment satisfying \ngc{} w.r.t. $R$. For convenience, we define some quantities to represent the state during the execution of a member of the family of eating algorithms at each round $s$. For ease of exposition, we use $s=0$ to represent the initial state before the start of execution. Let $M^0 = M$, $\md^0 = \md$. Let $B^s = \{ o \in M^{s-1} : \text{there are no } \vx,\allowbreak \hat \vx \in \md^{s-1} \text{ s.t. } o \in \hat \vx \text{ and } (\vx, \hat \vx) \in \gcycle(P) \}$, $M^s = M^{s-1} - B^s$ and $\md^s = \{ \vx \in \md : \text{for every } o \in \vx, o \in M^s \}$ be the available bundles in $M^s$. We note that for any $M^{s-1}\neq \emptyset$, $B^s \neq \emptyset$. Otherwise, for any $o \in M^{s-1}$, there exists $\vx, \hat \vx \in \md^{s-1}$ s.t. $o \in \hat \vx$ and $(\vx, \hat \vx) \in \gcycle(P)$ which implies $\gcycle(P)$ admits a generalized cycle, a contradiction. Hence, $M^s = \emptyset$ for some $s\le np$. Let $S = \min\{s : M^s = \emptyset\}$.
    Let $\omega_j(t)$ be the eating rate of agent $j$ at time $t$, $N(\vx,\md^s)$ be the set of agents who prefer $\vx$ best in the available bundles $\md^s$ for any $\vx \in \md^s$. The following eating speed functions $\omega_j$ define the algorithm:
    \[
    \forall s \leq S, \frac{s-1}{S} \leq t \leq \frac{s}{S}, \omega_j(t)\overset{\text{def}}{=}
    \left\{
    \begin{aligned}
     S \times p_{j,\vx}, \qquad & \exists o \in \vx, o \in B^s \text{ and } j \in N(\vx, \md^{s-1}) \\
     0,  \qquad & \text{otherwise.}
    \end{aligned}
    \right.
    \]
    From the design of algorithm, we know that items in $B^s$ decide which bundles in $\md^{s-1}$ are consumed in round $s$, and these items are not consumed after the round $s$.
    Note that $j \in N(\vx, \md^{s-1})$ implies $\vx \in \md^{s-1}$. We claim that the algorithm specified by the eating speed functions $\omega=(\omega_j)_{j\le n}$ above, outputs $P$ for the \mtap{} with preference profile $R$. Let $Q$ be the output of the eating algorithm. We prove $P = Q$ with induction.

    \noindent{\bf Base case.} For any $o \in B^1$, $\vx \in \md$ s.t. $o \in \vx$, we prove $\forall j \in N, p_{j,\vx} = q_{j,\vx}$. For any $o \in B^1$, suppose
    \[
    \sum_{o\in\vx,\vx\in\md^0}\sum_{j\in N(\vx,\md^0)}p_{j,\vx} < \sum_{o\in\vx,\vx\in\md^0}\sum_{j\in N}p_{j,\vx} = 1.
    \]
    Then, there exists $\vx, \hat \vx \in \md^0$ and $\hat j \in N(\vx, \md^0)$ such that $\vx \succ_{\hat j} \hat \vx$, $o \in \hat \vx$ and $p_{\hat j, \hat \vx} > 0$, which implies  that $o$ is not in $B^1$, a contradiction. Hence, we have
    \[
    \sum_{o\in\vx,\vx\in\md^0}\sum_{j\in N(\vx,\md^0)}p_{j,\vx} = \sum_{o\in\vx,\vx\in\md^0}\sum_{j\in N}p_{j,\vx} = 1.
    \]
    That implies for any $o\in B^1$, $\vx \in \md^0$ s.t. $o \in \vx$, if $j$ is not in $N(\vx, \md^0)$, then $p_{j,\vx} = 0$.

    Then for any $o \in B^1$, $\vx \in \md^0$ s.t. $o \in \vx$ and $j \in N(\vx, \md^0)$, we prove that agent $j$ consumes exactly $p_{j,\vx}$ units of bundle $\vx$. We know that $\vx$ is available in $0 \leq t \leq \frac{1}{S}$ and agent $j$ consumes $\vx$ in $0 \leq t \leq \frac{1}{S}$. Therefore, $q_{j,\vx} \geq \frac{1}{S} \times S \times p_{j,\vx} = p_{j, \vx}$. Hence,
    \[
    \sum_{o\in\vx,\vx\in\md^0}\sum_{j\in N(\vx,\md^0)}q_{j,\vx} \geq
    \sum_{o\in\vx,\vx\in\md^0}\sum_{j\in N(\vx,\md^0)}p_{j,\vx} = 1.
    \]
    Since,
    \[
    \sum_{o\in\vx,\vx\in\md^0}\sum_{j\in N(\vx,\md^0)}q_{j,\vx} \leq \sum_{o\in\vx,\vx\in\md}\sum_{j\in N}q_{j,\vx} = 1,
    \]
    we have
    \[
    \sum_{o\in\vx,\vx\in\md^0}\sum_{j\in N(\vx,\md^0)}q_{j,\vx} =
    \sum_{o\in\vx,\vx\in\md^0}\sum_{j\in N(\vx,\md^0)}p_{j,\vx} = 1.
    \]
    That implies for any $o \in B^1$, $\vx \in \md^0$ s.t. $o \in \vx$,
    \begin{enumerate*}[label=(\roman*)]
    \item if $j \in N(\vx, \md^0)$, $p_{j,\vx} = q_{j, \vx}$;
    \item if $j$ is not in $N(\vx, \md^0)$, $q_{j,\vx} = 0$.
    \end{enumerate*}
    Therefore, for any $o \in B^1$, $\vx \in \md$ s.t. $o \in \vx$ and any $j \in N$, $p_{j,\vx} = q_{j,\vx}$.

    \noindent{\bf Inductive step.} Assume for any $o \in \bigcup_{k = 1}^s B^k$, $\vx \in \md$ s.t. $o \in \vx$ and any $j \in N$, \begin{equation}\label{equ:assumegc}
    p_{j,\vx} = q_{j,\vx}.
    \end{equation}
    We prove for any $o \in B^{s+1}$, $\vx \in \md$ s.t. $o \in \vx$ and any $j \in N$, $p_{j,\vx} = q_{j,\vx}$. If $\vx$ is not in $\md^s$, there is $\hat o \in \bigcup_{k = 1}^s B^k$ such that $\hat o \in \vx$ and we have $p_{j,\vx} = q_{j,\vx}$ by assumption (\ref{equ:assumegc}). For any $o \in B^{s+1}$, suppose
    \[
    \sum_{o\in\vx,\vx\in\md^s}\sum_{j\in N(\vx,\md^s)}p_{j,\vx} + \sum_{o\in\vx,\vx \text{ not in }\md^s}\sum_{j \in N}p_{j,\vx} < \sum_{o\in\vx,\vx\in\md}\sum_{j\in N}p_{j,\vx} = 1.
    \]
    Then, there exists $\vx, \hat \vx \in \md^s$ and $\hat j \in N(\vx, \md^s)$ such that $\vx \succ_{\hat j} \hat \vx$, $o \in \hat \vx$ and $p_{\hat j, \hat \vx} > 0$, which implies that $o$ is not in $B^{s+1}$, a contradiction. Hence, we have
    \[
    \sum_{o\in\vx,\vx\in\md^s}\sum_{j\in N(\vx,\md^s)}p_{j,\vx} + \sum_{o\in\vx,\vx\text{ not in }\md^s}\sum_{j \in N}p_{j,\vx} = \sum_{o\in\vx,\vx\in\md}\sum_{j\in N}p_{j,\vx} = 1.
    \]
    That implies for any $o\in B^{s+1}$, $\vx \in \md^s$ s.t. $o \in \vx$, if $j$ is not in $N(\vx, \md^s)$, then $p_{j,\vx} = 0$.

    For any $o \in B^{s+1}$, $\vx \in \md^s$ s.t. $o \in \vx$ and $j \in N(\vx, \md^s)$, we prove that agent $j$ consumes exactly $p_{j,\vx}$ units of bundle $\vx$. By the assumption (\ref{equ:assumegc}), for any $\hat o \in M^s$, $\hat o$ remains $\sum_{\hat o \in \hat \vx, \hat \vx \in \md^s}\sum_{j\in N(\hat \vx, \md^s)}p_{j,\vx}$ at least at time $t = \frac{s}{S}$. Hence, $\vx$ is available in $\frac{s}{S} \leq t \leq \frac{s+1}{S}$ and agent $j$ consumes $\vx$ in $\frac{s}{S} \leq t \leq \frac{s+1}{S}$. Therefore, $q_{j,\vx} \geq \frac{1}{S} \times S \times p_{j,\vx} = p_{j, \vx}$. Hence,
    \begin{equation*}
        \begin{split}
            \sum_{o\in\vx,\vx\in\md^s}\sum_{j\in N(\vx,\md^s)}q_{j,\vx}  + \sum_{o\in\vx,\vx\text{ not in }\md^s}\sum_{j \in N}q_{j,\vx} \geq \\
    \sum_{o\in\vx,\vx\in\md^s}\sum_{j\in N(\vx,\md^s)}p_{j,\vx} + \sum_{o\in\vx,\vx\text{ not in }\md^s}\sum_{j \in N}p_{j,\vx} = 1.
        \end{split}
    \end{equation*}
    Since,
    \[
    \sum_{o\in\vx,\vx\in\md^s}\sum_{j\in N(\vx,\md^s)}q_{j,\vx}  + \sum_{o\in\vx,\vx\text{ not in }\md^s}\sum_{j \in N}q_{j,\vx} \leq \sum_{o\in\vx,\vx\in\md}\sum_{j\in N}q_{j,\vx} = 1,
    \]
    we have
    \[
    \sum_{o\in\vx,\vx\in\md^s}\sum_{j\in N(\vx,\md^s)}q_{j,\vx}  + \sum_{o\in\vx,\vx\text{ not in }\md^s}\sum_{j \in N}q_{j,\vx} = 1,
    \]
    That implies for any $o \in B^{s+1}$, $\vx \in \md^s$ s.t. $o \in \vx$,
    \begin{enumerate*}[label=(\roman*)]
    \item if $j \in N(\vx, \md^s)$, $p_{j,\vx} = q_{j, \vx}$;
    \item if $j$ is not in $N(\vx, \md^s)$, $q_{j,\vx} = 0$.
    \end{enumerate*}
    Therefore, for any $o \in B^{s+1}$, $\vx \in \md^s$ s.t. $o \in \vx$ and any $j \in N$, $p_{j,\vx} = q_{j,\vx}$.

    By induction, we have for any $o \in \bigcup_{k = 1}^S B^k$, $\vx \in \md$ s.t. $o \in \vx$ and any $j \in N$, $p_{j,\vx} = q_{j,\vx}$. That is to say, for any $\vx \in \md$ and $j \in N$, $p_{j,\vx} = q_{j,\vx}$.\qed
\end{proof}

\subsection{\textbf{Proof of \Cref{cl:mpstops} in \Cref{thm:mpssp}}}\label{sec:cl:mpstops}
\begin{proof}
	We know that each agent does the following things repeatedly in PS: consuming her most preferred and unexhausted item till it is exhausted.
	To prove the claim, we show that agents in \mps{} executes the same in each type as they does in PS, i.e. for any $i\le p,j\le n$, agent $j$ still consumes her most preferred and unexhausted item in type $i$ while consuming her most preferred and available bundle.
	In the following discussion, we use $\bundlep{\vx}{o}{\hat o}$ to denote the bundle replacing $o$ with $\hat o$ in $\vx$.
	
	W.l.o.g. we discuss the behavior of agent $j$ in \mps{} for type $i$.
	At the beginning of \mps{}, the bundle consumed by agent $j$, denoted by $\vx_1$, is her most preferred bundle regarding $\succ_j$.
	By \Cref{dfn:lex}, the most preferred bundle contains agent $j$'s most preferred item in type $i$ regarding $\prefintype{j}{i}$.
	That means when agent $j$ consumes $\vx_1$, she also consumes her most preferred item $o_1=\type{i}{\vx_1}$.
	When the consumption of $\vx_1$ pauses, agent $j$ turns to the bundle consumed after $\vx_1$, denoted by $\vx_2$.
	We note that $\vx_2$ does not need to be the second preferred regarding $\succ_j$.
	There are two kinds of cases before consuming $\vx_2$:
	\begin{enumerate*}[label=(\roman*)]
		\item $\type{i_0}{\vx_1}$ is exhausted, $i_0 \neq i$,
		\item $o_1$ is exhausted.
	\end{enumerate*}
	
	We claim that $\type{i}{\vx_2}$ is the most preferred and unexhausted item in type $i$ just after $\vx_1$ is unavailable.
	For Case~(\romannumeral1), the agent $j$'s most preferred item in type $i$ is still $o_1$.
	We claim that $o_1 \in \vx_2$. Otherwise, suppose that $o_2=\type{i}{\vx_2} \neq o_1$.
	Then $o_1 \prefintype{j}{i} o_2$.
	Because $\type{i_0}{(o_1,x_2\backslash o_2)} = \type{i_0}{\vx_2}$ for $i_0 \impord{j} i$ and $o_1=\type{i}{(o_1,x\backslash o_2)} \prefintype{j}{i} \type{i}{\vx_2}=o_2$, it follows that $\bundlep{\vx_2}{o_2}{o_1} \succ_j \vx_2$, which is a contradiction to the fact that $\vx_2$ is the next preferred bundle for agent $j$.
	For Case~(\romannumeral2), let $M_1$ be the set of unexhausted items just after $\vx_1$ is unavailable in \mps{}, and $o_2\in M_1\bigcap D_i$ be agent $j$'s most preferred and unexhausted item in $D_i$ regarding $\prefintype{j}{i}$.
	We also note that $o_2$ does not need to be the second preferred item regarding $\prefintype{j}{i}$.
	By \Cref{alg:mps}, we know that $o_2=\type{i}{\vx_2}$, and we can obtain that ${\vx_2}\succ_j\bundlep{{\vx_2}}{o_2}{o'_2}$ if $o'_2 \in M_1\bigcap D_i$ and $o'_2\neq o_2$.
	
	The claim in the previous paragraph can be applied to the general case when $\vx_k$ is unavailable and turns to $\vx_{k+1}$, and we have that $\type{i}{\vx_{k+1}}$ is always the most preferred and unexhausted item in type $i$ just after $\vx_k$ is unavailable.
	We note again that $\vx_k$ does not need to be the $k$th preferred regarding $\succ_j$, and  $\vx_{k+1}$ only refers to the bundle consumed after $\vx_k$.
	From the result of proof, we can observe that agent $j$ consumes the bundle containing the most preferred and unexhausted item in type $i$.
	That is exactly what agent $j$ does in PS of type $i$.
	This result can be extended to any agent and any type.
	Thus each single type \ram{} assignment of $P$ is the same as the one produced by PS of that type.
\end{proof}
\subsection{\textbf{Proof of \Cref{cl:eqwtype} in \Cref{thm:mpssp}}}\label{sec:cl:eqwtype}
\begin{proof}
	
	When $\hat i\le i'=1$ i.e. $\srd{P}{1}_j=\srd{Q}{1}_j$, the claim means for $S_{1}=\{(\itemwt{1}*)|\itemwt{1}\in\ucs(\prefintype{j}{1},\hitemwt{1})\}$,
	\begin{equation*}
	    \sum_{\vx\in S_{1}}{p_{j,{\vx}}}=\sum_{o\in\ucs(\prefintype{j}{1},\hitemwt{1})}{\srd{p}{1}_{j,o}}=\sum_{o\in\ucs(\prefintype{j}{1},\hitemwt{1})}{\srd{q}{1}_{j,o}}=\sum_{\vx\in S_{1}}{q_{j,{\vx}}},
	\end{equation*}
	which is trivially true.

	Given $\srd{P}{\hat i}=\srd{Q}{\hat i}$ for $\hat i\le i'$ and the claim is true for $\hat i\le i'-1$, which also means
	\begin{equation}\label{eq3:cl:eqwtype}
	    \sum_{\vx\in\{(\hitemwt{1},\dots,\hitemwt{\hat i -1},\itemwt{\hat i}*)|\itemwt{\hat i}\prefintype{j}{\hat i}\hitemwt{\hat i}\}}p_{j,\vx}=
	    \sum_{\vx\in\{(\hitemwt{1},\dots,\hitemwt{\hat i -1},\itemwt{\hat i}*)|\itemwt{\hat i}\prefintype{j}{\hat i}\hitemwt{\hat i}\}}q_{j,\vx},
	\end{equation}
	we prove that it is also true for $i'$.
	Let $\hat\vx$ denote the least preferred bundle containing $\hitemwt{1},\dots,\hitemwt{i'}$.
	Since $\srd{P}{\hat i}_j=\srd{Q}{\hat i}_j$ for $\hat i\le i'$, we have that $\sum_{\vx\in\ucs(\succ_j,\hat\vx)}p_{j,\vx}\ge\sum_{\vx\in\ucs(\succ_j,\hat\vx)}q_{j,\vx}$.
	$min_{\hitemwt{i}\in\vx,i\le i'}\sum_{o\in\ucs(\prefintype{j}{i},\hitemwt{i})}\srd{p}{i}_o$
	That is because agent $j$ consumes bundles in the set $\ucs(\succ_j,\hat\vx)$ until they are unavailable in $P$, and therefore she is unable to get more on this set given the same shares of items.
	Because $Q\sd{j}P$, we have $\sum_{\vx\in\ucs(\succ_j,\hat\vx)}p_{j,\vx}=\sum_{\vx\in\ucs(\succ_j,\hat\vx)}q_{j,\vx}$.
	We also have that $\sum_{\vx\in\ucs(\succ_j,\hat\vx)}p_{j,\vx}=\sum_{\vx\in\ucs(\succ_j,\hat\vx)}q_{j,\vx}$, which is equal to
	
	\begin{equation}\label{eq2:cl:eqwtype}
	\begin{split}
	    &\sum_{\vx\in\{(\itemwt{1}*)|\itemwt{1}\prefintype{j}{1}\hitemwt{1}\}}p_{j,\vx}+\sum_{\vx\in\{(\hitemwt{1},\itemwt{2}*)|\itemwt{2}\prefintype{j}{2}\hitemwt{2}\}}p_{j,\vx}+\dots+\sum_{\vx\in\{(\hitemwt{1},\hitemwt{2},\itemwt{i-1}*)|\itemwt{i-1}\prefintype{j}{i-1}\hitemwt{i-1}\}}q_{j,\vx}+\sum_{\vx\in S_{i'}}p_{j,\vx}\\
	    =&\sum_{\vx\in\{(\itemwt{1}*)|\itemwt{1}\prefintype{j}{1}\hitemwt{1}\}}q_{j,\vx}+\sum_{\vx\in\{(\hitemwt{1},\itemwt{2}*)|\itemwt{2}\prefintype{j}{2}\hitemwt{2}\}}q_{j,\vx}+\dots+\sum_{\vx\in\{(\hitemwt{1},\hitemwt{2},\itemwt{i-1}*)|\itemwt{i-1}\prefintype{j}{i-1}\hitemwt{i-1}\}}q_{j,\vx}+\sum_{\vx\in S_{i'}}q_{j,\vx}.
	\end{split}
	\end{equation}

    Here $S_{i'}=\{(\hitemwt{1},\dots,\hitemwt{i'-1}\itemwt{i'}*)|\itemwt{i'}\in\ucs(\prefintype{j}{i'},\hitemwt{i'})\}$.
	Therefore by subtracting (\ref{eq3:cl:eqwtype}) for all $\hat i<i'$ from (\ref{eq2:cl:eqwtype}) we have that $\sum_{\vx\in S_{i'}}p_{j,\vx}=\sum_{\vx\in S_{i'}}q_{j,\vx}$.\qed
\end{proof}

%
%

\bibliography{Manuscript}
\bibliographystyle{spbasic}      


\end{document}